\documentclass[acmsmall,screen,nonacm]{acmart}

\setcopyright{none}
\usepackage{amsmath,amsthm}
\usepackage{graphicx}
\usepackage{booktabs}
\usepackage{array}
\usepackage{longtable}
\usepackage{makecell}
\usepackage[ruled,vlined,linesnumbered]{algorithm2e}
\usepackage{placeins}

\newtheorem{theorem}{Theorem}
\newtheorem{lemma}[theorem]{Lemma}
\newtheorem{proposition}[theorem]{Proposition}

\theoremstyle{definition}

\newtheorem{conjecture}[theorem]{Conjecture}

\theoremstyle{remark}

\newtheorem*{formalmaintheorem}
  {Theorem~\ref{thm:mainbarrier} (Main Result, Precise Version)}
\newtheorem*{formalbarrier1}
  {Theorem~\ref{thm:barrier1} (Barrier 1, Precise Version)}
\newtheorem*{formalbarrier2}
  {Theorem~\ref{thm:barrier2} (Barrier 2, Precise Version)}

\newcommand{\fullalgo}{XYZ-Sketch}
\newcommand{\codeRef}{https://github.com/djwj233/XYZ-Sketch}

\title{Toward Optimal Time-Space Tradeoffs for Set Reconciliation}

\author{Rui Xu}
\email{rxu25@stu.pku.edu.cn}
\affiliation{
  \institution{Peking University}
  \city{Beijing}
  \country{China}
}

\author{Kangyang Zhou}
\email{zky233@mit.edu}
\affiliation{
  \institution{Massachusetts Institute of Technology}
  \city{Cambridge}
  \state{Massachusetts}
  \country{USA}
}

\author{Jiachen Xu}
\email{jucasonxu@gmail.com}
\affiliation{
  \institution{Peking University}
  \city{Beijing}
  \country{China}
}

\author{Jiarui Guo}
\email{ntguojiarui@pku.edu.cn}
\affiliation{
  \institution{Peking University}
  \city{Beijing}
  \country{China}
}

\author{Boyu Xian}
\email{loverintime2021@gmail.com}
\affiliation{
  \institution{Peking University}
  \city{Beijing}
  \country{China}
}

\author{Kaicheng Yang}
\email{ykc@pku.edu.cn}
\affiliation{
  \institution{Peking University}
  \city{Beijing}
  \country{China}
}

\author{Tong Yang}
\authornote{Corresponding author.}
\email{yangtong@pku.edu.cn}
\affiliation{
  \institution{Peking University}
  \city{Beijing}
  \country{China}
}

\author{Yong Cui}
\email{cuiyong@tsinghua.edu.cn}
\affiliation{
  \institution{Tsinghua University}
  \city{Beijing}
  \country{China}
}

\keywords{
  set reconciliation,
  streaming sketches,
  peeling,
  random hypergraphs
}

\begin{document}

\begin{abstract}
Set reconciliation, where two parties each holding a large set of elements aim to identify their set difference, is a fundamental task in many areas.
There are two important metrics in this problem: time (computation cost) and space (communication cost).
Most previous work focuses on optimizing one metric at the expense of the other.
We present \textit{{\fullalgo}}, proving that it is possible to achieve near-minimal space and $O(1)$ time updates simultaneously.
Specifically, for sufficiently large $d$,  {\fullalgo} reconciles sets with only $(1+\varepsilon)d$ elements for communication, while achieving $O(1)$ insertion time and $O(d\log V)$ decoding time.
Here, $d$ and $V$ denote the size of the difference between two sets and the universe size, respectively.
We further establish a broad fixed-support canonical model for the problem, showing that, under an open extremality conjecture, our {\fullalgo} is asymptotically optimal within this model.
Experiments validate the predicted near-optimal performance of {\fullalgo}.
The source code of {\fullalgo} is available at \href{\codeRef}{\codeRef}.
\end{abstract}

\maketitle

\section{Introduction}

\subsection{The Set Reconciliation Problem}

The \emph{set reconciliation} problem asks two parties, Alice and Bob, holding sets
$A,B\subseteq \mathcal U$, to recover
$A\setminus B$ and $B\setminus A$ while communicating as little information as possible.
It is a fundamental problem for maintaining consistency among distributed replicas and databases \cite{DifferenceDigest-EppsteinGoodrichUyedaVarghese-2011,RobustSetReconciliation-ChenKonradYiYuZhang-2014,MultiPartySetReconciliation-MitzenmacherPagh-2018}, disseminating blocks and transaction sets in blockchain networks \cite{Graphene-OzisikAndresenLevineTappBissiasKatkuri-2019,RatelessSetReconciliation-YangGiladAlizadeh-2024}, and comparing packet observations across network measurement points to detect loss or inconsistency \cite{LossRadar-LiMiaoKimYu-2016,FermatSketch-YangWuMiaoYangLiuXuQiuZhaoLvJiXie-2023}.

Let $d :=|(A\setminus B)\cup(B\setminus A)|, V:=|\mathcal U|$. We focus on the common regime $d\ll |A\cup B|$.
Typically, each party processes its elements into a compact \emph{sketch}. After receiving Alice's sketch, Bob combines it with his own sketch and decodes the difference.
Our main focus is the \emph{streaming} setting, where it is crucial to support $O(1)$ time updates for each arriving element, while decoding is performed only after the stream ends, thus a higher computational overhead is acceptable.
Therefore, the key metrics are twofold:
\begin{enumerate}
    \item \textbf{Space}: The communication cost, which is measured by the number $\mathfrak C$ of transmitted elements in the universe $\mathcal U$.
    \footnote{We use this word-level metric for brevity and convenience, as it closely tracks the actual communication cost of the algorithm. The fully precise bit-level accounting, including all metadata overheads, is provided in the main text.}
    A standard information-theoretic lower bound \cite{CPI-MinskyTrachtenbergZippel-2003} gives $\mathfrak C\ge d$ or at least $d\log_2(V/d)$ bits, more precisely.
    \item \textbf{Time}: The computational cost of update (and decoding).
\end{enumerate}

Most prior work can be classified into two categories. \textbf{1) Algebraic methods}
(e.g., CPI\cite{CPI-MinskyTrachtenbergZippel-2003} and PinSketch\cite{PinSketch-DodisOstrovskyReyzinSmith-2008}). They encode a set as a global polynomial or error-correcting-code syndrome, from which the difference is recovered algebraically.
They can approach optimal space, namely $\mathfrak C=d+O(1)$, but do not support constant-time updates.
\textbf{2) Hashing-based methods} (e.g., IBLT\cite{IBLT-GoodrichMitzenmacher-2011}, Difference Digests\cite{DifferenceDigest-EppsteinGoodrichUyedaVarghese-2011} and Graphene\cite{Graphene-OzisikAndresenLevineTappBissiasKatkuri-2019}). They hash each item into $O(1)$ cells and decode by repeatedly peeling cells with one residual item. They support $O(1)$ updates, but require a higher space overhead, typically $\mathfrak C\ge 1.23d$.
The clear trade-off raises a natural question:

\smallskip
\noindent\textit{\textbf{Question 1:}}
\emph{Can one approach the space lower bound while retaining $O(1)$ updates, breaking the current trade-off?}
\smallskip

\subsection{Our Solution: \fullalgo}

Our answer is \textbf{Yes}. In this paper, we propose \fullalgo, which shows that
near-minimal space and $O(1)$-time updates are compatible.

More precisely, for all constants $\varepsilon,\delta>0$, {\fullalgo} can reconcile $A$ and $B$ with probability at least $1-\delta$, use $O(1)$-time updates and retain $\mathfrak C\le (1+\varepsilon)d$ for all sufficiently large $d$.
The formal statement appears in Theorem \ref{thm:main}.
\begin{table}[htbp]
    \begin{center}
    \caption{While existing methods focus on either time or space, \fullalgo\ achieves near-optimality in both.}
    \label{tab:comparison}
    \begin{tabular}{c | c | c | c}
    \specialrule{1pt}{0em}{0em}
    \textbf{Algorithm} & \textbf{Space} ($\mathfrak C$) & \textbf{Insertion Time} & \textbf{Decoding Time}\\
        \specialrule{1pt}{0em}{0em}
    PinSketch \cite{PinSketch-DodisOstrovskyReyzinSmith-2008} & $d$ & $\Theta(d)$ & $\Theta(d^2+d\log V)$\\
    \hline
    IBLT \cite{IBLT-GoodrichMitzenmacher-2011}& $(1.23+\varepsilon)d$ & $\Theta(1)$  & $\Theta(d)$\\
    \hline
    \makecell{{\fullalgo}\\ \textit{(Our Work)}}& $(1+\varepsilon)d$ & $\Theta(1)$ & $\Theta(d\log V)$ \\  
    \hline
    \makecell{Separate \\ Lower Bounds\\ for Three Metrics}
    & $\ge d$ & $\Omega(1)$ & $\Omega(d)$\\
    \specialrule{1pt}{0em}{0em}
    \end{tabular}
    \end{center}
\end{table}

Our design builds on the IBLT architecture and combines three
ingredients, with the key novelty lying in the first two.

\textbf{1) High-capacity Cells.} Let us revisit the decoding process of IBLT. A conventional IBLT maps items to some \emph{cells}. The decoder follows a peeling process: it searches for cells that contain only one differing element (i.e. those in $(A\setminus B)\cup (B\setminus A)$), then recovers the item and removes it from the other cells it touches.
Once there is no such cell, the decoding fails immediately.

We introduce \emph{high-capacity cells} that enlarge the capacity of cells from $1$ to $\ell>1$.
Specifically, for any fixed constant $\ell>1$, a high-capacity cell can recover up to $\ell$ differing elements (e.g., $X,Y,Z$), hence the name \fullalgo. 
Our high-capacity cells make the decoding process less prone to failure and thereby allow us to significantly reduce space cost while maintaining the same error rate.

\textbf{2) Degree-aware RFR.} The subproblem within a high-capacity cell is similar to a set reconciliation instance with $d=\ell$.
In principle, the inner algorithm can be implemented by existing algebraic methods such as CPI\cite{CPI-MinskyTrachtenbergZippel-2003} or PinSketch\cite{PinSketch-DodisOstrovskyReyzinSmith-2008}.
However, each of them falls short of our requirements in certain aspects.\footnote{See related work for details.}
We therefore develop the \emph{Degree-aware Rational Function Reconstruction}
(\textsc{D-RFR}) technique, which enables exact signed recovery within a cell over the native field.

\textbf{3) Spatial Coupling.} We leverage the spatial coupling technique from \cite{PeelingOrientabilityThreshold-Walzer-2025}, further reducing the number of cells required while preserving $O(1)$ updates.
To our knowledge, {\fullalgo} is the first set-reconciliation data structure to realize $\ell$-peeling for $\ell>1$, thereby providing a concrete application of $\ell$-peelability results, and addressing the lack of such applications noted by \cite[Section 1.5]{PeelingOrientabilityThreshold-Walzer-2025}.
We also introduce the \emph{circular trick} and provide a heuristic analysis of key parameters in spatial coupling, thereby extending its theoretical foundation.


\subsection{The Fixed-Support Frontier: Conditional Optimality of {\fullalgo}}

Another natural question is:

\smallskip
\noindent\textit{\textbf{Question 2:}}
\emph{Is {\fullalgo} optimal for the set reconciliation problem?}
\smallskip





Under a natural formalization of the problem, we give a positive answer to this problem:

\smallskip
\noindent\textbf{Our Claim:} Under a broad model, {\fullalgo} is asymptotically optimal, provided that \emph{the Uniform-Support Extremality Conjecture} holds.
\smallskip

To be specific, we formulate a broad model for set reconciliation in the one-pass streaming setting by considering \emph{fixed-support canonical sketches}.
Inspired by the cell-probe model~\cite{CellProbeModel-Yao-1981}, we view a persistent sketch as an array of cells. We then require the algorithm to satisfy \emph{fixed-support} and \emph{canonicality} conditions.
This broad model imposes no restriction on the decoding strategy.
Thus, all existing set-reconciliation sketches known to us satisfy the fixed-support and canonicality conditions. \footnote{Although some sketches may have write supports larger than constant or variable sketch lengths, they are also canonical and their write locations are still fixed.}
We then show that:

\begin{theorem}[{Main Result: Conditional Optimality of {\fullalgo},
\hyperref[thm:main-precise]{Precise Statement}}]
\label{thm:mainbarrier}
Assume the conjecture that the uniform edge distribution maximizes the
$\ell$-orientability threshold among all distributions (Detailed version in Conjecture \ref{conj:uniform-extremality}).
Then, every high-probability-successful sketch $\mathcal A$ in this model
satisfies
$$
\liminf_{d\to\infty}
\frac{\mathfrak C_{\mathcal A}(d)}{d}
\geq
\limsup_{d\to\infty}
\frac{\mathfrak C_{\mathrm{XYZ}}(d)}{d}.
$$

Hence, {\fullalgo} is asymptotically optimal in this model.
\end{theorem}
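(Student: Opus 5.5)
The plan is to reduce any sketch $\mathcal A$ in the fixed-support canonical model to an orientability statement about a random hypergraph, and then to compare that hypergraph's threshold with the one achieved by {\fullalgo}.

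\textbf{Step 1: build the hypergraph.} By fixed-support, every element $x\in\mathcal U$ has a write set $S(x)$ of cells of bounded size, and this set is independent of the stream. Canonicality means the final sketch depends only on the multiset of elements inserted. So after Bob combines the two sketches, the combined sketch is a function of the symmetric difference alone, and each cell depends only on those differing elements whose support contains it.

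Fix the difference $D$, with $|D|=d$ and elements drawn uniformly. This gives a random hypergraph $H_D$: its vertices are the $m$ cells and its edges are the supports $S(x)$ for $x\in D$.

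\textbf{Step 2: a counting lower bound on each induced subsketch.} Suppose some subset $T\subseteq D$ has $|\bigcup_{x\in T}S(x)|$ cells carrying total capacity less than the information needed to identify $T$. Then decoding must fail with constant probability. Each cell stores $O(\log V)$ bits, i.e.\ it is worth some number $\ell$ of elements. The argument is a pigeonhole/entropy one: with $T$ fixed and the other elements of $D$ conditioned on, the contents of those cells are determined only by $T$, and there are about $V^{|T|}$ choices.

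Turning this into a clean statement requires care with the bit-level accounting. I would measure capacity in universe-element units, so that a cell of $b$ bits has capacity $b/\log V$. In that case $\ell$ is that capacity rounded down, and the word count is $\mathfrak C=\sum_c b_c/\log V$.

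Consequently, high-probability success forces $H_D$ to be (fractionally) $\ell$-orientable with high probability: every subset of edges must have at most $\ell$ times as many edges as the vertices it touches. By Hall's theorem this is equivalent to $\ell$-orientability.

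\textbf{Step 3: compare thresholds.} The edge distribution of $H_D$ is induced by the map $x\mapsto S(x)$ under a uniform $x$. This is an arbitrary distribution over hyperedges, and cells may have heterogeneous capacities. I would first reduce heterogeneous cells to the uniform-capacity case by splitting or merging cells, which leaves $\mathfrak C$ unchanged. Then I invoke Conjecture~\ref{conj:uniform-extremality}: the orientability threshold $c^*$ of any edge distribution is at most that of the uniform $k$-ary distribution, optimized over $k$ and $\ell$. Hence $d/m\le c^*_{k,\ell}+o(1)$, and so $\mathfrak C_{\mathcal A}(d)/d\ge \ell m/d\ge \ell/c^*_{k,\ell}-o(1)$.

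On the other side, Theorem~\ref{thm:main}, combined with spatial coupling reaching the orientability threshold (the peeling $=$ orientability threshold result of \cite{PeelingOrientabilityThreshold-Walzer-2025}), gives $\limsup\mathfrak C_{\mathrm{XYZ}}(d)/d\le \inf_{k,\ell}\ell/c^*_{k,\ell}$ once parameters are optimized. The D-RFR overhead per cell is absorbed into the $(1+\varepsilon)$ factor. Comparing the two inequalities gives the theorem.

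\textbf{Main obstacle.} I expect the hardest step to be Step 2: making rigorous that high-probability decoding implies $\ell$-orientability of the random hypergraph. The difficulty is that the decoder is unrestricted, so the argument must be purely information-theoretic. The dense subsets $T$ that violate Hall's condition have size $\Theta(d)$ near the threshold, and the counting has to beat the collision probability among them.

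My first attempt would use a fooling-set argument. Conditioned on a Hall violation on $T$, the map $T\mapsto$ (contents of the touched cells) is many-to-one on a constant fraction of the $T$-configurations, so the success probability is bounded by the ratio of image size to domain size. Heavy-tailed support distributions, where a few cells have huge degree, will need a truncation step before this counting works.
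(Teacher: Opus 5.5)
Your skeleton matches the paper's: success forces $\ell$-orientability of the support hypergraph (Hall witness plus counting), the law $\mu$ of $S^\omega_{d,V}(X_1)$ is an arbitrary distribution on $\binom{[M_d]}{\le k}$, and Conjecture~\ref{conj:uniform-extremality} rules out orientability once $d/M_d\ge c^{\mathsf{orient}}_{k,\ell}+\varepsilon$. The concrete error is in the final comparison with {\fullalgo}. Your lower bound is $\ell/c^{\mathsf{orient}}_{k,\ell}$ for the sketch's own fixed $(k,\ell)$. The matching upper bound is therefore Theorem~\ref{thm:orientability-benchmark} applied to {\fullalgo} with the same $(k,\ell)$, which gives exactly $\limsup_d\mathfrak C_{\mathrm{XYZ}}(d)/d\le\ell/c^{\mathsf{orient}}_{k,\ell}$. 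Your claim $\limsup_d\mathfrak C_{\mathrm{XYZ}}(d)/d\le\inf_{k,\ell}\ell/c^{\mathsf{orient}}_{k,\ell}$ fails for every fixed parameter choice: the infimum is $1$, and Theorem~\ref{thm:barrier2} shows no fixed-$(k,\ell)$ sketch, {\fullalgo} included, attains it. Theorem~\ref{thm:main} is also the wrong input, since a $(1+\varepsilon)d$ guarantee cannot give an exact inequality between limits. (If {\fullalgo} may use larger parameters than $\mathcal A$, Theorem~\ref{thm:barrier2} plus Theorem~\ref{thm:orientability-benchmark} already give the comparison with no conjecture, so the statement only has content at matched $(k,\ell)$, as in the precise version.) Counters and fingerprints cost $o(\log V)$ bits per cell and vanish from $\mathfrak C$ as $V\to\infty$, so nothing needs absorbing into $\varepsilon$. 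Drop the heterogeneous-capacity detour. The model gives every cell at most $V^{\ell+o_V(1)}$ states, per-cell rounding down is not implied by the counting (capacities add before rounding), and splitting or merging cells changes the support sizes.

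Your ``main obstacle'' is misplaced, and the points the necessity lemma actually needs are missing. W.h.p.\ success means $\lim_d\liminf_V\operatorname{Succ}_{d,V}(\mathcal A)=1$, so the lemma is proved at fixed $d$ with $V\to\infty$. Any witness with $t>\ell r$ then bounds the conditional success by $V^{\ell r-t+o_V(1)}\le V^{-1+o_V(1)}$, however large $T$ is, so no fooling-set analysis over $\Theta(d)$-size sets and no degree truncation is needed. What is needed is the following.
(i) Sample the multiset of support types first and choose the witness deterministically from it, so the witness identities stay uniform within their types.
(ii) Guarantee $V^{1-o(1)}$ candidates per witness type. The paper does this by discarding supports with fewer than $V/\log V$ preimages, at cost $dN_d/\log V=o_V(1)$ with $N_d=\sum_{i\le k}\binom{M_d}{i}$.
(iii) Use canonicality to insert all non-witness elements first. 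The witness insertions then modify only cells in $R$, so all candidate sketches agree outside $R$.

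Point (iii) replaces your Step~1 claim that a cell depends only on the differing elements whose support contains it; that claim is false in this model, because an update may read the whole sketch. There is also no combined sketch: the decoder receives $(\mathsf{Enc}(A),\mathsf{Enc}(B))$, which is why the paper restricts to $(D,\varnothing)$. Finally, the conjecture concerns i.i.d.\ edges. To apply it you also need:
(a) the birthday-bound passage from $H^\omega_{d,V}(D)$ to $\widetilde H^\omega_{d,V}$;
(b) monotonicity of orientability under edge deletion;
(c) $M_d\to\infty$ along the bad subsequence;
(d) uniformity in $\omega$ and $V$, which the $\sup_\mu$ in the conjecture supplies.
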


We further establish two unconditional impossibility results under this model, as follows:


\begin{theorem}[{Barrier 1: Zero-Error is Impossible for $O(1)$ Update, \hyperref[thm:barrier1-precise]{Precise Statement}}]
\label{thm:barrier1}
No sketch with $\operatorname{poly}(d)$ cells that touches only $O(1)$ cells per update can work for every valid instance without error.
\end{theorem}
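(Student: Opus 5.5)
Because every update writes to a fixed set of $O(1)$ cells (fixed support) and the final sketch depends only on the set (canonicality), the plan is to reduce zero-error reconciliation to a combinatorial injectivity requirement and then break it by counting. Concretely, let $h(x)\subseteq[m]$ denote the write support of element $x\in\mathcal U$, with $|h(x)|\le k=O(1)$ and $m=\operatorname{poly}(d)$ cells. Canonicality lets us view Bob's combined sketch as a function only of the symmetric difference $D=A\triangle B$, together with how $D$ splits into $A\setminus B$ and $B\setminus A$. Since the support is fixed, the content of every cell outside $\bigcup_{x\in D}h(x)$ carries no information about $D$. Zero error then means the map $D\mapsto\text{sketch}$ must be injective on all valid instances with $|D|\le d$.

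The core step is to exhibit two distinct valid difference sets whose sketches coincide. First I will use pigeonhole on supports. There are at most $\binom{m}{\le k}=\operatorname{poly}(d)^{k}$ possible supports, while $V$ is huge (the universe is much larger than any polynomial in $d$, as in the intended regime $d\ll V$). Hence some support set $S$, with $|S|\le k$, is shared by very many elements, say $x_1,\dots,x_N$ with $N\gg 2^{k\cdot w}$. If the cell contents have bounded width, pigeonhole on contents already finishes the argument. To avoid any assumption on cell width, I will instead use a cancellation construction in the style of Barrier arguments for IBLTs. Every cell outside $S$ is unaffected by all $x_i$. The cells in $S$ form a fixed $O(1)$-size region, whose state under canonicality is determined by the multiset difference restricted to elements supported in $S$.

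For the contradiction, I compare the instance $A=B\cup\{x_1,\dots,x_d\}$ against different choices of $d$ elements from the $N$ colliding ones. The $\binom{N}{d}$ possible difference sets must all yield distinct contents on just $|S|\le k$ cells. This is the main obstacle, since a single cell could in principle store an arbitrarily rich value. I plan to resolve it by appealing to the model's precise cell definition: in the precise statement, cells hold elements of $\mathcal U$, i.e.\ $O(\log V)$ bits each. Then $k\log V$ bits must encode $\log\binom{N}{d}\ge d\log(N/d)$ bits. With $N\ge V/\operatorname{poly}(d)^k$ and $d$ larger than a suitable constant multiple of $k$, this is impossible. This gives two valid instances with identical sketches but different differences, so any decoder errs on one of them. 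I would double-check that both instances respect $|D|\le d$ and are valid.
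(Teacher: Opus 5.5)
Your proposal is correct and follows the paper's proof. Pigeonhole over the at most $\binom{M_d}{\le k}$ supports gives a support $R$ shared by $\Omega(V)$ elements, canonicality (insert the fixed part first) makes all sketches of sets drawn from that class agree outside $R$, and comparing $\Omega(V^{d})$ such sets with at most $V^{(\ell+o(1))k}$ states on $R$ forces a collision once $d\ge \ell k+1$. (The per-cell budget is $\ell$ words rather than one, which is where that hypothesis comes from.) The only differences are cosmetic: the paper draws just $t=\ell|R|+1$ elements from the popular class and pads with a fixed set of size $d-t$, and it first fixes a single tape $\omega_0$ that is good for every $A$ (possible because $\binom{\mathcal U^*}{d}$ is finite), a step you leave implicit.
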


\begin{theorem}[{Barrier 2: Exact $\mathfrak C=d$ is Impossible, \hyperref[thm:barrier2-precise]{Precise Statement}}]
\label{thm:barrier2}
For every fixed $k\ge 2,\ell\ge 1$, any high-probability-successful sketch
$\mathcal A$ in this model must satisfy
$
\liminf\limits_{d\to\infty}
\dfrac{\mathfrak C_{\mathcal A}(d)}{d}>1.
$

Therefore, the idealized $\mathfrak C=d$, $O(1)$-update point in Table~\ref{tab:comparison} is (unconditionally) unattainable.
\end{theorem}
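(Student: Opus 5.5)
The plan is to reduce the statement to a \emph{rank (linear-algebra) obstruction} on the cell–element incidence structure. I would then show that the obstruction is forced with high probability whenever the sketch length is $(1+o(1))d$ element-words.

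\textbf{Setup.} In the fixed-support canonical model, each element $x\in\mathcal U$ writes to a fixed set $S(x)$ of at most $k$ cells, and each cell stores capacity worth at most $\ell$ element-words. Canonicality means the combined sketch of $A$ and $B$ depends only on the signed difference $A\triangle B$, through cellwise contributions. So the number of element-words transmitted is $\mathfrak C=\sum_j c_j$, where $c_j\le\ell$ is the capacity of cell $j$. Say there are $m$ cells.

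\textbf{Local counting condition.} The first step is to show that decoding succeeds only if every subset $D'$ of the difference set $D$ satisfies
\[
|D'|\;\le\;\sum_{j\in N(D')} c_j ,
\]
where $N(D')=\bigcup_{x\in D'}S(x)$. Equivalently, $D$ must be $c$-orientable in the hypergraph $\{S(x)\}_{x\in D}$. The reason is information-theoretic. Fix the elements outside $D'$. The sketch content in the cells $N(D')$ then determines $D'$ among roughly $\binom{V}{|D'|}$ candidates. Each element costs about $\log V$ bits, while each cell carries $c_j$ words of about $\log V$ bits. For large $V$ this forces the inequality up to lower-order terms.

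\textbf{Probabilistic step.} Next, take $D$ to be a uniformly random $d$-subset. The high-probability-success requirement then says the random hypergraph $\{S(x)\}_{x\in D}$ is $c$-orientable with probability $1-o(1)$. Condition on this.

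The key case is $\mathfrak C\le(1+\eta)d$ with $\eta\to 0$. Orientability gives a matching that assigns the $d$ elements to cell slots, leaving at most $\eta d$ slots unused. Averaging over edges, a constant fraction of cells must then be \emph{saturated}: all their slots are filled, and every element in them lies in at least one fully saturated cell.

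I would look for a small subset $D'$ that violates the counting condition. The natural candidate is a set of $\ell c+1$ elements whose supports all fall inside a set of cells of total capacity at most $\ell c$. Consider first- and second-moment counts of \emph{excess configurations}: sets of $t$ elements whose union of supports has total capacity $t-1$. When the load $d/\mathfrak C$ is $1-o(1)$, these configurations appear with probability bounded away from zero. This should follow from the known fact that the $\ell$-orientability threshold of $k$-uniform hypergraphs is strictly less than $\ell$ per cell, for every fixed $k\ge 2$ and $\ell\ge1$.

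To handle an arbitrary (non-uniform) support distribution without the Uniform-Support Extremality Conjecture, I would not argue about the optimum. Instead I would use a direct local argument. Consider the densest capacity-weighted region: by pigeonhole, some cell receives at least $\ell$ times its share of elements' mass. Poissonization then shows its load exceeds capacity with constant probability, unless the expected load is below capacity by a constant margin. Summing these margins gives $\mathfrak C\ge(1+\gamma)d$ for some $\gamma(k,\ell)>0$.

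\textbf{Main obstacle.} The hard step is the last one: getting a \emph{uniform} positive gap $\gamma$ over all support distributions and all capacity profiles. The required Poisson-tail estimate for a cell whose capacity is about $\ell$ times its mean load is straightforward. However, the supports are correlated through the $k$ cells, so a single saturated cell does not immediately yield a violating subset $D'$. I would first try to handle this with a union bound over low-degree cells together with an expansion argument. The plan is to show that if every overloaded cell could offload elements elsewhere, the total capacity slack would be at least a constant fraction of $d$.
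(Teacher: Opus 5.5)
Your reduction to orientability is sound and is essentially the paper's Lemma~\ref{lem:orientability-necessary}, with two small repairs. First, canonicality here only means order-independence, so you should restrict to the subfamily $(D,\varnothing)$ instead of asserting that the sketch depends on the signed difference. Second, the candidates must share the witnesses' support types. There are $\Omega(V^{t-o(1)})$ of them only because $M_d$ is independent of $V$, so w.h.p.\ in $V$ every element of $D$ has a support whose preimage has size at least $V/\log V$.

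The genuine gap is the quantitative step, which you leave open, and the mechanism you propose for it cannot work. Consider a bounded-size excess configuration: $t$ elements whose supports lie in $r$ cells with $t>\ell r$. Already for uniform supports with $k\ge2$, its expected count is $\Theta(d^tM^{r-kt})=O(M^{r-(k-1)t})\to0$. So no such witness appears with probability bounded away from zero. As you note yourself, a single overloaded cell never violates Hall's condition either. Near $\mathfrak C\approx d$, orientability fails globally, and the cause is \emph{under}-loaded cells, whose slots are wasted.

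The missing idea is the paper's global slot count: every $\ell$-orientable hypergraph satisfies $d\le Z:=\sum_v\min\{\ell,\deg v\}$. Sample $X_1,\ldots,X_d$ i.i.d.; this equals a uniform $d$-set up to an $o_V(1)$ error. Set $q_v:=\Pr[v\in S(X_1)]$, so that $\sum_v q_v\le k$ and $\deg v\sim\operatorname{Bin}(d,q_v)$. The function $g(q):=\mathbb E[\min\{\ell,\operatorname{Bin}(d,q)\}]$ is increasing and concave. Jensen therefore gives $\mathbb E Z\le M_d\,g(k/M_d)$ for \emph{every} support distribution, and Markov gives $\operatorname{Succ}\le\mathbb E Z/d+o_V(1)$. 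Put $\rho_d=d/M_d$. A Poisson limit turns the bound into $\mathbb E[\min\{\ell,\operatorname{Pois}(k\rho)\}]/\rho$. This is $<1$ for $\rho>\rho_{k,\ell}$, and $\rho_{k,\ell}<\ell$. Hence $\liminf_d\mathfrak C_{\mathcal A}(d)/d\ge\ell/\rho_{k,\ell}>1$.

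In your own language: the number of unused slots is at least $\sum_v(\ell-\deg v)_+$. Its expectation is convex in the $q_v$, so it is at least $M_d\bigl(\ell-g(k/M_d)\bigr)=\Omega(M_d)$ uniformly over support distributions. That is the uniform gap $\gamma(k,\ell)$ you were after, and it needs neither the extremality conjecture nor any expansion argument.
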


See Section~\ref{sec:barriers} for the formal model, precise theorem
statements and proofs.

\subsection{Key Contributions \& Paper Organization}

\begin{itemize}
    \item We propose {\fullalgo}, breaking the longstanding trade-off of the set reconciliation problem. Technically, we first introduce high-capacity cells, implemented using the \textsc{D-RFR} technique we developed. It also represents the first practical application of $\ell$-peeling in set reconciliation, to the best of our knowledge.

    \item We formulate the broad fixed-support canonical model for the set reconciliation problem. We prove that, under a natural conjecture, {\fullalgo} is optimal in this model.

    \item We implement {\fullalgo} in C++ and release the source code through \href{\codeRef}{our repository}. We also conduct experiments demonstrating its practical advantages.
\end{itemize}

The remainder of this paper is organized as follows.
Section~\ref{sec:background} introduces the relevant background.
Section~\ref{sec:algo} presents high-capacity cells and \textsc{D-RFR}, which, together with spatial coupling, forms {\fullalgo}.
Section~\ref{sec:analysis} analyzes its complexity, reliability, and 
parameter choices.
Section~\ref{sec:barriers} gives the barriers and conditional optimality results.
Section~\ref{sec:experiments} presents the experimental evaluation, and
Section~\ref{sec:conclusion} concludes.
\section{Background}
\label{sec:background}

\subsection{Conventions}
\label{sec:prelim-conventions}

In real-world scenarios, the exact difference size may not be known in advance.
To address the issue, prior work represented by Strata Estimator~\cite{DifferenceDigest-EppsteinGoodrichUyedaVarghese-2011} estimates the difference size by exchanging only $O(\log V)$ data words in one round.
We therefore focus on the setting where both parties know an upper bound on the difference size, which we denote by $d$. Given $A,B\subseteq\mathcal U^*$ satisfying $|(A\setminus B)\cup (B\setminus A)|\le d$, the decoder must recover both $A\setminus B$ and $B\setminus A$.
We additionally require $d\le V/2$; otherwise, the problem is of little practical (and also theoretical) interest.

For simplicity, we identify the universe with a prime field\footnote{Our construction can also be
implemented over an extension field such as $\mathbb F_{2^r}$, but this
requires more involved field arithmetic.  We thus use $\mathbb F_p$ in
both the presentation and the implementation.}
$\mathcal U=\mathbb F_p$, where $p=V$. Let $\mathcal U^*:=\mathcal U\setminus\{0\}$.
Throughout the paper, we assume that $A,B\subseteq\mathcal U^*$.
This does not make much difference, as Alice can always send a single bit to Bob indicating whether $0\in A$.
We assume that arithmetic on one field element costs $O(1)$ time.

We measure communication in $\mathcal U$-word equivalents.
Formally, a sketch has communication cost $\mathfrak C(d)$ if, as $V\to\infty$,
it occupies $\bigl(\mathfrak C(d)+o(1)\bigr)\log_2 V$ bits.
This accounting includes all stored auxiliary information, such
as counters and fingerprints, but excludes shared public randomness (e.g. hashing functions).
The precise bit-level accounting, including counters, fingerprints, and other metadata, is given in Section~\ref{sec:analysis}.

We denote $[n]:=\{1,2,\cdots,n\}$ for $n\ge 0$ with the special case $[0]=\varnothing$.
For a finite set $A$ and an integer $d\ge 0$, we write
$\binom{A}{d}:=\{B\subseteq A:|B|=d\}$ and $\binom{A}{\le d}:=\{B\subseteq A:|B|\le d\}$.
For a map $f:\mathcal U^*\to\{-1,0,1\}$, we define $\operatorname{supp}(f):=\{v:f(v)\ne 0\}.$
The expression \emph{with high probability} (\emph{w.h.p.} for short) is used when the probability is $1-o(1)$.

\subsection{Hypergraphs and Thresholds}
\label{sec:prelim-hypergraph}

A \emph{hypergraph} is defined as a pair $H=(\mathcal V,\mathcal E)$, where every \emph{hyperedge}
$e\in \mathcal E$ is a subset of $\mathcal V$. We allow repeated hyperedges.
For a vertex $v\in \mathcal V$, let $\deg_H(v)$ denote the number of hyperedges incident to $v$, counted with multiplicity.

An \emph{$\ell$-peeling step} selects a vertex $v$ satisfying $1\le \deg_H(v)\le \ell$,
and removes all hyperedges incident to $v$. A hypergraph is called
\emph{$\ell$-peelable} if repeated $\ell$-peeling removes every
hyperedge.
When $\ell=1$, this is the usual singleton-peeling process.

An \emph{$\ell$-orientation} of $H$ is a map $\phi:\mathcal E\to \mathcal V$
such that
$$
\forall v\in\mathcal V,\  |\phi^{-1}(v)|\le \ell\quad \text{and}\quad  \forall e\in \mathcal E,\ 
\phi(e)\in e
$$

If such a map exists, then $H$ is \emph{$\ell$-orientable}.  Every
$\ell$-peelable hypergraph is $\ell$-orientable, as we can orient each hyperedge
to the vertex that removes it in a peeling sequence. Note that the converse does
not hold in general.
Moreover, by Hall's theorem, $H$ is $\ell$-orientable iff every
submultiset $F\subseteq E$ satisfies
$|F|\le \ell\left|\bigcup_{e\in F}e\right|.$

Define the $k$-uniform random-hypergraph as follows: let
$H^{(k)}_{n,m}$ denote the hypergraph on $n$ vertices formed by sampling
$m$ hyperedges independently and uniformly from $\binom{[n]}{k}$.
Its edge density is $c:=\frac{m}{n}$.
For a family of random hypergraphs $(G_{n,c})_{n\in\mathbb N,\,c\ge0}$, we say that $c^*$ is a threshold for a property $\mathcal P$ if
$$
c^*
:=
\sup\left\{
c\ge0:
\lim_{n\to\infty}
\Pr\!\left[G_{n,c}\in\mathcal P\right]
=1
\right\}.
$$

For convenience, let
$c^{\mathsf{peel}}_{k,\ell}$ and $c^{\mathsf{orient}}_{k,\ell}$ denote the thresholds
for $\ell$-peelability and $\ell$-orientability, respectively, of $H^{(k)}_{n,\lfloor cn\rfloor}$.
Intuitively, when $c<c^{(\star)}_{k,\ell}$ where $\star\in \{\mathsf{peel},\mathsf{orient}\}$, the relevant property holds w.h.p., whereas above it the property fails w.h.p..\cite{PeelingOrientabilityThreshold-Walzer-2025}

\subsection{Polynomial Arithmetic}
\label{sec:prelim-polynomial}

To avoid confusion between $X$ and $\chi$, we follow the convention of using $Z$ as the formal variable for polynomials. The following definitions are used:
\begin{itemize}
    \item For a nonzero polynomial $G\in\mathbb F_p[Z]$, $F\bmod G$ denotes the unique remainder of degree smaller than $\deg G$ obtained by dividing $F$ by $G$.
    \item We write $F\equiv G\pmod H$ if $H$ divides $F-G$.
    \item $\gcd(F,G)$ denotes the monic greatest common divisor of $F$ and $G$.
    \item A nonzero polynomial is \emph{monic} iff its leading coefficient is $1$.
\end{itemize}

For $F,G\in\mathbb F_p[Z]$ with $G(0)\neq0$, we write
$
\frac{F(Z)}{G(Z)}\bmod Z^\ell
$
for the unique polynomial $R(Z)$ with $\deg R<\ell$ satisfying
$R(Z)G(Z)\equiv F(Z)\pmod{Z^\ell}.$
Such an $R$ exists uniquely because $G(0)\neq0$ implies that $G$ is
invertible modulo $Z^\ell$.  In particular, for every
$x\in\mathcal U^*$, the polynomial $Z-x$ is invertible modulo
$Z^\ell$.  Hence products with negative exponents, such as
$$
\prod_x (Z-x)^{f(x)}\bmod Z^\ell,
\qquad
f(x)\in\{-1,0,1\},
$$
are well-defined.

For any $S\subseteq\mathcal U^*$, define its \emph{characteristic
polynomial} as 
$\displaystyle \chi_S(Z):=\prod_{x\in S}(Z-x).$ It is clear that:

\begin{itemize}
    \item $\chi_S$ is monic, and $\deg\chi_S=|S|$.
    \item The root set of $\chi_S$ is exactly $S$.
    \item if $S\cap T=\varnothing$, then $\chi_S$ and $\chi_T$ are coprime, and $\chi_{S\cup T}=\chi_S\chi_T$.
\end{itemize}
\subsection{Related Work}
\label{sec:related-work}

We briefly review existing set reconciliation work here, also demonstrating that most prior work share the abstraction of \emph{fixed-support} and \emph{canonical} in our barrier model.

\subsubsection{Algebraic Methods}

Algebraic reconciliation methods are rooted in Error Correcting Codes (ECC).
Characteristic-polynomial schemes represented by CPI\cite{CPI-MinskyTrachtenbergZippel-2003}, are essentially variants of the Reed-Solomon Code\cite{ReedSolomonCode-ReedSolomon-1960}, whereas PinSketch\cite{PinSketch-DodisOstrovskyReyzinSmith-2008}
and MiniSketch\cite{MiniSketch-BitcoinCore-2018} are variants of BCH codes\cite{BCHCode-Hocquenghem-1959,BCHCode-BoseRayChaudhuri-1960}.
Due to their algebraic properties, these sketches can be viewed as \emph{global}: each update modifies almost all entries.
Moreover, CPI may require field extensions, whereas PinSketch and Minisketch recovers only the symmetric difference and cannot
distinguish $A\setminus B$ and $B\setminus A$.
\footnote{Recovering the direction of each item typically requires maintaining an additional membership-query data structure. It may be easy to do so when the full set is static and already stored, but less suitable for an one-pass streaming sketch, e.g. in network measurement and packet loss detection scenarios.}

Our \emph{Degree-aware Rational Function Reconstruction} (D-RFR for short) follows the same coding-theoretic tradition, but takes a different representation:
we directly maintain the coefficients of a characteristic polynomial truncated modulo $Z^\ell$ over the native field, instead of evaluations on some points.
Thus, D-RFR avoids the auxiliary-field overhead of CPI while retaining directional information absent from the PinSketch/MiniSketch which use fields characteristic $2$.

\subsubsection{Hashing-based Methods}

Hashing-based methods, represented by IBLT~\cite{IBLT-GoodrichMitzenmacher-2011},
Difference Digest~\cite{DifferenceDigest-EppsteinGoodrichUyedaVarghese-2011},
and Graphene~\cite{Graphene-OzisikAndresenLevineTappBissiasKatkuri-2019}, map each item to a constant number of cells and maintain information such as signed counts, key sums, and checksums.
After subtraction, the decoder repeatedly identifies a cell containing exactly one residual item, recovers that item, and removes it from related cells.

This peeling-based architecture supports $O(1)$ local updates, but its success is limited by the singleton-peelability threshold of the induced random hypergraph, resulting in a higher space overhead.
Our {\fullalgo} retains this outer architecture while replacing singleton recovery with exact recovery of up to $\ell$ signed residual items within one cell. The inner structure of a high-capacity cell is implemented via D-RFR.

\subsubsection{Hybrid, Rateless, and Zero-Error Schemes}

In fact, there have been attempts to combine the two routes.
Parity Bitmap Sketch\cite{PBS-GongLiuLiuXuOgiharaYang-2020} combines ECC with hashing.
It targets a different trade-off from approaching the local-sketch limit and is more space-efficient than IBLT practically (but not theoretically).
A recent hybrid construction\cite{SetReconciliationTradeoffs-BelazzouguiKucherovWalzer-2024} augments
a standard singleton-peeling IBLT with a global BCH stash that corrects
the residual after peeling, yielding substantially stronger reliability.
In contrast, {\fullalgo} changes the peeling rule itself: by introducing $\ell$-peeling, we break the singleton-peeling threshold for hypergraphs.

Rateless IBLT and rate-compatible schemes transmit an extendable sequence
of coded cells, avoiding the need to know the difference size $d$ in advance
\cite{RatelessSetReconciliation-YangGiladAlizadeh-2024}; they address protocol adaptivity rather than the communication limit of one fixed persistent sketch.
For the strict zero-error requirement, IBLTs with listing guarantees\cite{IBLTListingGuarantees-MizrahiBarLevYaakobiRottenstreich-2023} construct $d$-decodable incidence matrices that recover every set of at most $d$
items, while CertainSync\cite{CertainSync-KeniaginYaakobiRottenstreich-2025} reveals rateless prefixes of such constructions to guarantee reconciliation with certainty.
Note that these zero-error guarantees do not contradict our barrier 1: their communication cost strongly depends on the universe size $V$, typically $\Omega_k(V^{1/k}\log V)$ bits.
Finally, robust set reconciliation\cite{RobustSetReconciliation-ChenKonradYiYuZhang-2014} changes the
problem itself by allowing sufficiently nearby elements to match, rather
than requiring recovery of the exact symmetric difference.

\subsubsection{Threshold Saturation and Spatial Coupling}

For independently hashed hypergraphs, the
$\ell$-peelability threshold lies strictly below the $\ell$-orientability threshold \cite{OrientabilityThresholds-FountoulakisKhoslaPanagiotou-2016},
which marks the feasibility limit for assigning each edge to an incident cell of capacity $\ell$.
Spatial coupling was developed in coding theory by \cite{ThresholdSaturation-KudekarRichardsonUrbanke-2011} to achieve
\emph{threshold saturation} for iterative LDPC decoding.
Walzer transferred this mechanism to hashing-based data structures in \cite{PeelingOrientabilityThreshold-Walzer-2025} and proved that, as the coupling length grows, the $\ell$-peelability threshold of a spatially coupled hypergraph
approaches the $\ell$-orientability threshold of the corresponding fully random instance.
We use the threshold-saturation results as a theoretical input for our proof.
\section{Design of {\fullalgo}}
\label{sec:algo}

\newcommand\fgp{h_{\mathrm{fp}}}

In this section, we present {\fullalgo}.
For details, refer to the pseudocode in the Appendix \ref{app:pseudocode} or the C++ implementation in our repository \href{\codeRef}{\codeRef}.

\subsection{Framework: High-Capacity Cells and $\ell$-Peeling}
\label{sec:framework}

\subsubsection{Structure of the \fullalgo}

The sketch $\mathcal B$ contains $M$ \emph{cells}, numbered from $1$ to $M$.
The $i$-th cell is denoted by $\mathcal B[i]$.
We fix two \textbf{constant} parameters: 1) $k$, the number of cells touched by each element; 2) $\ell$, the capacity of each cell.
We use $k$ mutually independent hash functions $h_1,\ldots,h_k:\mathcal U^*\rightarrow[M]$.\footnote{In our implementation, we use MurmurHash\cite{MurmurHash3-Appleby-2016}, which is a widely adopted hashing scheme.}

The detailed implementation of each cell is sophisticated and deferred to Section \ref{sec:d-rfr}; here, we describe its functionality solely through its interface.
A cell state $C$ maintains a map $f_C:\mathcal U^*\to \{-1,0,1\}$.
We denote a cell \emph{pure} iff it contains at most $\ell$ nonzero entries, i.e. $|\operatorname{supp}(f_C)| \le \ell$.
It supports the following operations:
\begin{itemize}
    \item $\mathsf{Update}(C,x,\sigma)$, where
    $\sigma\in\{+1,-1\}$, which means $f_C(x)\gets f_C(x)+\sigma$;
    \item $\mathsf{Subtract}(C,C')$, which returns the state representing
    $f_C-f_{C'}$, thereby canceling elements shared by the two cells and
    retaining their differences;
    \item $\mathsf{Decode}_{\ell}(C)$, which returns exactly $\{(x,f_C(x)):x\in\mathcal U^*,f_C(x)\ne 0\}$ when $C$ is pure, and otherwise it may return $\bot$ or an arbitrary output.
\end{itemize}

Note that all operations require the resulting value to remain in $\{-1, 0, 1\}$; otherwise, the behavior is undefined.

\subsubsection{The Encoding \& Decoding Algorithm}

During the encoding phase, when an element $x\in\mathcal U^*$ is inserted, we first compute its $k$ hash locations $h_1(x),\ldots,h_k(x)$ and remove duplicates.\footnote{The version without de-duplication indeed has the same asymptotic peeling threshold. It avoids the de-duplication step and can be slightly faster in practice.}
For every remaining location $j$, the encoder performs
$\mathcal B[j]\gets\mathsf{Update}(\mathcal B[j],x,+1)$.

During the decoding phase, the decoder first subtracts the two sketches cell by cell, obtaining the sketch $\mathcal D$ that
$$
\mathcal D[i]:=\mathsf{Subtract}(\mathcal B_A[i],\mathcal B_B[i]).
$$
This cancels all common elements and leaves only the differences: an
element in $A\setminus B$ has sign $+1$, while an element in
$B\setminus A$ has sign $-1$.

To detect (non-empty) pure cells, the decoder tentatively runs $\mathsf{Decode}_{\ell}$ on every cell. The returned candidate must then pass verification.
In particular, we use \emph{rehashing verification}: every recovered element $x$ must indeed be mapped to the cell being verified, i.e., the cell must belong to $\Gamma(x):=\{h_1(x),\ldots,h_k(x)\}$. A verified cell is treated as pure.

The decoder repeatedly finds a verified pure cell, recovers the signed
differences in it, and eliminates their effects from all incident cells.
More precisely, for every recovered $(x,\sigma)$ and every
$j\in\Gamma(x)$, it performs
$\mathcal D[j]\gets\mathsf{Update}(\mathcal D[j],x,-\sigma)$.
This may create new pure cells, and the process continues until
$\mathcal D$ is empty or no verified pure cell remains. The former case
indicates success, while the latter indicates failure. 
Detailed pseudocode is provided in the Appendix \ref{app:pseudocode}.

\paragraph{Connection to the $\ell$-Peeling Process}
Consider the hypergraph whose vertices are the cells and whose hyperedges are
the difference elements, with $x$ incident to the cells in $\Gamma(x)$.
It can be seen that, the above decoding procedure is exactly the $\ell$-peeling process on this hypergraph.
Therefore, assuming that verification accepts exactly the pure cells, decoding succeeds iff the induced hypergraph is $\ell$-peelable.

\subsection{Implementing the High-Capacity Cell via D-RFR}
\label{sec:d-rfr}

In this part, we propose the \emph{Degree-aware Rational Function Reconstruction} technique (D-RFR), and implement the high-capacity cell interface in Section~\ref{sec:framework} via it.

\subsubsection{D-RFR}

Rational Function Reconstruction (RFR) is a classical
Pad\'e-approximation problem. D-RFR adapts it to our truncated representation by exploiting the known degree difference and the monicity of characteristic polynomials. We obtain the following guarantee.


\begin{theorem}[The D-RFR Technique]
\label{thm:d-rfr}
Let $R(Z)$ be a polynomial with $\deg R<\ell$, and let
$m\in[-\ell,\ell]$. Suppose that there exist monic coprime polynomials
$F(Z)$ and $G(Z)$ satisfying
$$
G(0)\ne 0, \qquad\deg F+\deg G\le\ell,
\qquad
\deg F-\deg G=m,
$$

$$
\frac{F(Z)}{G(Z)}\bmod Z^\ell=R(Z).
$$

Then $F$ and $G$ are uniquely determined by $(R,m)$.
Moreover, D-RFR recovers them from $(R,m)$ in $O(\ell\log^2\ell)$ field operations.
\end{theorem}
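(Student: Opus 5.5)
Uniqueness comes first. Suppose $(F_1,G_1)$ and $(F_2,G_2)$ both satisfy the hypotheses with the same $(R,m)$. Then $F_1G_2\equiv F_2G_1\pmod{Z^\ell}$, since both sides are congruent to $RG_1G_2$ and $G_1G_2$ is invertible modulo $Z^\ell$. Write $a_i=\deg F_i$ and $b_i=\deg G_i$, so that $a_i-b_i=m$ and $a_i+b_i\le\ell$.

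The naive bound $\deg(F_1G_2-F_2G_1)\le\max(a_1+b_2,a_2+b_1)$ can reach $\ell$ or slightly more, so a congruence modulo $Z^\ell$ alone does not force equality. Here the degree-awareness is used. Because $a_1-b_1=a_2-b_2=m$, we have $a_1+b_2=a_2+b_1=:D$, and both $F_1G_2$ and $F_2G_1$ are monic of degree $D$. Their leading terms therefore cancel, and $\deg(F_1G_2-F_2G_1)\le D-1$.

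We also have $D=\tfrac12\bigl((a_1+b_1)+(a_2+b_2)\bigr)\le\ell$. Hence $F_1G_2-F_2G_1$ has degree below $\ell$ and is divisible by $Z^\ell$, so it is $0$. Coprimality and monicity then give $F_1=F_2$ and $G_1=G_2$.

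For the algorithm, note that $m$ and the constraint $a+b\le\ell$ determine the admissible degree pairs $(a,b)=(b+m,b)$ with $2b+m\le\ell$. Take the maximal pair $b^*=\lfloor(\ell-m)/2\rfloor$ and $a^*=b^*+m$ (clamped at zero if necessary). The plan is to solve the Padé problem: find $(P,Q)$ with $\deg P\le a^*$, $\deg Q\le b^*$, $Q(0)\ne0$ and $P\equiv RQ\pmod{Z^\ell}$.

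This is done with the extended Euclidean algorithm on $(Z^\ell,R)$, stopped at the first remainder of degree $\le a^*$. The classical rational reconstruction theorem says that the resulting cofactor pair $(r_j,t_j)$ equals $\lambda\cdot(F,G)\cdot U$ for some polynomial $U$, and that it is the reduced solution whenever one exists with $a^*+b^*<\ell$. Dividing by $\gcd(r_j,t_j)$ and normalizing to monic then yields $(F,G)$, with the uniqueness argument confirming correctness.

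The fast half-gcd algorithm computes the required Euclidean step in $O(M(\ell)\log\ell)=O(\ell\log^2\ell)$ field operations. The final gcd and the normalization fit within the same bound.

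The main obstacle is the boundary case $a^*+b^*=\ell$, where the standard reconstruction theorem's hypothesis $a+b<\ell$ fails. Here I would reduce to the strict case using the monicity argument above. One option is to treat the unknown leading coefficients as fixed to $1$, which removes one degree of freedom from the linear system. The other is to run the reconstruction with $(a^*,b^*-1)$ and $(a^*-1,b^*)$ and select the candidate whose degree difference equals $m$ and which is monic-normalizable.

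A fallback that certainly works is to solve the Hankel/Toeplitz linear system for $G$ directly, with $G$ monic of degree $b$ for each feasible $b$. The uniqueness proof shows that any solution with the correct degree difference is the true one, and structured Toeplitz solvers again give $O(\ell\log^2\ell)$ per attempt. To avoid paying a factor of $\ell$ for trying every $b$, the candidate $b$ should be read off from the Euclidean remainder sequence.
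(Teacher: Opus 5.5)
Your uniqueness argument is correct and is the paper's argument. You even spell out $a_1+b_2=\tfrac12\bigl((a_1+b_1)+(a_2+b_2)\bigr)\le\ell$, which the paper leaves implicit. Your recovery is also fine when $\ell\not\equiv m\pmod 2$: then $a^*+b^*=\ell-1$ and classical rational reconstruction applies verbatim.

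The gap is the boundary case $\deg F+\deg G=\ell$. In the sketch this is a pure cell holding exactly $\ell$ residual items, and for $m=\pm\ell$ it is the only possibility. It is precisely the coefficient D-RFR claims to save, and none of your options handles it.

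\begin{itemize}
\item Your second option fails here. If $(\deg F,\deg G)=(a^*,b^*)$, then $(F,G)$ satisfies neither bound $(a^*,b^*-1)$ nor $(a^*-1,b^*)$. The Euclidean algorithm on $(Z^\ell,R)$ therefore returns other Pad\'e approximants, and by your own uniqueness argument none of them can pass your checks, so you would report failure. Already $\ell=2$, $F=Z-x$, $G=Z-y$ with $x\ne y$ gives candidates with degree differences $1$ and $-1$. That option only covers $\deg G\le b^*-1$, which needed no special treatment.
\item Your first option and the Toeplitz fallback are not turned into an $O(\ell\log^2\ell)$ procedure. With $Q$ monic of degree $b^*$, the system is singular whenever $\deg G<b^*$: its solutions are $(FU,GU)$ for every monic $U$ of degree $b^*-\deg G$, so ``any solution is the true one'' needs a gcd reduction. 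Superfast Toeplitz solvers need nonsingularity or extra machinery. Reading $b$ off the remainder sequence is asserted, not shown.
\end{itemize}

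The missing idea is to use monicity and the known $m$ to change the residue, not only the degree bounds.
\begin{itemize}
\item For $m\ge0$, set $H:=F-Z^mG$. The leading terms cancel, so $\deg H\le a^*-1$. Also $H\equiv(R-Z^m)G\pmod{Z^\ell}$, $\gcd(H,G)=\gcd(F,G)=1$ and $G(0)\ne0$. Since $(a^*-1)+b^*\le\ell-1$, classical RFR on $(Z^\ell,(R-Z^m)\bmod Z^\ell)$ returns $(H,G)$ up to a scalar. Normalize $G$ to be monic and output $F=H+Z^mG$.
\item For $m<0$, the same works with $G':=G-Z^{-m}F$ (so $G'(0)=G(0)\ne0$) and residue $R\,(1-Z^{-m}R)^{-1}$, normalizing by $F$.
\end{itemize}
The paper does essentially this. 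It strips the power $Z^v$ dividing $R$ and inverts $R_0$ to reduce to $m_0\ge0$. It then reconstructs $H/G$ with $H=F_0-G$ when $m_0=0$, or $G/H$ with $H=F_0-Z^{m_0}G$ when $m_0\ge1$, via a shifted balanced RFR lemma. With this substitution a single half-gcd call suffices in every case, and your outline becomes a complete proof.
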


\paragraph{Method overview.}
Our D-RFR saves one coefficient: modulo $Z^\ell$ suffices, whereas a direct application of standard RFR would require modulo $Z^{\ell+1}$ \cite{RFR-KhodadadMonagan-2006,RationalReconstruction-CollinsEncarnacion-1995}.
Technically, given $(R,m)$, D-RFR uses elementary algebraic transformations, together with the known degree difference and monicity, to reduce the problem to a standard balanced RFR instance, which is solved by half-GCD.
Since the algorithm details are complicated yet somewhat tedious, we defer the reconstruction procedure to the Appendix \ref{app:drfr} and prove the uniqueness part only.

\begin{proof}
Suppose that $(F_0,G_0)$ and $(F_1,G_1)$ are two valid solutions. We have
$$
F_0(Z)G_1(Z)\equiv F_1(Z)G_0(Z)\pmod{Z^\ell}.
$$
Write $a_j:=\deg F_j$ and $b_j:=\deg G_j$ for $j\in\{0,1\}$.
Since $a_j-b_j=m$ and $a_j+b_j\le\ell$, we have
$$
\deg(F_0G_1)
=
a_0+b_1
=
m+b_0+b_1
=
a_1+b_0
=
\deg(F_1G_0)
\le \ell.
$$
Both products are monic, so their leading terms cancel. Hence $\deg(F_0G_1-F_1G_0)<\ell.$
Since this difference is divisible by $Z^\ell$, it must be zero. Thus,
$F_0G_1=F_1G_0$. As $F_0$ and $G_0$ are coprime, this implies
$F_0\mid F_1$ and $G_0\mid G_1$. Since $F_1$ and $G_1$ are also coprime,
the common quotient must be a constant; monicity then gives
$F_0=F_1$ and $G_0=G_1$.
\end{proof}

\subsubsection{Cell State}

Recall that each cell state $C$ represents a map
$f_C:\mathcal U^*\to\{-1,0,1\}$. 
It stores two fields: count field and polynomial field. Specifically, 
\begin{gather*}
C=(c_C,P_C),\textit{ where}\\
c_C:=\sum_{x\in\mathcal U^*}f_C(x)\pmod{2\ell+1},\qquad
P_C(Z):=
\prod_{x\in\mathcal U^*}(Z-x)^{f_C(x)}
\bmod Z^\ell.
\end{gather*}

Since $0\notin\mathcal U^*$, every factor $Z-x$ is invertible modulo
$Z^\ell$.
Thus, for every legal call of $\mathsf{Update}(C,x,\sigma)$, the cell updates
$$
c_C\gets c_C+\sigma\pmod{2\ell+1},
\qquad
P_C(Z)\gets P_C(Z)(Z-x)^\sigma\bmod Z^\ell.
$$

Likewise, $\mathsf{Subtract}(C,C')$ returns the state representing
$f_C-f_{C'}$: its count field is $c_C-c_{C'}$, and its polynomial field
is $P_C(Z)/P_{C'}(Z)\bmod Z^\ell$.

\subsubsection{Decoding}

The decoding process $\mathsf{Decode}_\ell (C)$ is a bit more sophisticated.
Given a cell state $C$, let $m\in[-\ell,\ell]$ be the unique integer
satisfying $m\equiv c_C\pmod{2\ell+1}$. We run D-RFR on
$(P_C(Z),m)$. If it does not return a valid pair of monic coprime
polynomials $(F,G)$ satisfying Theorem~\ref{thm:d-rfr} or $F,G$ does not split into linear factors, the decoder returns $\bot$.
Otherwise, it factors $F$ and $G$ and returns the roots of $F$ with sign $+1$ and the roots of $G$ with sign $-1$.

\paragraph{Claim.}
If $C$ is pure, then $\mathsf{Decode}_{\ell}(C)$ returns exactly the
remaining signed differences in $C$.

\begin{proof}
In particular, if $C$ is pure, let $\Delta^+:=\{x:f_C(x)=+1\},\Delta^-:=\{x:f_C(x)=-1\}.$

By definition we have $|\Delta^+|+|\Delta^-|\le\ell$ and 
$$
P_C(Z)=
\frac{\chi_{\Delta^+}(Z)}
     {\chi_{\Delta^-}(Z)}
\bmod Z^\ell,
\qquad
m=|\Delta^+|-|\Delta^-|.
$$

Therefore, D-RFR returns
$F=\chi_{\Delta^+}$ and $G=\chi_{\Delta^-}$, so the decoder recovers
exactly the remaining signed differences.
\end{proof}
\subsubsection{Fingerprint Verification}

Rehashing verification rejects a decoded candidate whenever one of its
elements is not mapped to the cell being tested. To further reduce the false-positive rate (i.e. probability that a non-pure cell is verified as pure), we augment each cell with an additional \emph{fingerprint} field.

Let $\fgp:\mathcal U^*\to\mathbb Z_q$ be a random hashing function independent of $h_i$,
where $q$ is odd. For a cell state $C$, we store
$$
\mathsf{fp}_C
:=
\sum_{x\in\mathcal U^*} f_C(x)\fgp(x).
$$
This field is maintained together with the other cell fields: each call
to $\mathsf{Update}(C,x,\sigma)$ adds $\sigma\fgp(x)$, and
$\mathsf{Subtract}(C,C')$ subtracts the two fingerprint values.

Suppose that $\mathsf{Decode}_{\ell}(C)$ returns a candidate signed map
$\widehat f_C$. After it passes rehashing verification, we additionally
check whether
$$
\sum_{x\in\mathcal U^*}\widehat f_C(x)\fgp(x)
=
\mathsf{fp}_C.
$$
If the check fails, simply return $\bot$. A pure cell always passes this check. Intuitively, if $\widehat f_C\ne f_C$, the check holds with probability at most $1/q$, which will be proved later. Thus, the fingerprint field reduces
the probability that an incorrect candidate is used during peeling.

\subsection{Spatial Coupling}
\label{sec:spatial-coupling}

Spatial coupling originated in coding theory as a technique for improving
iterative-decoding thresholds, and was later adapted to hashing-based data
structures in \cite{PeelingOrientabilityThreshold-Walzer-2025}.
Rather than mapping the $k$ locations of an element independently over
the whole sketch, spatial coupling places them in a common short range.
Fix a coupling parameter $z>0$. Let
$g_0:\mathcal U^*\to[0,z)$ and
$g_1,\ldots,g_k:\mathcal U^*\to[0,1)$ be mutually independent uniform
random functions. For every $x\in\mathcal U^*$, define

$$
h'_i(x):=
1+\left\lfloor
\frac{g_0(x)+g_i(x)}{z+1}M
\right\rfloor,
\qquad i\in[k].
$$

Conditioned on $g_0(x)$, all $k$ locations of $x$ lie in an interval of
length approximately $M/(z+1)$. Cells near the two ends have smaller
expected loads, and hence provide the initial pure cells from which
$\ell$-peeling propagates through the sketch.

\paragraph{Circular Trick.}
For practical performance, we propose a circularized placement rule,
under which a coupled range may wrap around from the last cell to the
first. Specifically, we let $g_0:\mathcal U^*\to[0,z+a), a\in[0,1)$ and
$g_1,\ldots,g_k:\mathcal U^*\to[0,1)$ are mutually independent uniform
random functions, and replace the above mapping by

$$
h_i^{(a)}(x):=
1+\left\lfloor
\left(
\frac{g_0(x)+g_i(x)}{z+1}
\bmod 1
\right)M
\right\rfloor,
\qquad i\in[k].
$$


\section{Theoretical Analysis}
\label{sec:analysis}

Throughout this section, $k$ and $\ell$ are fixed constants.
The size of the fingerprint field is denoted by $q\ge 3$. For convenience, we use $q=1$ to denote the version without a fingerprint.

\subsection{Complexity}

\begin{proposition}[Computational Complexity]
\label{prop:complexity}
Suppose $M=\Theta(d)$ and $|(A\setminus B)\cup (B\setminus A)|\le d$, {\fullalgo} supports each update in $\Theta(1)$ time and performs decoding in worst-case $\Theta(d\log V)$ expected time.\footnote{As in CPI and PinSketch, the expectation
arises solely from the randomized root-finding routine; all remaining
operations in the decoding procedure are deterministic.}
\end{proposition}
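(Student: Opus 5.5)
We split the claim into the update bound and the decoding bound, treating $k$, $\ell$ and the fingerprint modulus $q$ as constants throughout.

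\textbf{Update.} An insertion $\mathsf{Update}(\mathcal B[j],x,\sigma)$ evaluates $k$ hash functions. With spatial coupling this means $g_0,g_1,\ldots,g_k$ and the floor formula for $h'_i$ or $h_i^{(a)}$, which is $O(k)$ work. De-duplication of $k$ locations is $O(k^2)$. For each distinct location we do three things:
\begin{itemize}
\item add $\sigma$ to the counter modulo $2\ell+1$;
\item add $\sigma\fgp(x)$ to the fingerprint modulo $q$;
\item multiply $P_C$ by $(Z-x)^{\sigma}$ modulo $Z^\ell$.
\end{itemize}
For $\sigma=+1$ the last step is one multiplication by a linear polynomial, costing $O(\ell)$ field operations. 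For $\sigma=-1$ we use the expansion $(Z-x)^{-1}\equiv -\sum_{i<\ell}x^{-i-1}Z^i \pmod{Z^\ell}$, which is legal because $x\ne 0$. It costs one field inversion plus $O(\ell)$ multiplications, or equivalently we solve $Q\cdot(Z-x)=P$ by back-substitution from the constant term. Under the unit-cost field arithmetic convention this is $O(k\ell)=O(1)$. The $\Omega(1)$ lower bound is trivial, so an update takes $\Theta(1)$.

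\textbf{Decoding cost.} Subtracting two sketches cell by cell costs $O(M\ell\log\ell)$, since one power-series inverse and one product modulo $Z^\ell$ are needed per cell. This is $O(d)$ because $M=\Theta(d)$.

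Peeling is organized as in standard IBLT decoding. Keep a queue of candidate cells, initialized with all $M$ cells. Whenever a cell is modified, push it again. Each pop runs $\mathsf{Decode}_\ell$:
\begin{itemize}
\item compute $m$;
\item run D-RFR in $O(\ell\log^2\ell)=O(1)$ field operations, by Theorem~\ref{thm:d-rfr};
\item check that $F$ and $G$ are monic and coprime;
\item find their roots, then verify rehashing and the fingerprint in $O(k\ell)$.
\end{itemize}
Root finding for a polynomial of degree at most $\ell$ over $\mathbb F_p$ is done with randomized equal-degree splitting (Cantor--Zassenhaus), after computing $\gcd(F,Z^p-Z)$. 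This takes $O(\log p)$ multiplications modulo a degree-$\le\ell$ polynomial, hence expected $O(\ell^{O(1)}\log V)=O(\log V)$ time per cell. A failed split can also be detected in $O(\log V)$, because if $\deg\gcd(F,Z^p-Z)<\deg F$ we return $\bot$. This expectation is the only source of randomness in the running time, matching the footnote.

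\textbf{Counting decode attempts.} Every successful peel removes at least one difference element from the residual, so at most $d$ peels occur. Each peel updates at most $k\ell$ cells, each in $O(1)$, and pushes at most $k\ell$ cells back onto the queue. The total number of $\mathsf{Decode}_\ell$ invocations is therefore at most $M+k\ell d=O(d)$. Each invocation costs expected $O(\log V)$, so decoding takes $O(d\log V)$ expected time in the worst case over inputs, by linearity of expectation.

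For the matching lower bound we argue that any execution runs decode on all $M=\Theta(d)$ cells initially. With root finding costing $\Theta(\log V)$ per nontrivial cell, for instance when $d$ elements leave $\Theta(d)$ nonempty cells, we get $\Theta(d\log V)$.

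\textbf{Main obstacle.} The delicate step is the uniform $O(\log V)$ bound per decode attempt, including on non-pure cells where D-RFR may produce garbage. To handle this we ensure that every stage aborts within fixed budgets:
\begin{itemize}
\item D-RFR is deterministic with bounded cost;
\item the degree and coprimality checks are $O(1)$;
\item root finding is applied only to polynomials of degree at most $\ell$, after the $\gcd(\cdot,Z^p-Z)$ test.
\end{itemize}
The expected-time bound then holds for every cell state, and the total is $O(d\log V)$ regardless of whether decoding ultimately succeeds.
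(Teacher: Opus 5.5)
Your update analysis and your accounting for decoding follow the paper's proof, with more implementation detail: at most $M+O(d)$ calls to $\mathsf{Decode}_\ell$, each costing deterministic $O(1)$ D-RFR work plus expected $O(\log V)$ root finding, and $\Omega(d)$ nonempty decodings for the lower bound. The gap is the step ``every successful peel removes at least one difference element from the residual, so at most $d$ peels occur.'' You use it to claim the bound holds ``regardless of whether decoding ultimately succeeds.'' That premise is true only in executions with no false positive.

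Once a non-pure cell is accepted, the peel injects spurious entries instead of removing true ones, and the process need not terminate. For instance, suppose a non-pure cell $i$ is wrongly decoded as $\{(x,+1)\}$, with $\Gamma(x)=\{i,j\}$ and cell $j$ empty. Peeling $x$ makes cell $j$ a genuinely pure cell containing $(x,-1)$. Decoding and peeling $j$ restores both cells exactly to their earlier states and pushes $i$ again, so the same false positive repeats indefinitely. Root-finding outputs are deterministic, so this is a real infinite loop. The proposition's bound is worst-case over everything except the root-finding coins, so such a configuration must be covered. Your per-attempt analysis of ``garbage'' cells does not fix this: the problem is the number of attempts, not the cost of each.

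The paper's proof counts attempts the same way for valid executions. It makes the bound worst-case through an explicit safeguard in Algorithm~\ref{alg:pseudo:decode}: the decoder counts queue pops and returns FAIL once $t>M+kd$. A valid execution has at most $M+kd$ pops (the $M$ initial indices plus $k$ pushes per recovered element), so the cap never fires there. In every other execution the decoder aborts after $O(d)$ attempts. Add that cap to your decoder; your looser $M+k\ell d$ works equally well. The sequence of attempts is then bounded and fixed by the input and the hash functions, and linearity of expectation gives the worst-case expected $O(d\log V)$ bound. The rest of your argument then goes through.
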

\begin{proof}
For each inserted item $x$, the algorithm updates at most $k$ cells .
Within one cell, it updates a signed counter $c$, a fingerprint $fp$, and the
 polynomial $P(Z)\leftarrow P(Z)(Z-x)\bmod Z^\ell$.
The latter operation takes $\Theta(\ell)$ field operations. Hence one update costs $\Theta(k\ell)=\Theta(1)$ time.

For decoding, we first verify all $M$ cells and insert every currently
pure cell into a queue.
Whenever a cell is decoded, it recovers at most $\ell$ residual items.
Each recovered item is removed from the $k$ cells in its support, which can be
recomputed from the item itself.
Since every differing item is recovered at most once, the total number of such cell updates is at most $kd$.

As a cell is tested only initially or after one of its incident items has been removed, the total number of decoding attempts is at most $M+kd=O(d)$.
Conversely, successfully decoding an instance with $d$ different items requires at least $d/\ell=\Omega(d)$ nonempty cell decodings.
For a fixed $\ell$, one decoding attempt invokes D-RFR and factors constant-degree polynomials over $\mathbb{F}_p$, which takes $\Theta(\log V)$ expected time.
Overall, the total expected decoding time is $\Theta(d\log V)$.
\end{proof}

\begin{proposition}[Bit-level Space Complexity]
\label{prop:bit-space}
One {\fullalgo} uses
$$
M\left(
\ell\left\lceil\log_2 V\right\rceil
+
\left\lceil\log_2(2\ell+1)\right\rceil
+
\left\lceil\log_2 q\right\rceil
\right)
$$
bits, excluding the shared descriptions of the hash functions.

Equivalently, its size is $M\bigl(\ell\log V+\log q+O(\log\ell)\bigr)$ bits.
\end{proposition}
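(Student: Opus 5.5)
The proof is a direct count of the stored bits: the sketch is just an array of $M$ cells, so it suffices to count what one cell stores and multiply by $M$. By the cell-state definition in Section~\ref{sec:d-rfr}, a cell $C$ consists of three fields: the count field $c_C$, the polynomial field $P_C$, and the fingerprint $\mathsf{fp}_C$. Nothing else is persistent. The hash functions $h_i$ (or $h_i^{(a)}$ under spatial coupling) and $\fgp$ are shared public randomness, so they are excluded by the statement.

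Each field is counted separately:

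\begin{itemize}
\item \emph{Count field.} $c_C$ is an element of $\mathbb Z_{2\ell+1}$. It is encoded by its canonical representative, which takes $\lceil\log_2(2\ell+1)\rceil$ bits.
\item \emph{Polynomial field.} $P_C(Z)$ is defined as a residue modulo $Z^\ell$, so $\deg P_C<\ell$. It is therefore determined by its $\ell$ coefficients in $\mathbb F_p$, where $p=V$. Each coefficient is written in $\lceil\log_2 V\rceil$ bits, for $\ell\lceil\log_2 V\rceil$ bits in total.
\item \emph{Fingerprint field.} $\mathsf{fp}_C\in\mathbb Z_q$ takes $\lceil\log_2 q\rceil$ bits. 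With the convention $q=1$ for the version without a fingerprint, this term is $0$, which is consistent.
\end{itemize}

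Summing the three fields and multiplying by $M$ gives the displayed formula.

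For the equivalent form, use $\lceil x\rceil\le x+1$. This gives $\ell\lceil\log_2 V\rceil=\ell\log_2 V+O(\ell)$ and $\lceil\log_2 q\rceil=\log_2 q+O(1)$. Since $\ell$ is a fixed constant, these rounding overheads, together with $\lceil\log_2(2\ell+1)\rceil=O(\log\ell)$, fit inside the stated $O(\log\ell)$ term. Strictly speaking, the $O(\ell)$ rounding slack from the coefficients exceeds $O(\log\ell)$ as a function of $\ell$. I would either absorb it into $\ell\log V$, since it is an $O(1/\log V)$ relative error, or avoid it by packing all $\ell$ coefficients jointly into $\lceil\ell\log_2 V\rceil$ bits.

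There is no real obstacle here. The only point needing care is that the bound is exact per cell. That is, the cell state (count, $P_C$, fingerprint) already suffices for $\mathsf{Update}$, $\mathsf{Subtract}$, $\mathsf{Decode}_\ell$ and verification, so no extra per-cell metadata is needed. Rehashing verification uses only the shared hashes, and $m$ is recovered from $c_C$ alone because its representative in $[-\ell,\ell]$ is unique modulo $2\ell+1$.
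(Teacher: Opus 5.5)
Your proposal is correct and follows the same route as the paper: count the $\ell$ coefficients of $P_C$ at $\lceil\log_2 V\rceil$ bits each, the count field in $\mathbb Z_{2\ell+1}$, and the fingerprint in $\mathbb Z_q$, then multiply by $M$. Your extra observation is a fair refinement that the paper's proof does not address: the $O(\ell)$ rounding slack in the ``equivalent'' form is not literally $O(\log\ell)$, but it can be absorbed into $\ell\log V$ or removed by joint packing.
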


\begin{proof}
Each cell stores a polynomial field $P_C(Z)\bmod Z^\ell$, represented by its
$\ell$ coefficients in $\mathcal U$, which requires
$\ell\lceil\log_2 p\rceil$ bits. Its count field lies in
$\mathbb Z_{2\ell+1}$ and requires $\lceil\log_2(2\ell+1)\rceil$ bits. Its
fingerprint field lies in $\mathbb Z_q$ and requires
$\lceil\log_2 q\rceil$ bits. Multiplying the resulting per-cell cost by the
number $M$ of cells proves the claim.
\end{proof}


\subsection{Pure-Cell Verification}
\label{sec:verification}

In this part, we show that in the random label setting discussed below, rehashing verification alone yields a global false-positive probability of $O(M^{1-\ell})$.
To further suppress this error at small $d$, it is suggested to use an independent fingerprint additionally, reducing the bound to $O(M^{1-\ell}/q)$.
Throughout this subsection, we assume $M=\Theta(d)$.

\paragraph{Random Label.}
Fix $d_+,d_-\ge 0$ with $d_++d_-\le d$. We analyze the model in which
$(A\setminus B,B\setminus A)$ is sampled uniformly from all disjoint ordered
pairs of sizes $(d_+,d_-)$.
Note that arbitrary input instances can be reduced to the random setting simply by applying a random public bijection $\phi:\mathcal U^*\to\mathcal U^*$ before sketching and applying $\phi^{-1}$ after decoding.
We treat $\phi$, like the hashing functions $h$, as shared public randomness and exclude its description from communication.

\paragraph{Analysis of Rehashing}
Let $N:=|\mathcal U^*|=V-1$. We equivalently expose the randomness as
follows. Each original element $x$ first receives its placement support
$\Gamma(x)$, and an independent uniformly random bijection $\phi$ then assigns its field label.
For every cell $j$, $\Pr[j\in\Gamma(x)]\le C_0/M$, where the constant $C_0$ depends only on the fixed placement method. Assume throughout that $M=\Theta(d);\ d\le N/2$.
For a signed map $f:\mathcal U^*\to\{-1,0,1\}$, define
$\Psi_f(Z):=\prod_{x\in\mathcal U^*}(Z-\phi(x))^{f(x)}\bmod Z^\ell$.

\begin{lemma}
\label{lem:poly-spread}
Let $f\ne g$ be fixed signed maps with
$|\operatorname{supp}(f)\cup\operatorname{supp}(g)|\le d+\ell$. Then
$$
\Pr_\phi[\Psi_f(Z)=\Psi_g(Z)]=O(N^{-\ell}).
$$
\end{lemma}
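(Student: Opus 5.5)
Let $h:=f-g$; since $f\ne g$, $h\ne0$, and $h$ takes values in $\{-2,\dots,2\}$ on a support $S:=\operatorname{supp}(h)\subseteq\operatorname{supp}(f)\cup\operatorname{supp}(g)$ with $1\le|S|\le d+\ell$. Because every factor $Z-\phi(x)$ is a unit modulo $Z^\ell$, the event $\Psi_f=\Psi_g$ is equivalent to
\[
\prod_{x\in S}\bigl(Z-\phi(x)\bigr)^{h(x)}\equiv 1\pmod{Z^\ell}.
\]
The plan is to show this rarely holds by exposing the labels of all but a few elements of $S$ and using the remaining labels to pin down the $\ell$ coefficients.

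\textbf{Reduction to power sums.} Dividing each factor by $-\phi(x)$, it is convenient to pass to logarithms. Write $Z-y=-y(1-Z/y)$. The condition becomes a scalar equation $\prod_x(-\phi(x))^{h(x)}=1$ together with $\sum_x h(x)\log(1-Z/\phi(x))\equiv 0\pmod{Z^\ell}$ in the coefficients of $Z^1,\dots,Z^{\ell-1}$, i.e.
\[
\sum_{x\in S}h(x)\,\phi(x)^{-j}=0\qquad (1\le j\le \ell-1).
\]
This is valid whenever $p>\ell$, which we may assume as $V\to\infty$. Together with the constant-term equation this gives $\ell$ polynomial constraints on the labels. (If the logarithm route feels delicate, the same system comes from Newton's identities on the truncated rational function.)

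\textbf{Counting.} Condition on the labels of all elements of $S$ except a set $T$ of $\min(|S|,\ell)$ elements. Then $\phi$ restricted to $T$ is uniform over injective assignments into the remaining $N-|S|+|T|\ge N/2-O(1)$ labels, since $d\le N/2$. Substitute $u_x:=\phi(x)^{-1}$.

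\emph{Case $|S|\ge\ell$.} The system consists of $\ell$ equations in the $\ell$ unknowns $(u_x)_{x\in T}$, formed by the constant-term equation and the power sums of degree $1,\dots,\ell-1$, all with nonzero coefficients $h(x)\in\{\pm1,\pm2\}$. The key claim is that this system has only $O_\ell(1)$ solutions. The Jacobian of the power-sum part is a weighted Vandermonde matrix, nonsingular on distinct $u_x$, and by Bézout the number of isolated solutions is at most $\ell!$. We need to rule out positive-dimensional components, either by showing that such a component would force coinciding $u_x$ (excluded by injectivity) or by a Schwartz–Zippel-style sequential argument.

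\emph{Case $|S|<\ell$.} Then $T=S$ and no other labels are involved. The identity $\prod(Z-\phi(x))^{h(x)}\equiv1\pmod{Z^\ell}$ with total degree $\sum|h(x)|\le 2|S|$ can be handled by D-RFR-style uniqueness (Theorem~\ref{thm:d-rfr}) when $2|S|\le\ell$. When the degree is larger, we instead use the power-sum equations of degree up to $|S|$ to show there are no solutions, or only $O(1)$ of them, among distinct labels. The probability bound then becomes $O(N^{-|S|})$, which may be weaker than $N^{-\ell}$. So this case must either be shown to have no solutions at all or be controlled carefully. This is where I expect difficulty.

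Given the $O_\ell(1)$ bound on solutions, each solution is hit with probability $O(N^{-\ell})$, which yields the lemma.

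\textbf{Main obstacle.} The hard part is bounding the number of solutions of the sparse weighted power-sum system uniformly in the weights $h\in\{\pm1,\pm2\}$, including excluding degenerate cancellation. Small $|S|$ is the worst case. I would first try to show that for $|S|\le\ell$ the equations force the multiset identity $\{\phi(x)\}$ with signs to cancel, which is impossible for distinct labels and nonzero $h$ by Newton's identities in characteristic $p>2\ell$. This would make the small case probability zero.
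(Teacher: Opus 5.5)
Your overall route matches the paper's: pass to the power sums $\sum_x h(x)\phi(x)^{-j}=0$ plus the constant-term equation, condition on all but $\ell$ labels, and count admissible solutions by a Jacobian/B\'ezout argument. However, the two steps that actually carry the proof are left open.

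\textbf{The case $|S|\ge\ell$.} The Jacobian you cite is that of the $\ell-1$ power-sum equations only. It is an $(\ell-1)\times\ell$ matrix, and its full row rank says nothing about isolatedness. It is exactly what the one-dimensional components you say you ``need to rule out'' would look like. The missing computation is the constant-term row. Write that equation as $H_0:=Q(u)-c_0P(u)$ with $P=\prod_{h_i>0}u_i^{h_i}$ and $Q=\prod_{h_i<0}u_i^{-h_i}$. At every solution, $\partial H_0/\partial u_i=-h_iQ(u)/u_i$, while $\partial H_j/\partial u_i=jh_iu_i^{j-1}$. Now multiply column $i$ by $u_i/h_i$, the constant-term row by $-Q(u)^{-1}$ (nonzero since $Q=c_0P$), and row $j$ by $j^{-1}$ (using $p>\ell$). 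This turns the full $\ell\times\ell$ Jacobian into the Vandermonde matrix $\bigl(u_i^{j}\bigr)_{0\le j\le\ell-1,\,1\le i\le\ell}$, which is nonsingular for distinct $u_i$. Hence every admissible solution is a nonsingular, isolated common zero, and the refined B\'ezout bound on isolated zeros applies directly. Positive-dimensional components need not be excluded, because they cannot contain admissible points. The bound is $2\ell\,(\ell-1)!$, not $\ell!$: with weights $\pm2$ the cleared constant-term equation has degree up to $2\ell$. This does not affect $O_\ell(1)$.

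\textbf{The case $|S|<\ell$.} This case is not handled at all, and as you note it is essential: any solution there would give probability $\Theta(N^{-|S|})$ and break the lemma. The missing idea is to treat the \emph{weights} as the unknowns. With $s:=|S|\le\ell-1$, the equations $\sum_{i=1}^{s}h_iu_i^{j}=0$ for $j=1,\dots,s$ are all available. They form the linear system $\bigl(u_i^{j}\bigr)_{1\le j,i\le s}\,(h_1,\dots,h_s)^{\mathsf T}=0$, whose determinant $\prod_i u_i\prod_{i<i'}(u_{i'}-u_i)$ is nonzero because the labels are distinct and nonzero. So $h\equiv0$ in $\mathbb F_p$, which is impossible since $h_i\in\{\pm1,\pm2\}$ and $p>2$. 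The event therefore has probability $0$, with no need for D-RFR uniqueness or Newton identities.

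Your hoped-for strengthening to $|S|\le\ell$ is false. Take $\ell=2$, $h=(1,1)$, and labels $a,-a$ with $a^2=-1$ (when $-1$ is a square in $\mathbb F_p$). Then $(Z-a)(Z+a)\equiv1\pmod{Z^2}$. So at $|S|=\ell$ solutions genuinely exist and must be counted, not excluded.
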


We defer the proof to Appendix~\ref{app:missing} due to its complexity.



\begin{lemma}
\label{lem:rehash-error}
The probability that a non-pure cell passes rehashing verification at any point before the first erroneous peeling step is $O(M^{1-\ell})$.
\end{lemma}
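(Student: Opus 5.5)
We argue by a union bound over all pairs (cell $j$, time step before the first erroneous peel) combined with Lemma~\ref{lem:poly-spread}. Before the first erroneous peeling step, every cell state $C$ encodes the true residual signed map $f_C$, obtained from the original difference by removing correctly recovered items. Hence the set of states a given cell can take is controlled. Each cell is tested at most $1+(\text{number of removals touching it})$ times, so there are at most $M+kd=O(M)$ decoding attempts overall. It therefore suffices to show that any single attempt on a non-pure cell is accepted with probability $O(M^{-\ell})$.

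Fix a test of cell $j$ whose true residual map is $f=f_C$ with $|\operatorname{supp} f|>\ell$. Suppose $\mathsf{Decode}_\ell(C)$ returns $\widehat f\neq f$ with $|\operatorname{supp}\widehat f|\le\ell$ and count matching $c_C$. By the uniqueness part of Theorem~\ref{thm:d-rfr}, $\widehat f$ is then determined by $\Psi_f$ alone, since $\Psi_{\widehat f}=\Psi_f$. For acceptance, all $\le\ell$ recovered labels must be labels of items $y$ with $j\in\Gamma(y)$. We split into two cases according to whether the recovered labels belong to difference elements.

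\emph{Case 1: some recovered label is not the label of any difference element.} For each candidate $g$ with $\operatorname{supp}g$ of size $\le\ell$, we bound $\Pr_\phi[\Psi_f=\Psi_g]$. Let $g$ range over signed maps on at most $\ell$ field points, and expose $\phi$ only on the original elements. If $g$ uses a label not belonging to a difference element, then by symmetry of $\phi$ that label sits at a uniformly random unused position. Rehashing would require it to be mapped to $j$, and it is not even in the difference. In fact such a non-difference item has $f=0$, so its label appears in $\widehat f$ only if the polynomial identity holds, which Lemma~\ref{lem:poly-spread} bounds by $O(N^{-\ell})$ per fixed $g$. Summing over the at most $N^{\ell}$-ish such $g$ is too lossy, so the cleaner route is the next case analysis.

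\emph{Case 2: the recovered support consists of difference elements.} Then $\widehat f$ is supported on a set $S$ of at most $\ell$ elements of the difference $D$, each satisfying $j\in\Gamma(x)$. For a fixed $S\subseteq D$ with $|S|=s\le\ell$ and fixed signs, the event requires two things. First, the placement event that all $x\in S$ hit $j$, which has probability $\le (C_0/M)^s$ by independence of supports. Second, the label event $\Psi_f=\Psi_{\widehat f}$, which has probability $O(N^{-\ell})$ by Lemma~\ref{lem:poly-spread}, since $f\ne\widehat f$ and the union of supports has size $\le d+\ell$. These two events are independent because $\phi$ is independent of placement, up to the dependence of $f$ on placement, which we handle by first conditioning on all supports. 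Summing over $S$ gives $\sum_{s\le\ell}\binom{d}{s}\cdot O(N^{-\ell})=O(d^\ell N^{-\ell})=O(1)$, which is useless. So the two conditions must be combined differently: when $\widehat f$ is a genuine sub-map, we instead use that $f$ non-pure forces $\ge\ell+1$ difference items in cell $j$. For a non-pure test to be reached we then bound the placement probability that cell $j$ holds $>\ell$ items, which is $O(M^{-\ell})\cdot$const via $\binom{d}{\ell+1}(C_0/M)^{\ell+1}=O(1)$ per cell. This too is weak, so the real bound must come from Lemma~\ref{lem:poly-spread} applied with the labels of $S$ free.

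The main obstacle is exactly this accounting: making the union bound over candidates $\widehat f$ cost only $O(1)$ rather than $\binom{N}{\ell}$ or $\binom{d}{\ell}$. The plan is to condition on placements, so that $f$ is fixed with $|\operatorname{supp}f|=t>\ell$. We then observe that acceptance requires $\widehat f$ to be supported inside the labels of $\Gamma^{-1}(j)\cap D$, a set of size $O(1)$ in expectation given $M=\Theta(d)$. For each of these $O(1)$ candidates $g$, Lemma~\ref{lem:poly-spread} gives $O(N^{-\ell})\le O(M^{-\ell})$, using $N\ge M$. This yields an expected $O(M^{-\ell})$ per attempt, and multiplying by $O(M)$ attempts gives $O(M^{1-\ell})$.
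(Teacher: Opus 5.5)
Your final paragraph rests on a false step. Rehashing verification only checks $j\in\Gamma(x)$ for each recovered label. That holds for every element of $\Gamma^{-1}(j)=\{x\in\mathcal U^*: j\in\Gamma(x)\}$, not only for elements of the difference. So acceptance does \emph{not} require $\widehat f$ to be supported in $\Gamma^{-1}(j)\cap D$. A wrong candidate can be supported on elements of $A\cap B$, or of $\mathcal U^*\setminus(A\cup B)$, that happen to hash to $j$. There are $K_j=|\Gamma^{-1}(j)\setminus D|$ such elements, with mean $\Theta(N/M)$ rather than $O(1)$.

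Restricting to difference elements therefore silently drops your Case~1, which you never resolved, and that case is the dominant one. One sign of this: with only $O(1)$ candidates per test, your argument would give $O(N^{-\ell})$ per attempt and $O(MN^{-\ell})$ overall, far stronger than the lemma. The weakening $N^{-\ell}\le M^{-\ell}$ hides the fact that the $M$-dependence of the true bound comes precisely from the non-difference candidates.

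The missing idea is that the count you dismissed in Case~1 as ``$N^{\ell}$-ish, too lossy'' is much smaller, because rehashing confines the support to $\Gamma^{-1}(j)$. The number of candidates is $\sum_{t\le\ell}2^t\binom{|\Gamma^{-1}(j)|}{t}=O\bigl((1+L_j+K_j)^\ell\bigr)$. Since $K_j$ is Poisson-binomial with mean $O(N/M)$ and independent of $L_j$, its $\ell$-th moment is $O\bigl((1+N/M)^\ell\bigr)$. Multiplying by the $O(N^{-\ell})$ collision bound of Lemma~\ref{lem:poly-spread} gives $O(M^{-\ell})$ per (cell, state), and summing over the $M$ cells gives $O(M^{1-\ell})$. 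This is the paper's argument.

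There is a second, smaller gap in your union bound over the $M+kd$ attempts. You apply Lemma~\ref{lem:poly-spread} to the residual map tested at attempt $t$. That map can depend on $\phi$, because the queue order depends on the order in which recovered labels are peeled. The lemma is stated for \emph{fixed} $f\ne g$, so it does not apply directly to an adaptively chosen state. The paper avoids this by union-bounding over all at most $2^{L_j}$ sub-maps of cell $j$'s initial difference content, which are fixed once placements are fixed. It then absorbs this factor using $\mathbb E[2^{L_j}(1+L_j)^\ell]=O(1)$.
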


\begin{proof}
For each cell $j$, let
$L_j:=|\Gamma^{-1}(j)\cap((A\setminus B)\cup(B\setminus A))|$, where
$\Gamma^{-1}(j):=\{x:j\in\Gamma(x)\}$.
Before the first erroneous step, the residual content of cell $j$ is obtained by removing a subset of these $L_j$ elements and has at most $2^{L_j}$ possible values.

If a candidate $(S,T)$ passes rehashing in cell $j$, then
$S\cup T\subseteq\Gamma^{-1}(j)$ and $|S|+|T|\le\ell$.
Hence, the number of such $(S,T)$ is at most
$\sum_{t=0}^{\ell}2^t\binom{|\Gamma^{-1}(j)|}{t}
=O((1+|\Gamma^{-1}(j)|)^\ell)$ such candidates.
Therefore, Lemma~\ref{lem:poly-spread} and a union bound give
$$
\Pr\!\left[
\text{an incorrect $(S,T)$ passes rehashing at any point}
\right]
\le
O(N^{-\ell})
\sum_{j=1}^M
2^{L_j}(1+|\Gamma^{-1}(j)|)^\ell.
$$

The variables $L_j$ and
$K_j:=|\Gamma^{-1}(j)\setminus((A\setminus B)\cup(B\setminus A))|$
are independent Poisson-binomial variables with means
$O(d/M)=O(1)$ and $O(N/M)$, respectively. Since $\ell$ is fixed,
standard moment bounds imply
\begin{gather*}
\mathbb E[2^{L_j}(1+L_j)^\ell]\le O(\mathbb E[3^{L_j}])=O(1)\\
\mathbb E[2^{L_j}(1+L_j+K_j)^\ell]\le \mathbb E[2^{L_j}(1+L_j)^\ell]\cdot \mathbb E[(1+K_j)^\ell]\
=O((1+N/M)^\ell)
\end{gather*}
Consequently, the probability is at most $O\left(
N^{-\ell}M\left(1+\frac NM\right)^\ell
\right)
=
O(M^{1-\ell}).$
\end{proof}

\begin{lemma}
\label{lem:fingerprint-error}
For every fixed pair of signed maps $f\ne g$, their fingerprints
agree with probability $q^{-1}$, independently of the polynomial
collision event.
\end{lemma}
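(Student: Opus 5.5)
The plan is to condition on all randomness other than $\fgp$ and then use a single coordinate of the difference $f-g$ to show that the fingerprint difference is exactly uniform on $\mathbb Z_q$.

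\emph{Setting up the fingerprint event.} As in the analysis of rehashing, I interpret $f,g$ as signed maps on the original elements, with labels assigned by $\phi$. The fingerprints are then $\sum_x f(x)\fgp(\phi(x))$ and $\sum_x g(x)\fgp(\phi(x))$. Both are linear in $\fgp$, so agreement is the event
$$
\sum_{x}D(x)\,\fgp(\phi(x))\equiv 0 \pmod q,\qquad D:=f-g .
$$
Here $D$ is a nonzero map with values in $\{-2,-1,0,1,2\}$.

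\emph{Conditioning on $\phi$ and the hashes.} I will condition on the placement hashes and on the bijection $\phi$. The polynomial collision event $\{\Psi_f=\Psi_g\}$, and any rehashing event, is a function of these variables alone. Since $\fgp$ is a uniform random function independent of them, $\fgp\circ\phi$ is still, conditionally on $\phi$, a uniform random function with mutually independent uniform values in $\mathbb Z_q$.

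\emph{The key step.} Fix $x_0$ with $D(x_0)\neq 0$, and further condition on the values $\fgp(\phi(x))$ for all $x\neq x_0$. The event then reads $D(x_0)\,\fgp(\phi(x_0))\equiv c\pmod q$ for a fixed constant $c$. Because $q$ is odd and $|D(x_0)|\in\{1,2\}$, the coefficient $D(x_0)$ is invertible modulo $q$. Hence $u\mapsto D(x_0)u$ is a bijection of $\mathbb Z_q$, and since $\fgp(\phi(x_0))$ is uniform the event has conditional probability exactly $q^{-1}$.

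\emph{Concluding.} Averaging over the remaining conditioned values gives probability exactly $q^{-1}$ conditionally on $\phi$ and the hashes. Because this conditional probability does not depend on the conditioned variables, the fingerprint event is independent of the polynomial collision event, which proves the lemma. The case $q=1$, meaning no fingerprint, holds trivially.

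The only point needing care is oddness of $q$, which is what makes $D(x_0)=\pm2$ invertible. This explains why the paper requires $q$ odd, since $f-g$ can take the value $\pm 2$ when $f(x)=1$ and $g(x)=-1$.
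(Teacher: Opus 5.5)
Your proof is correct and is the paper's argument: write fingerprint agreement as $\sum_x (f(x)-g(x))\,h_{\mathrm{fp}}(x)\equiv 0 \pmod q$, pick $x_0$ with $f(x_0)\ne g(x_0)$, note that its coefficient lies in $\{-2,-1,1,2\}$ and is a unit because $q$ is odd, and fix all other fingerprint values so that exactly one value of $h_{\mathrm{fp}}(x_0)$ satisfies the equation. Your explicit conditioning on $\phi$ and the placement hashes spells out the independence from the polynomial collision event, which the paper leaves implicit.
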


\begin{proof}
The case of $q=1$ is trivial. Otherwise, choose $x_0$ with $f(x_0)\ne g(x_0)$. Fingerprint equality is equivalent to
$$
\sum_x(f(x)-g(x))h_{\mathrm{fp}}(x)=0.
$$
The coefficient of $h_{\mathrm{fp}}(x_0)$ belongs to
$\{-2,-1,1,2\}$ and is a unit in $\mathbb Z_q$ because $q$ is odd.
After fixing all other fingerprint values, there is exactly one value of $h_{\mathrm{fp}}(x_0)$ which satisfies the equation.
\end{proof}

\begin{proposition}
\label{prop:false-positive}
The probability that a false positive occurs during decoding is
$O(M^{1-\ell}/q)$.
\end{proposition}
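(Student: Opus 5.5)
The plan is to reuse the union-bound argument from the proof of Lemma~\ref{lem:rehash-error} verbatim, but to charge each incorrect candidate the product of two collision probabilities rather than one. A false positive means that some cell $j$, at some point before the first erroneous peeling step, holds residual content $f$ with $|\operatorname{supp}(f)|>\ell$ or otherwise $f\neq\widehat f$. It also means that a candidate $\widehat f$, supported on $(S,T)$ with $|S|+|T|\le \ell$, passes all three checks:
\begin{itemize}
\item rehashing, i.e.\ $S\cup T\subseteq\Gamma^{-1}(j)$;
\item polynomial agreement $\Psi_{\widehat f}=\Psi_f$, which is forced because D-RFR output must match $P_C$;
\item fingerprint agreement.
\end{itemize}
It suffices to bound the first such event, because before it every peeling step is correct. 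This means the residual contents of each cell range over the at most $2^{L_j}$ sub-signed-maps of its original differing elements, exactly as in Lemma~\ref{lem:rehash-error}.

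Fix the placement supports $\Gamma(\cdot)$. Then fix a cell $j$, one of its $2^{L_j}$ possible residual contents $f$, and one of the $O((1+|\Gamma^{-1}(j)|)^\ell)$ candidates $\widehat f\ne f$ permitted by rehashing. By Lemma~\ref{lem:poly-spread} (the union of supports has size at most $d+\ell$), $\Pr_\phi[\Psi_{\widehat f}=\Psi_f]=O(N^{-\ell})$. By Lemma~\ref{lem:fingerprint-error}, the fingerprint collision has probability $q^{-1}$ independently of the polynomial event, since $h_{\mathrm{fp}}$ is independent of both $\phi$ and the $h_i$. The joint probability for this fixed triple is therefore $O(N^{-\ell}/q)$. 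A union bound gives
$$
\Pr[\text{false positive}]\le O\!\left(\frac{N^{-\ell}}{q}\right)\sum_{j=1}^M \mathbb E\!\left[2^{L_j}(1+L_j+K_j)^\ell\right],
$$
and the moment estimates already derived in the proof of Lemma~\ref{lem:rehash-error} bound the sum by $O(M(1+N/M)^\ell)$. The total is $O(M^{1-\ell}/q)$, using $M=\Theta(d)\le N$.

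The step needing the most care is the independence claim. The candidate $\widehat f$ is produced by D-RFR from $P_C$, so it depends on $\phi$. It also depends on which residual state the adaptive peeling reaches, and that state depends on the fingerprints through earlier acceptances. I would handle this by the same device used in Lemma~\ref{lem:rehash-error}: do not condition on the decoder's trajectory at all. Instead, union-bound over all a priori possible (cell, residual content, candidate) triples. Each triple is determined by $\Gamma$ and combinatorial indices alone, and the event "$\Psi$ collides \emph{and} fingerprints collide" is then a fixed event whose probability factors by Lemma~\ref{lem:fingerprint-error}. Restricting attention to times before the first error ensures that every actual false positive is covered by one of these triples.
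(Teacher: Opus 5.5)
Your proposal is correct and follows the paper's own argument. The paper also reruns the union bound from Lemma~\ref{lem:rehash-error} and charges each fixed incorrect (residual content, candidate) pair an extra independent factor $q^{-1}$ via Lemma~\ref{lem:fingerprint-error}; your explicit treatment of adaptivity, union-bounding over triples fixed by $\Gamma$ before the first erroneous peeling step, just spells out what the paper's two-line proof leaves implicit.
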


\begin{proof}
In the union bound of Lemma~\ref{lem:rehash-error}, every fixed incorrect
pair incurs an additional independent factor $q^{-1}$ by
Lemma~\ref{lem:fingerprint-error}. The same counting argument therefore
gives $O(M^{1-\ell}/q)$.
\end{proof}


\subsection{Space Overhead Analysis}
\label{sec:space-overhead}

The intuition of spatial coupling is as follows: without coupling, peeling succeeds only below the threshold $c^{\mathsf{peel}}_{k,\ell}$; spatial coupling raises this threshold arbitrarily close to the larger orientability threshold $c^{\mathsf{orient}}_{k,\ell}$.
We analyze only the conventional coupling rule $h_i'$ in
Section~\ref{sec:spatial-coupling} here;
the circular trick will be discussed later.

\begin{lemma}[Coupled Peeling]
\label{prop:coupled-peeling}
Fix $k,\ell$ with $k\ge2$ and $k+\ell\ge4$, and let
$c<c^{\mathsf{orient}}_{k,\ell}$. For every sufficiently large fixed $z>0$, let
$$
M=\left\lceil \left(1+\frac1z\right)\frac{d}{c}\right\rceil.
$$
Then, for every $t\le d$, the hypergraph induced by the spatial coupling rule of $t$ independently sampled edges is $\ell$-peelable w.h.p..
\end{lemma}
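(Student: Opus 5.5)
We plan to reduce the statement to Walzer's threshold-saturation theorem for spatially coupled hypergraphs \cite{PeelingOrientabilityThreshold-Walzer-2025}, which we take as a black box. In its standard form that result says: for $k\ge2$, $k+\ell\ge4$, every $c<c^{\mathsf{orient}}_{k,\ell}$ and every sufficiently large coupling width $z$, the coupled random hypergraph is $\ell$-peelable w.h.p. Here the coupled hypergraph has $n$ vertices arranged on a line, $m$ edges, and each edge gets a uniform window position in $[0,z)$ and $k$ independent uniform offsets in a window of relative length $1/(z+1)$. Walzer's result is phrased with edge density measured relative to the ``effective'' number of vertices. Our first step is therefore bookkeeping: identify our rule $h'_i$ with Walzer's model and translate densities. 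In our rule the positions $(g_0+g_i)/(z+1)$ cover $[0,1)$, but the window start ranges over a length $z/(z+1)$ fraction. So the natural density is $t/(M\cdot z/(z+1))$, or equivalently $t(1+1/z)/M$. With $M=\lceil(1+1/z)d/c\rceil$ and $t\le d$ this density is at most $c$. That explains exactly the factor $(1+1/z)$ in $M$, and Walzer's theorem applies at density $\le c<c^{\mathsf{orient}}_{k,\ell}$. We will verify that the discretization $\lfloor\cdot\,M\rfloor$ matches Walzer's discretized model up to negligible boundary effects, or else cite his continuous-to-discrete remark.

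The second step handles ``for every $t\le d$'' rather than only $t=d$. Here we use monotonicity. $\ell$-peelability is closed under deleting edges: any peeling sequence of $H$ restricted to a subhypergraph remains valid, since degrees only decrease and vertices of degree $0$ are skipped. If we couple the $t$ edges as the first $t$ of $d$ i.i.d.\ edges, then peelability of the $d$-edge hypergraph implies peelability for every $t\le d$ simultaneously. Hence $\Pr[\text{$t$ edges peelable}]\ge\Pr[\text{$d$ edges peelable}]=1-o(1)$ uniformly in $t$. We also need to address de-duplication of repeated locations, which yields edges of size $<k$. Walzer's model either allows repeated vertices, or such edges affect only $O(1)$ edges in expectation; in that case we cite the paper's footnote that the thresholds coincide.

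The main obstacle is the precise matching of models in the first step. Walzer's coupling is stated for a specific parametrisation: the window width, whether positions are continuous or discrete, the exact vertex count, and how $z$ relates to his $w$. He also proves saturation only in the limit ``$z$ large enough depending on $c$''. We would first try to rescale so that his vertex set of size $n(1+1/z)$ carrying a density-$c$ coupled model maps onto our $M$ cells, and check that the resulting threshold for fixed large $z$ exceeds $c$. If his statement only gives density approaching $c^{\mathsf{orient}}$ as $z\to\infty$, we pick $c'$ with $c<c'<c^{\mathsf{orient}}_{k,\ell}$ and take $z$ large enough that his coupled threshold exceeds $c'$. That slack absorbs the rounding in $M$.
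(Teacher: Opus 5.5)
Your proposal is correct and follows essentially the same route as the paper: cite Walzer's theorem for the coupled hypergraph on $M$ vertices with $m'=\lfloor cMz/(z+1)\rfloor$ edges, observe that the choice of $M$ gives $m'\ge d\ge t$, and take the $t$-edge hypergraph to be the first $t$ edges of the $m'$-edge one, so that monotonicity of $\ell$-peelability under edge deletion finishes the argument. The paper couples directly to the $m'$-edge instance, which is the one Walzer's statement covers literally, rather than to $d$ edges; this merges your density-at-most-$c$ step and your monotonicity step into a single coupling argument.
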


\begin{proof}
\cite{PeelingOrientabilityThreshold-Walzer-2025} proves that such
coupled hypergraph on $M$ vertices with
$\lfloor cMz/(z+1)\rfloor$ independently sampled hyperedges is
$\ell$-peelable w.h.p..
Let $m':=\left\lfloor \frac{cMz}{z+1}\right\rfloor.$ Our choice of $M$ ensures that $m'\ge d\ge t$. Generate an auxiliary coupled hypergraph with $m'$ independent hyperedges and retain its first $t$ edges.
These $t$ edges have exactly the distribution induced by our placement rule.
The claim follows directly from that $\ell$-peelability is preserved under deleting hyperedges.
\end{proof}

\begin{theorem}[{\fullalgo} Approaches the Orientability Threshold]
\label{thm:orientability-benchmark}
Assume $d\le V/2$ and the random-label setting of
Section~\ref{sec:verification}. Fix constants $k,\ell\ge2$.
There exist sequences $M_d,z_d\in\mathbb N$, such that, when {\fullalgo} is instantiated with them, it solves every instance with
$|(A\setminus B)\cup(B\setminus A)|\le d$ w.h.p. and satisfies
$$
\limsup_{d\to\infty}
\frac{\mathfrak C(d)}{d}
\le
\frac{\ell}{c^{\mathsf{orient}}_{k,\ell}}.
$$
The parameter schedules $M_d$ and $z_d$ depend only on
$d,k,\ell$, and not on the input instance.
\end{theorem}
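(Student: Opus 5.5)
The plan is to combine three earlier results: Lemma~\ref{prop:coupled-peeling} (coupled peeling), Proposition~\ref{prop:false-positive} (verification error), and Proposition~\ref{prop:bit-space} (bit-level size). A diagonal choice of $z_d$ and $c$ then removes the $\varepsilon$-slack.

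\textbf{Step 1: fixed $\varepsilon$.} Fix $\varepsilon>0$. Choose $c:=c^{\mathsf{orient}}_{k,\ell}/(1+\varepsilon)$, and take $z$ large enough for Lemma~\ref{prop:coupled-peeling}. Note that $k,\ell\ge2$ gives $k+\ell\ge4$. Set $M=\lceil(1+1/z)d/c\rceil$.

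Next, fix an instance with $t\le d$ differences. In the random-label setting, the placement supports $\Gamma(x)$ of the difference elements are independent samples of the coupling rule. By Lemma~\ref{prop:coupled-peeling}, the induced hypergraph is $\ell$-peelable w.h.p.

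By Proposition~\ref{prop:false-positive}, with $M=\Theta(d)$ and $\ell\ge2$, the probability of any false positive is $O(M^{1-\ell})=o(1)$. On the complement of that event, verification accepts exactly the pure cells, since pure cells always decode correctly (the Claim after Theorem~\ref{thm:d-rfr}) and always pass both checks. The decoder then carries out exact $\ell$-peeling, which succeeds. A union bound over the two failure events gives success w.h.p.

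The random public bijection $\phi$ reduces worst-case instances to this setting. Both $\phi$ and the hash functions are excluded from the communication cost.

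\textbf{Step 2: space accounting.} By Proposition~\ref{prop:bit-space}, the sketch has $M(\ell\lceil\log_2V\rceil+O(1))$ bits when $q$ is a fixed constant; take $q=1$ or $3$. Dividing by $\log_2V$ and letting $V\to\infty$ gives $\mathfrak C(d)=M\ell+o(1)$ in word units. Hence
$$\limsup_{d\to\infty}\frac{\mathfrak C(d)}{d}\le\frac{\ell(1+1/z)}{c}=\frac{\ell(1+\varepsilon)(1+1/z)}{c^{\mathsf{orient}}_{k,\ell}}.$$

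\textbf{Step 3: diagonalization.} To obtain a single schedule reaching the bound $\ell/c^{\mathsf{orient}}_{k,\ell}$, take a sequence $\varepsilon_j\downarrow0$. For each $j$, let $z_j$ and $c_j$ be the choices from Step 1, and let $\delta_j(d)\to0$ be the resulting failure-probability bound. That bound depends only on $d,k,\ell$, because the w.h.p.\ statements hold uniformly over $t\le d$ and over instances after the $\phi$ reduction.

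Choose thresholds $D_1<D_2<\cdots$ such that $\delta_j(d)\le 1/j$ for all $d\ge D_j$. For $D_j\le d<D_{j+1}$, use parameter set $j$. Then $M_d$ and $z_d$ depend only on $(d,k,\ell)$, the failure probability tends to $0$, and the limsup is at most $\inf_j \ell(1+\varepsilon_j)(1+1/z_j)/c^{\mathsf{orient}}_{k,\ell}$, which equals the claimed bound.

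\textbf{Main obstacle.} The delicate step is uniformity in Step 1. Lemma~\ref{prop:coupled-peeling} and the cited result of Walzer give w.h.p.\ for each fixed $t$, but the diagonalization needs a failure bound $\delta_j(d)$ that is uniform over all $t\le d$. My first attempt is the monotone-coupling argument already used in the lemma's proof. Deleting edges preserves $\ell$-peelability, so the failure probability for any $t\le d$ is at most that for $m'$ edges. That reduces everything to the single sequence indexed by $M$, which gives the needed uniformity.

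A secondary point is that Proposition~\ref{prop:false-positive} bounds errors only "before the first erroneous step". This is sufficient, because on the no-false-positive event no erroneous step ever occurs.
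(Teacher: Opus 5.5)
Your proposal is correct and follows essentially the same route as the paper: fix $c_j<c^{\mathsf{orient}}_{k,\ell}$ and a coupling length $z_j$, get peelability from Lemma~\ref{prop:coupled-peeling} and an $O(M^{1-\ell})=o(1)$ false-positive bound, then diagonalize over thresholds $D_1<D_2<\cdots$ at which the failure probability is at most $1/j$. The one detail to make explicit is $z_j\to\infty$ (e.g.\ $1/z_j\le\varepsilon_j$; the paper enforces this via $1+1/z_j<1+\eta_j$), since otherwise $\ell(1+\varepsilon_j)(1+1/z_j)/c^{\mathsf{orient}}_{k,\ell}$ need not tend to $\ell/c^{\mathsf{orient}}_{k,\ell}$, and with that choice the limsup along your schedule is the limit of these per-$j$ bounds, which equals the claimed value.
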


\begin{proof}
For each integer $j\ge1$, choose $\eta_j>0$ such that
$
(1+\eta_j)^2
\frac{\ell}{c^{\mathsf{orient}}_{k,\ell}}
\le
\frac{\ell}{c^{\mathsf{orient}}_{k,\ell}}
+
\frac1j.
$
Choose $c_j<c^{\mathsf{orient}}_{k,\ell}$ that
$
\frac{\ell}{c_j}
<
(1+\eta_j)
\frac{\ell}{c^{\mathsf{orient}}_{k,\ell}}.
$
By the result for spatial coupling
\cite{PeelingOrientabilityThreshold-Walzer-2025}, there exists a fixed
coupling parameter $z_j$ such that $ 1+\frac1{z_j}<1+\eta_j$
and the coupled hypergraph with density $c_j$ is
$\ell$-peelable w.h.p..

For fixed $j$, let
$
M_{j,d}:=
\left\lceil
\left(1+\frac1{z_j}\right)\frac{d}{c_j}
\right\rceil.
$
By Lemma~\ref{prop:coupled-peeling}, the probability that the induced
hypergraph is not $\ell$-peelable is $o(1)$ as $d\to\infty$.
Moreover, when $q=1$, meaning that even discarding fingerprints,
Proposition~\ref{prop:false-positive} gives a false-positive
probability of $O\left(M_{j,d}^{1-\ell}/q\right)=o(1)$,
since $\ell\ge2$ and $M_{j,d}=\Theta(d)$.

Hence, for every $j$, there exists $D_j$ such that, for all
$d\ge D_j$, the decoding failure probability under the parameters
$(M_j,z_j)$ is at most $1/j$. Choose the sequence
$
D_1<D_2<\cdots
$
strictly increasing and let
$
j(d):=\max\{j:D_j\le d\},
$
then instantiate {\fullalgo} with
$
c_d:=c_{j(d)},
z_d:=z_{j(d)},
M_d:=M_{j(d),d},q=1.
$
Since $j(d)\to\infty$, the decoding failure probability is at most
$1/j(d)=o(1)$.
Finally,

$$
    \frac{\mathfrak C(d)}{d}
=
\frac{M_d\ell}{d}<
(1+\eta_{j(d)})^2
\frac{\ell}{c^{\mathsf{orient}}_{k,\ell}}
+
\frac{\ell}{d}\ \ \implies\ \ 
\limsup_{d\to\infty}
\frac{\mathfrak C(d)}{d}
\le
\frac{\ell}{c^{\mathsf{orient}}_{k,\ell}}.
$$

\end{proof}

\begin{theorem}[Main Guarantee]
\label{thm:main}
Assume $d\le V/2$ and use the public random relabeling described in
Section~\ref{sec:verification}. For all constants
$\varepsilon,\delta>0$, there exist fixed parameters
$k,\ell$ such that, for all sufficiently
large $d$, {\fullalgo} solves every instance with
$|(A\setminus B)\cup (B\setminus A)|\le d$ with probability at least $1-\delta$ and satisfies:
\begin{itemize}
    \item $O(1)$ per-item update time and expected $O(d\log V)$ decoding time;
    \item At most $\lceil(1+\varepsilon)d\log_2 V\rceil$ bits of communication, or
    $\mathfrak C\le (1+\varepsilon)d$.
\end{itemize}
The constants hidden in the $O(\cdot)$ notation may depend on
$\varepsilon$ and $\delta$, but not on $d$ or $V$.
\end{theorem}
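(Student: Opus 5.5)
We plan to derive Theorem~\ref{thm:main} from Theorem~\ref{thm:orientability-benchmark}, Proposition~\ref{prop:complexity} and Proposition~\ref{prop:bit-space}. The argument has three ingredients. First, we pick $k$ and $\ell$ so that the ratio $\ell/c^{\mathsf{orient}}_{k,\ell}$ is below $1+\varepsilon/2$. Second, we apply the benchmark theorem with those fixed constants to get the space bound and success probability for large $d$. Third, we account for the metadata bits and the time bounds.

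\textbf{Step 1 (choosing $k,\ell$).} We need an asymptotic fact about the orientability threshold: for fixed $k\ge 2$, $c^{\mathsf{orient}}_{k,\ell}/\ell\to 1$ as $\ell\to\infty$. We would take this from the known formulas for $c^{\mathsf{orient}}_{k,\ell}$ (Fountoulakis--Khosla--Panagiotou, as cited via \cite{PeelingOrientabilityThreshold-Walzer-2025}). The upper bound $c^{\mathsf{orient}}_{k,\ell}\le \ell$ is trivial by Hall's condition applied to all edges.

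For the lower bound, a self-contained route with $k$ fixed (say $k=2$, or $k=3$ to stay in the regime $k+\ell\ge 4$) would be a first-moment/Hall argument. We would show that if $c=(1-\eta)\ell$, then w.h.p.\ no vertex set $S$ spans more than $\ell|S|$ edges. Small sets are handled by sparsity. For linear-size sets, the expected number of edges inside $S$ is roughly $c|S|(|S|/n)^{k-1}$, and Chernoff plus a union bound over $\binom{n}{|S|}$ sets works once $\ell$ is large relative to $\eta^{-1}$.

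This step is the main obstacle. Since the paper treats thresholds as black boxes, we would cite the concentration of the orientability threshold near $\ell$ rather than reprove it. We would then fix $\ell\ge 2$ and $k\ge 2$ with $\ell/c^{\mathsf{orient}}_{k,\ell}<1+\varepsilon/2$.

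\textbf{Step 2 (space and reliability).} Theorem~\ref{thm:orientability-benchmark} then supplies schedules $M_d$ and $z_d$ that succeed w.h.p., so the failure probability is at most $\delta$ for all large $d$. These schedules satisfy $\limsup_{d\to\infty}\mathfrak C(d)/d\le \ell/c^{\mathsf{orient}}_{k,\ell}<1+\varepsilon/2$, and hence $M_d\ell\le(1+\varepsilon/2)d$ for large $d$.

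Next we account for metadata with $q=1$. Proposition~\ref{prop:bit-space} gives $M_d(\ell\lceil\log_2 V\rceil+\lceil\log_2(2\ell+1)\rceil)$ bits. Two sources of slack must be absorbed:
\begin{itemize}
\item the rounding term $\lceil\log_2 V\rceil-\log_2 V<1$ costs at most $M_d\ell$ extra bits;
\item the count field costs $O(M_d\log\ell)$ bits.
\end{itemize}
Both are $O(d)$ bits, which is at most $(\varepsilon/2)d\log_2 V$ once $V$ is large enough. Note that $d\le V/2$ forces $V\ge 2d\to\infty$ as $d$ grows, so "large $d$" suffices. The result is at most $\lceil(1+\varepsilon)d\log_2 V\rceil$ bits, and $\mathfrak C\le(1+\varepsilon)d$ in word units.

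\textbf{Step 3 (time).} Since $M_d=\Theta(d)$ with constants depending only on $\varepsilon$ through $k,\ell,c_j,z_j$, Proposition~\ref{prop:complexity} yields $O(k\ell)=O(1)$ update time and expected $O(d\log V)$ decoding time. The hidden constants depend on $\varepsilon$ via $(k,\ell)$ and on $\delta$ via the threshold $D$ beyond which the failure probability is below $\delta$. None of them depends on $d$ or $V$.
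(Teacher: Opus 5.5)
Your proposal is correct and follows essentially the same route as the paper: fix $k,\ell\ge 2$ with $\ell/c^{\mathsf{orient}}_{k,\ell}$ close to $1$ using the cited Fountoulakis--Khosla--Panagiotou asymptotics, invoke Theorem~\ref{thm:orientability-benchmark} for the $o(1)$ failure probability and the $\limsup$ space bound, absorb the $O(M_d)=O(d)$ metadata bits into $(\varepsilon/2)d\log_2 V$ using $V\ge 2d$, and read off the time bounds from Proposition~\ref{prop:complexity}. The differences are cosmetic: you use an $\varepsilon/2$ margin instead of the paper's $\varepsilon/4$, you fix $k$ and let $\ell\to\infty$, and you track the $\lceil\log_2 V\rceil$ rounding explicitly; none of these changes the argument.
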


\begin{proof}    
As $ \ell/c^{\mathsf{orient}}_{k,\ell}\to 1$ when $k$ and $\ell$ increase\cite{OrientabilityThresholds-FountoulakisKhoslaPanagiotou-2016}, we may fix constant $k,\ell\ge2$ such that $\ell/c^{\mathsf{orient}}_{k,\ell}<1+\frac{\varepsilon}{4}$.
Apply Theorem~\ref{thm:orientability-benchmark} with these $k,\ell$, which yields parameter schedules $M_d,z_d$ such that {\fullalgo}
succeeds w.h.p. and
$$
\limsup_{d\to\infty}
\frac{\mathfrak C(d)}{d}
\le
\frac{\ell}{c^{\mathsf{orient}}_{k,\ell}}
<
1+\frac{\varepsilon}{4}.
$$
Hence, for all sufficiently large $d$, $\mathfrak C(d)
\le
\left(1+\frac{\varepsilon}{2}\right)d$.
Moreover, the failure probability is $o(1)$, and is therefore at most
$\delta$ for all sufficiently large $d$.
The complexity part is already shown in Proposition~\ref{prop:complexity}.
Finally, Proposition~\ref{prop:bit-space} gives
$$
M_d\ell\log_2 V+O(M_d)
\le
\left(1+\frac{\varepsilon}{2}\right)d\log_2 V+O(d)
\le
(1+\varepsilon)d\log_2 V.
$$
The last inequality holds for sufficiently large $d$ since
$V\ge 2d$.
\end{proof}

\subsection{Heuristic Choices for Parameter $a$ and $z$}

This subsection gives an informal explanation for our choices of $a$ and $z$.
The discussion is heuristic rather than rigorous, and aims to derive reasonable parameter choices that can be calibrated experimentally.
A sharp finite-length analysis remains challenging even for extensively studied spatially coupled LDPC codes in coding theory and is beyond the scope of this work.\cite{ScalingLawSPCC-OlmosUrbanke-2015, FiniteLengthScalingWD-SokolovskiiGraellAmatBrannstrom-2020}

Technically, we combine Walzer's density-evolution analysis\cite{PeelingOrientabilityThreshold-Walzer-2025} with the OU's finite-length scaling heuristic \cite{ScalingLawSPCC-OlmosUrbanke-2015} from coding theory, and adapt them to our circular placement rule and capacity-$\ell$ cells.

\subsubsection{Choosing $a$}

We first explain the choice of $a$ in our circular trick.
Let $L:=z+1$ and $\mathbb T_L:=\mathbb R/L\mathbb Z$ be the circle of
circumference $L$, with all intervals below understood modulo $L$.
For each edge, its \emph{anchor} $s$ is sampled uniformly from
$[0,z+a)$, and its $k$ \emph{endpoints} are sampled independently from
the unit interval $[s,s+1]\subseteq \mathbb T_L$.
Following Walzer~\cite{PeelingOrientabilityThreshold-Walzer-2025},
we describe the \emph{density} at each position $u\in\mathbb T_L$ by
$w_a(u):=
\bigl|[u-1,u]\cap[0,z+a)\bigr|,$
where $|\cdot|$ denotes arc length on $\mathbb T_L$.
Thus, $w_a(u)$ is the measure of anchor locations whose placement
interval contains $u$.  In particular, $\min_u w_a(u)=a$, attained on
$u\in[0,a]$.

Let $N:=M/L$.  As $N\to\infty$, the number of surviving incident edges
at each position is approximated by a Poisson random variable.
For a measurable function $q:\mathbb T_L\to[0,1]$ where $q(u)$ denotes the survival probability at position $u$ in the current peeling round, define
$$
(\mathbf P_{a,z,c}q)(u)
:=
Q\left(
ck\int_{[u-1,u]\cap [0,z+a)}
\left(\int_0^1q(s+t)\,dt\right)^{k-1}ds,\ell
\right),
$$
where additions are modulo $L$ and
$Q(\lambda,\ell):=\Pr[\operatorname{Pois}(\lambda)\ge\ell]$.
Since this operator shares a similar form with the original one, peeling is expected to start on $[0,a]$ where $w_a$ is minimized and then propagate toward positions of larger density.

Suppose that the operating density $c_0$ is close to
$c^{\mathsf{orient}}_{k,\ell}$.
Since the minimum local density is $ac_0$, requiring it to lie below
the uncoupled peeling threshold suggests
$ac_0\lesssim c^{\mathsf{peel}}_{k,\ell}$.
We therefore choose
$$
a_{k,\ell}
=
C\frac{c^{\mathsf{peel}}_{k,\ell}}
       {c^{\mathsf{orient}}_{k,\ell}},
\qquad 0<C<1,
$$
where $C$ leaves a margin for finite-size fluctuations and is calibrated
in Section~\ref{sec:experiments}.
\subsubsection{Choosing $z$}

We next explain the choice of $z$.  We write $d=cM\frac{z+a}{z+1},L=z+1,N=M/L$.
Following Walzer, we split the density loss to two parts. We write

$$
\frac{d}{M}
=
c\frac{z+a}{z+1}
=
c\left(1-\frac{1-a}{z+1}\right).
$$
Hence, relative to the ideal density $c\to c^{\mathsf{orient}}_{k,\ell}$, the first loss term is
$\Delta_{\mathrm{len}}
=
\frac{1-a}{z+1}
\approx
\frac{1-a}{z}$.

It remains to estimate the loss caused by finite-size fluctuations.
Let $s$ be the number of edge-removal steps, $\tau:=s/N$ the normalized peeling time, and
$$
A(\tau)
:=
\left|
\left\{
v:
1\le
\deg^{(\lfloor N\tau\rfloor)}(v)
\le\ell
\right\}
\right|
$$
the number of currently pure cells.
Following the finite-length scaling picture for spatially coupled
codes\cite{ScalingLawSPCC-OlmosUrbanke-2015, FiniteLengthScalingWD-SokolovskiiGraellAmatBrannstrom-2020}, let $\Delta$ denote the gap from the peeling threshold, measured in the same units as $c$.
We assume $\mathbb E[A(\tau)]=\Theta(N\Delta)$ and that $A(\tau)$ has sub-Gaussian lower-tail distributions:
$$
\Pr\!\left[A(\tau)-\mathbb E[A(\tau)]\le -x\right]
\le
\exp\left(-\Theta\left(\frac{x^2}{N}\right)\right).
$$
As the failure event is $A(\tau)=0$ and the peeling persists for $\Theta(z)$ units of normalized time, a union bound gives $\Pr[\text{failure}]
\lesssim
z\exp\bigl(-\Theta(N\Delta^2)\bigr).$
Requiring this probability to be at most $\delta$ suggests
$\Delta_{\mathrm{fluc}}
=
\Theta\left(
\sqrt{\frac{\log(z/\delta)}{N}}
\right)
=
\Theta\left(
\sqrt{\frac{z\log(z/\delta)}{M}}
\right).$
Therefore, we estimate the total loss by
$\kappa_1\dfrac{1-a}{z}
+
\kappa_2\sqrt{\dfrac{z\log(z/\delta)}{M}},$
where $\kappa_1,\kappa_2>0$ are constants.  Balancing the two terms
gives
$z^3\log(z/\delta)
=
\Theta\bigl((1-a)^2M\bigr).$
Ignoring the slowly varying $\log z$ factor, we use
$$
z_{k,\ell}
=
D(1-a_{k,\ell})^{2/3}
\left(
\frac{M}{\log(1/\delta)}
\right)^{1/3},
$$
where $D>0$ is a constant and will be calibrated experimentally.
\section{Barriers}
\label{sec:barriers}

The goal of this section is to formalize a model for set reconciliation sketches and to show that, within this model, {\fullalgo} already achieves the optimal asymptotic time-space trade-off, conditional on the extremality conjecture stated below.
Additionally, we also provide unconditional impossibility results that clarify the boundaries of this model.

\subsection{Our Model}
\label{sec:barrier-model}

We formalize a class of persistent sketches with fixed write supports.
We work in the one-pass insertion-only streaming setting: elements
of the input set arrive in an arbitrary order, each element is processed exactly once and cannot be revisited, and between updates the algorithm retains information only in its persistent sketch state.
Fix constants $k,\ell\ge 1$.  For every $d,V$, a \emph{fixed-support canonical sketch family} $\mathcal A$ samples public randomness $\omega$ (e.g. hashing functions. After fixing it, the encoding and decoding are deterministic) and maintains $M_d$ persistent cells, where $M_d$ depends only on $d$, not on $V$, since the desired space overhead should be at most $O(\mathrm{poly}(d)\log V)$.
All persistent information is stored in these cells.
For every fixed $d$, each cell has at most $V^{\ell+r_d(V)}$ possible states, where $r_d(V)=o_V(1)$.
\footnote{Allowing $\ell$-word cells models cache-line-sized cells, and also remains meaningful when several small-universe elements fit within one machine word. The $r_d(V)$ term accommodates auxiliary metadata, such as the fingerprints and counters used in IBLT-style sketches.}
We denote $\mathfrak C_{\mathcal A}(d):=\ell M_d$. The total persistent state therefore contains at most $(\mathfrak C_{\mathcal A}(d)+o_V(1))\log_2 V$ bits.
We impose the following natural assumptions on a practically feasible streaming algorithm:

\textbf{1) Fixed-support.} After fixing $\omega$, every $x\in\mathcal U^*$ is assigned a predetermined write support $S^\omega_{d,V}(x)\subseteq[M_d]$.  An insertion of $x$ may inspect the entire sketch and perform arbitrary computation, but may modify only the cells in this support.
The aspect that concerns us most is constant support, i.e. every $S^\omega_{d,V}(x)$ having size at most $k$. Unless stated otherwise, we use “fixed-support” throughout this section to refer to the constant-support setting described above.

\textbf{2) Canonicality.} The final sketch should depend only on the set, rather than the arriving order. Formally, the following property holds after fixing $\omega$:
$$
\forall A\subseteq\mathcal U^*,\ x\in(\mathcal U^*\setminus A),\quad\ 
\mathsf{Enc}^\omega_{d,V}(A\cup\{x\})
=
\mathsf{Upd}^\omega_{d,V,x}
\bigl(\mathsf{Enc}^\omega_{d,V}(A)\bigr)
$$
where $\mathsf{Upd}^\omega_{d,V,x}$ represents a single update operation of $x$, and $\mathsf{Enc}^\omega_{d,V}$ represents the encoding function.


Given $\mathsf{Enc}^\omega_{d,V}(A)$ and
$\mathsf{Enc}^\omega_{d,V}(B)$, the decoder $\mathsf{Dec}^\omega_{d,V}$ may inspect both sketches and perform arbitrary computation to output
$(A\setminus B,B\setminus A)$.
Note that we impose no implementation-specific restrictions, such as requiring the decoder to employ a peeling strategy.
We treat the decoder purely as a black box, imposing no constraints on its time or space complexity.

To simplify the analysis, our lower bounds use the subfamily $(A,B)=(D,\varnothing)$ where $|D|\le d$. Intuitively, a reasonable algorithm should succeed w.h.p. on this simpler subfamily, thus it may serve as a necessary condition.
\footnote{This restriction is also natural from the canonicality: writing
$P=A\setminus B$, $Q=B\setminus A$, and $C=A\cap B$, one may insert
the common part $C$ after $P$ and $Q$ on both sides.  Conditioned on
$C$, the resulting pair of sketches is thus a common deterministic
processing of $(\mathsf{Enc}(P),\mathsf{Enc}(Q))$, which cannot reveal additional information about the residual difference.}
Let $D\sim\mathrm{Unif}\bigl(\binom{\mathcal U^*}{d}\bigr)$.
We define
$$
\operatorname{Succ}^\omega_{d,V}(\mathcal A)
:=
\Pr_D\!\left[
\mathsf{Dec}^\omega_{d,V}
\bigl(\mathsf{Enc}^\omega_{d,V}(D),
      \mathsf{Enc}^\omega_{d,V}(\varnothing)\bigr)
=
(D,\varnothing)
\right],
$$
For brevity, we may slightly abuse notation and write $\mathsf{Dec}(\mathsf{Enc}(D))$ for the first component of $\mathsf{Dec}(\mathsf{Enc}(D),\mathsf{Enc}(\varnothing))$.

We write $\operatorname{Succ}_{d,V}(\mathcal A):= \mathbb E_\omega[\operatorname{Succ}_{d,V}^\omega(\mathcal A)]$.
We call a family of sketch algorithms $\mathcal A$ \emph{with-high-probability successful}
iff
$
\lim\limits_{d\to\infty}\liminf\limits_{V\to\infty}
\operatorname{Succ}_{d,V}(\mathcal A)=1.
$

\subsection{Barrier 1: Zero-Error is Impossible}

We first give the first barrier, which shows that zero-error recoveries with $O(1)$ time update is impossible.
\footnote{Existing zero-error algorithms do not contradict our conclusion, as their persistent space generally depends on the universe size $V$.}
The proof is a simple pigeonhole argument.

\begin{formalbarrier1}
\label{thm:barrier1-precise}
Fix constant $\ell,k\ge 1$ and $d\ge \ell k+1$. For all sufficiently large $V$, no
fixed-support canonical sketch family satisfies
$$
\forall A\in \tbinom{\mathcal U^*}{d}, \qquad\Pr_{\omega} \bigl[
\mathsf{Dec}^{\omega}(\mathsf{Enc}^{\omega}(A))=A
\bigr]=1,
$$
\end{formalbarrier1}

\begin{proof}
Suppose for contradiction that the stated zero-error guarantee holds.
Since $\binom{\mathcal U^*}{d}$ is finite, there exist fixed random tapes
$\omega_0$ such that
$\forall A\in \tbinom{\mathcal U^*}{d},\ 
\mathsf{Dec}^{\omega_0}
\bigl(\mathsf{Enc}^{\omega_0}(A)\bigr)=A.$
Fix $\omega_0$ and omit them from the notation.

There are at most $N_d:=\sum\nolimits_{i=0}^{k}\binom{M_d}{i}$
possible write supports.  Hence some $R\subseteq[M_d]$ has a preimage
$$
Q_R:=\{x\in\mathcal U^*:S_{d,V}(x)=R\}
$$
of size at least $(V-1)/N_d$.  Let $u:=|R|$ and $t:=\ell u+1$.
Since $u\le k$, we have $t\le d$.

Fix any $F\subseteq\mathcal U^*$ of size $d-t$, and let $Q':=Q_R\setminus
F$.  For each $T\in\binom{Q'}{t}$, we can insert $F$ first and then $T$ by
canonicality.
Since every element of $T$ modifies only the cells in $R$, all sketches
$\{\mathsf{Enc}(A_T):T\in\binom{Q'}{t}\}$ agree outside $R$. Thus their
number is at most $V^{(\ell +o_V(1))u}.$

On the other hand, $|Q'|=\Omega(V)$, and hence
$
\left|\dbinom{Q'}{t}\right|
=
\Omega(V^t)
=
\Omega(V^{\ell u+1}).
$
For sufficiently large $V$, this exceeds
$V^{(\ell +o(1))u}=V^{\ell u+o(1)}$.
Therefore, there exist two distinct sets $T,T'\in\binom{Q'}{t}$ satisfying
$\mathsf{Enc}(F\cup T)=\mathsf{Enc}(F\cup T').$
The fixed decoder then gives the same output on two distinct valid
inputs, leading to a contradiction.
\end{proof}


\subsection{Successful w.h.p. Requires Orientability}
\label{sec:orientability-necessary}

For $D\in\binom{\mathcal U^*}{d}$, let
$\displaystyle
H^\omega_{d,V}(D)
:=
\bigl([M_d],\{S^\omega_{d,V}(x):x\in D\}\bigr)
$
be the induced multihypergraph, where parallel edges are allowed.
Section~\ref{sec:prelim-hypergraph} defines $\ell$-orientability.  We
now show that orientability is necessary for recovery by
an arbitrary decoder, up to a $o(1)$ term.

\begin{lemma}[Key Lemma: Successful Recovery Requires Orientability]
\label{lem:orientability-necessary}
For every fixed $d$ and every $\omega$ of the public randomness, 
$$
\Pr_{D}\!\left[
\mathsf{Dec}^{\omega}\bigl(\mathsf{Enc}^{\omega}(D)\bigr)
=
D
\ \land\ 
H^\omega_{d,V}(D)\text{ is not $\ell$-orientable}
\right]
=
o_V(1),
$$
where the $o_V(1)$ term is uniform in $\omega$.
$$
\forall \omega,\quad
\operatorname{Succ}^{\omega}_{d,V}(\mathcal A)
\le
\Pr_{D}\!\left[
H^\omega_{d,V}(D)\text{ is $\ell$-orientable}
\right]
+o_V(1).
$$
\end{lemma}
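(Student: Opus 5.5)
Fix $d$ and $\omega$, and drop $\omega$ from the notation. The plan is a counting/pigeonhole argument in the spirit of Barrier~1, now localized to the violating submultiset that Hall's theorem gives us. Suppose $H(D)$ is not $\ell$-orientable. By the Hall criterion of Section~\ref{sec:prelim-hypergraph} there is a submultiset $F\subseteq D$ with $|F|>\ell\,|U(F)|$, where $U(F):=\bigcup_{x\in F}S(x)$. Take a minimal such $F$; then $|F|=\ell|U|+1$ with $|U|\le k|F|$, so $|F|\le d$ and $|U|\le kd$ are bounded in terms of $d$ only. Classify bad sets $D$ by a \emph{pattern} $(U,\tau)$, where $U\subseteq[M_d]$ and $\tau$ assigns to each of the $t:=\ell|U|+1$ violating elements a support contained in $U$. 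The number of patterns is bounded by a function of $d$ alone (recall $M_d$ does not depend on $V$). So it suffices to show that, for each fixed pattern, the probability over $D$ that the pattern occurs and decoding succeeds is $o_V(1)$, uniformly in $\omega$.

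Fix a pattern and write $D=E\cup T$, where $T$ is the violating part (every $x\in T$ has $S(x)\subseteq U$) and $E=D\setminus T$. Canonicality lets us insert $E$ first and then $T$. Because each element of $T$ modifies only cells in $U$, the final sketch is determined by $E$ together with the contents of the cells in $U$. These cells take at most $V^{(\ell+r_d(V))|U|}$ values. Now condition on $E$ and consider the family $\mathcal T_E$ of admissible $T$ (of size $t$, disjoint from $E$, elements with supports in $U$ matching $\tau$). Decoding can succeed for at most one $T$ per sketch value, hence for at most $V^{\ell|U|+o(1)}$ members of $\mathcal T_E$. To pass from this to a probability, sum over all $(E,T)$. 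The number of successful bad sets $D$ with this pattern is at most $\binom{V}{d-t}\,V^{\ell|U|+o(1)}$, because $E$ determines everything outside $U$. The total number of $D$ is $\binom{V-1}{d}=\Theta(V^d)$. The ratio is therefore $O\bigl(V^{d-t+\ell|U|+o(1)-d}\bigr)=O\bigl(V^{-1+o(1)}\bigr)=o_V(1)$. A small overcount arises because one $D$ may admit several decompositions, but this only helps, since we are upper-bounding. Uniformity in $\omega$ holds because none of these bounds uses $\omega$: the cell-state bound and the pattern count depend only on $d,\ell,k,M_d$.

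Summing over the boundedly many patterns gives the first display. The second display follows immediately: split the success event according to whether $H(D)$ is $\ell$-orientable.

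The step to watch is the injectivity count. We must make sure that two distinct $D$ sharing the same $E$ and the same $U$-contents really produce identical full sketches; this is exactly the canonicality-plus-fixed-support argument from Barrier~1. The set $E$ must be fixed as a set, not via its hypergraph, and the decomposition $D=E\cup T$ must be chosen deterministically, for instance by taking the lexicographically first minimal violator. With that choice the map $D\mapsto(E,\text{sketch})$ is well defined and the counting bound above goes through.
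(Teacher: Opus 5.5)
Your argument is correct, but it organizes the counting differently from the paper.

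\textbf{The paper's route.} The paper samples $D$ in two stages: first the multiset of support types, then the identities within each type. It then:
\begin{itemize}
\item discards, at cost $dN_d/\log V$, every $D$ that uses a support whose preimage has fewer than $V/\log V$ elements;
\item fixes a Hall witness as a function of the type profile;
\item conditions on the identities outside the witness;
\item applies the guessing bound $N_{\mathrm{state}}/N_{\mathrm{cand}}$ to the remaining uniformly random witness identities.
\end{itemize}
The ``good support'' truncation is genuinely needed there. Without it, a witness type with a tiny preimage would make $N_{\mathrm{cand}}$ small and the conditional bound useless.

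\textbf{Your route.} You double-count globally. A successfully decoded $D$ is determined by its sketch, and the sketch of a bad $D$ with given $U$ is determined by $E$ together with the contents of the cells in $U$. So you compare $\binom{V-1}{d-t}V^{(\ell+r_d(V))|U|}$ directly against $\binom{V-1}{d}$. This never needs a lower bound on the number of candidates per type, so the truncation step disappears entirely. It also gives an explicit rate $O_d(V^{-1+o(1)})$, uniform in $\omega$, instead of the paper's $O(dN_d/\log V)$.

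\textbf{What each buys.} The paper's conditional formulation gives a per-configuration information-theoretic statement, which is what underlies its remark about overloaded cores. For the lemma as stated, your route is shorter and quantitatively stronger.

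\textbf{Minor simplifications.} Neither the type assignment $\tau$ in your pattern nor the minimality of $F$ is needed. Any violator already satisfies $t\ge \ell|U|+1$ and $|U|\le kd$. The choice of decomposition $D=E\cup T$ can also be arbitrary. Injectivity on successful sets comes from $\mathsf{Dec}$ being a function of the sketch, which $(E,U\text{-contents})$ determines, not from any canonical choice of $E$.
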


\paragraph{Remark.}
Note that the necessary condition here is \emph{orientability} instead of
\emph{peelability}, while peelability is the sufficient condition for the particular decoder of {\fullalgo}. The intuition of the fact can be summarized as follows: as a non-orientable hypergraph contains an overloaded "core" with $t>\ell r$ item identities confined to $r$ cells, those cells contain only $(\ell r+o(1))\log_2 V$ bits of information and therefore cannot distinguish all $V^{t-o(1)}$ cases.

\begin{proof}
We will use the following equivalent way to sample $D$.
First sample the multiplicity of each support type, i.e. $S_{d,V}^\omega(D)$, and then sample the identities within each type uniformly.
Since orientability depends only on the type, once the type is fixed, the remaining identities can be sampled in any order.

Fix $\omega$. Call a writing support $R\subseteq[M_d]$ \emph{good} if its
preimage under $S^\omega_{d,V}$ has size at least $V/\log V$.  Since there
are at most $N_d:=\sum_{i=0}^{k}\binom{M_d}{i}$ possible supports, the union bound gives that uniformly random $D_{d,V}$ contains an element with a non-good writing support with probability at most $dN_d/\log V=o_V(1)$.

Condition on $S(D)$ being entirely good.
If $H^\omega_{d,V}(D_{d,V})$ is not $\ell$-orientable, Hall's
theorem yields a submultiset $\mathcal E$ of $t$ edges with vertex
union $R:=\bigcup_{e\in \mathcal E}e$ of size $r$ such that $t>\ell r$.
Choose such $\mathcal E$ deterministically, and first sample all the identities of elements outside $\mathcal E$.
For each support type $G$, let $m_G$ denote the multiplicity of $G$ in $\mathcal E$, then the remaining identities have
$$
\prod_G
\binom{|S^{-1}(G)\setminus P|}{m_G}
=
\Omega(V^{t-o(1)})
$$
possible candidates, where $P$ is the fixed outside set.
By canonicality we can insert $P$ before the witness elements.  Every
witness element modifies only cells in $R$, so all resulting sketches
agree outside $R$.
Hence they have at most $V^{\ell r+o_V(1)}$ possible states.
We use the elementary fact that a uniformly random variable on $N$ possibilities cannot be recovered from an observation with at most $M$ possible values with probability exceeding $M/N$, even with randomized encoders and decoders.
Therefore, the conditional average success probability is at most
$$
\frac{N_{\mathrm{state}}}{N_{\mathrm{candidates}}}
\le
V^{\ell r-t+o_V(1)}
\le
V^{-1+o_V(1)}
=
o_V(1),
$$
Averaging over all possible $S(D)$ and all outside identities,
and then restoring the $o_V(1)$ probability of a non-good support,
proves the first claim.
\end{proof}

\subsection{Space Lower Bounds and Conditional Optimality}
\label{sec:space-frontier}


Fix $\omega$, let $X_1,\ldots,X_d$ be independent uniform samples
from $\mathcal U^*$, and define
$$
\widetilde H^\omega_{d,V}
:=
\bigl([M_d],\{S^\omega_{d,V}(X_i):i\in[d]\}\bigr).
$$
Conditioned on $X_1,\ldots,X_d$ being distinct, their underlying set is
uniform over $\binom{\mathcal U^*}{d}$, in this case $\widetilde H^\omega_{d,V}=H^\omega_{d,V}(\{X_1,\ldots,X_d\})$; the conditioning fails with
probability at most $\binom d2/(V-1)=o_V(1)$.  Therefore,
Lemma~\ref{lem:orientability-necessary} gives
$$
\operatorname{Succ}_{d,V}^{\omega}(\mathcal A)
\le
\Pr\!\left[
\widetilde H^\omega_{d,V}\text{ is $\ell$-orientable}
\right]
+o_V(1).
$$


\begin{formalbarrier2}
\label{thm:barrier2-precise}
For every fixed $k\ge2$ and $\ell\ge1$, every
w.h.p. successful fixed-support canonical sketch family
satisfies
$$
\liminf_{d\to\infty}
\frac{\mathfrak C_{\mathcal A}(d)}{d}
\ge
\frac{\ell}{\rho_{k,\ell}}
>
1,
$$
where $\rho_{k,\ell}\in(0,\ell)$ is the unique positive root\footnote{The function $F$ on the right-hand side is increasing and strictly concave in $\rho$,
with $F(0)=0,F'(0)=k>1,F(\ell)<\ell$; hence the positive root is unique.} of equation
$$
\rho_{k,\ell}=\mathbb E[\min\{\ell,\operatorname{Pois}(k\rho_{k,\ell})\}]
$$
\end{formalbarrier2}

\begin{proof}
For $v\in[M_d]$, let
$q_v:=\Pr[v\in S^\omega_{d,V}(X_1)]$.
Then $\sum_v q_v\le k$ and $\deg_{\widetilde H^\omega_{d,V}}(v)\sim\operatorname{Bin}(d,q_v)$.
If $\widetilde H^\omega_{d,V}$ is $\ell$-orientable, then by definition
$d\le Z$, where
$Z:=\sum_v\min\{\ell,\deg_{\widetilde H^\omega_{d,V}}(v)\}$.

Let
$g_{d,\ell}(q):=
\mathbb E[\min\{\ell,\operatorname{Bin}(d,q)\}]$. It can be easily shown to be monotone and concave in $q$.
By Markov's inequality and Jensen's inequality,
$$
\operatorname{Succ}_{d,V}^{\omega}(\mathcal A)
\le
\Pr[Z\ge d]+o(1)                                                  
\le
\frac1d\sum_{v=1}^{M_d}g_{d,\ell}(q_v)+o(1)                       
\le
\frac{M_d}{d}
g_{d,\ell}\!\left(\frac{k}{M_d}\right)+o(1).
$$
Averaging over $\omega$ gives the same bound for
$\operatorname{Succ}_{d,V}(\mathcal A)$.

For a w.h.p. successful family $\mathcal A$, there must be $M_d\ge d/\ell$ for sufficiently large $d$, or orientability is impossible.
Let $\rho_d:=d/M_d\le \ell$.  Along every subsequence with
$\rho_d\to\rho>0$, Poisson convergence gives
$$
\frac{M_d}{d}
g_{d,\ell}\!\left(\frac{k}{M_d}\right)=\dfrac{\mathbb E[\min\{\ell,\operatorname{Bin}(d,k\rho_d/d)\}]}{\rho_d}
\longrightarrow
\frac{\mathbb E[\min\{\ell,\operatorname{Pois}(k\rho)\}]}{\rho}.
$$
Since $\operatorname{Succ}_{d,V}(\mathcal A)\to 1$, there must be $\mathbb E[\min\{\ell,\operatorname{Pois}(k\rho)\}]\ge\rho$.
Thus $\limsup_d\rho_d\le\rho_{k,\ell}$.  Since
$\mathfrak C_{\mathcal A}(d)/d=\ell/\rho_d$, the result follows.
\end{proof}

For a distribution $\mu$ on $\binom{[n]}{\le k}$, let
$H^\mu_{n,m}$ be the multihypergraph obtained by sampling $m$
independent edges from $\mu$. We state the following conjecture:

\begin{conjecture}[Uniform-Support Extremality]
\label{conj:uniform-extremality}
For every fixed $k,\ell\ge1$ and $\varepsilon>0$,
$$
\sup_{\mu}
\Pr\!\left[
H^\mu_{n,\lceil(c^{\mathsf{orient}}_{k,\ell}+\varepsilon)n\rceil}
\text{ is $\ell$-orientable}
\right]
=
o_n(1),
$$
where the supremum ranges over all distributions on
$\binom{[n]}{\le k}$.
\end{conjecture}

This conjecture formalizes and strengthens an informal conjecture of
Walzer~\cite[Conclusion]{PeelingOrientabilityThreshold-Walzer-2025}.
Under this conjecture, the following optimality result follows immediately.

\begin{formalmaintheorem}
\label{thm:main-precise}
Assume Conjecture~\ref{conj:uniform-extremality}.  For every fixed
$k,\ell\ge2$, every w.h.p successful fixed-support
canonical sketch family $\mathcal A$ satisfies
$$
\liminf_{d\to\infty}
\frac{\mathfrak C_{\mathcal A}(d)}{d}
\ge
\frac{\ell}{c^{\mathsf{orient}}_{k,\ell}}
\ge
\limsup_{d\to\infty}
\frac{\mathfrak C_{XYZ}(d)}{d}.
$$
\end{formalmaintheorem}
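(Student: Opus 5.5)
The inequality splits into two halves, and we handle them separately.

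The right-hand inequality, $\ell/c^{\mathsf{orient}}_{k,\ell}\ge\limsup_d \mathfrak C_{XYZ}(d)/d$, is exactly Theorem~\ref{thm:orientability-benchmark} for $k,\ell\ge2$. The left-hand inequality should follow from Lemma~\ref{lem:orientability-necessary}, in its form for independent samples, together with Conjecture~\ref{conj:uniform-extremality}.

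For the lower bound we argue by contradiction. Suppose $\liminf_d \mathfrak C_{\mathcal A}(d)/d<\ell/c^{\mathsf{orient}}_{k,\ell}$. Since $\mathfrak C_{\mathcal A}(d)=\ell M_d$, there is $\varepsilon>0$ and an infinite subsequence of $d$ along which
\[
d\ge(c^{\mathsf{orient}}_{k,\ell}+2\varepsilon)M_d .
\]

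We first check that $M_d\to\infty$ along this subsequence. Any $\ell$-orientable hypergraph on $M_d$ vertices has at most $\ell M_d$ edges, so success forces $M_d\ge d/\ell$ eventually (as noted in the proof of Theorem~\ref{thm:barrier2-precise}).

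Now fix $d$ in the subsequence and any $\omega$. The support $S^\omega_{d,V}(X)$ of a uniform $X\in\mathcal U^*$ has some distribution $\mu^\omega_{d,V}$ on $\binom{[M_d]}{\le k}$. The hypergraph $\widetilde H^\omega_{d,V}$ is then exactly $H^{\mu}_{M_d,d}$ with $\mu=\mu^\omega_{d,V}$. Orientability is monotone under deleting edges, so $H^\mu_{n,d}$ orientable implies $H^\mu_{n,m}$ orientable for any $m\le d$. Taking $n=M_d$ and $m=\lceil(c^{\mathsf{orient}}_{k,\ell}+\varepsilon)n\rceil\le d$ (valid for large $n$), this gives
\[
\Pr\bigl[\widetilde H^\omega_{d,V}\text{ orientable}\bigr]\le \sup_\mu \Pr\bigl[H^\mu_{n,m}\text{ orientable}\bigr]=:\beta(n).
\]
The conjecture says $\beta(n)=o_n(1)$, and the bound $\beta(n)$ is uniform in $\omega$ and $V$.

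Combining this with the displayed consequence of Lemma~\ref{lem:orientability-necessary} yields
\[
\operatorname{Succ}^\omega_{d,V}(\mathcal A)\le \beta(M_d)+o_V(1).
\]
The $o_V(1)$ term is uniform in $\omega$ for fixed $d$, so averaging over $\omega$ and taking $\liminf_{V\to\infty}$ gives
\[
\liminf_V\operatorname{Succ}_{d,V}(\mathcal A)\le\beta(M_d).
\]
Since $M_d\to\infty$ along the subsequence, $\beta(M_d)\to0$. This contradicts $\lim_d\liminf_V\operatorname{Succ}_{d,V}=1$.

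The main point needing care is the order of limits and uniformity. The $o_V(1)$ from Lemma~\ref{lem:orientability-necessary} and from the distinctness conditioning depends on $d$, but it vanishes as $V\to\infty$ for each fixed $d$; this is exactly the order the w.h.p.-success definition uses. The conjecture's bound must also be applied uniformly over all $\mu$, including distributions $\mu^\omega_{d,V}$ that vary with $V$; this is why the supremum over $\mu$ in Conjecture~\ref{conj:uniform-extremality} is essential. A minor extra check is that the sets of size less than $k$ allowed in the model are covered by the conjecture's $\binom{[n]}{\le k}$.
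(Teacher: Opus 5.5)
Your proposal is correct and follows the same route as the paper's proof. You argue by contradiction along a subsequence with $d/M_d$ bounded away above $c^{\mathsf{orient}}_{k,\ell}$, identify $\widetilde H^\omega_{d,V}$ with $H^\mu_{M_d,d}$, use monotonicity under edge deletion together with the uniform supremum in the conjecture, then combine with the orientability bound and average over $\omega$; the right-hand inequality is Theorem~\ref{thm:orientability-benchmark}. The $2\varepsilon$ margin for the ceiling and the explicit $\liminf_{V}$-before-$d\to\infty$ bookkeeping are slightly more careful than the paper's version.
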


\begin{proof}
Suppose otherwise.  Then, for some $\varepsilon>0$ and infinitely many
$d$, $\frac{d}{M_d} \ge c^{\mathsf{orient}}_{k,\ell}+\varepsilon$ hold.
By Theorem~\ref{thm:barrier2}, $M_d=\Omega(d)$ along this
subsequence.
It can be noted that, for every $\omega$ and $V$, the hypergraph
$\widetilde H^\omega_{d,V}$ has distribution $H^\mu_{M_d,d}$, where $\mu$ is defined as the distribution of $S^\omega_{d,V}(X_1)$.  Since orientability is
preserved under deleting edges, the conjecture gives
$$
\Pr\!\left[
\widetilde H^\omega_{d,V}\text{ is $\ell$-orientable}
\right]
=
o_d(1),
$$
where $o_d(1)$ is uniform in $\omega$ and $V$.  Averaging the orientability bound above
over $\omega$ yields
$$
\operatorname{Succ}_{d,V}(\mathcal A)
\le
o_d(1)+o_V(1),
$$
contradicting w.h.p. success.
The final inequality follows from
Theorem~\ref{thm:orientability-benchmark}.
\end{proof}

Therefore, improving upon {\fullalgo} requires either refuting this conjecture or considering sketches outside this model. However, to the best of our knowledge, even with only the requirement $C = \mathrm{poly}(d)$, no nontrivial unfixed-support or non-canonical $O(1)$-update algorithm is known to succeed w.h.p..
Intuitively, substantially exploiting the unfixed-support or non-canonical conditions may make the sketch structure highly disordered, complicating the design of a decoding algorithm.

%
%

\newcommand{\ExperimentPanel}[4]{%
  \IfFileExists{#1}{%
    \includegraphics[width=#2]{#1}%
  }{%
    \fbox{\parbox[c][#3][c]{0.90\linewidth}{%
      \centering\scriptsize Figure placeholder\\[2pt]
      \texttt{#4}%
    }}%
  }%
}

\section{Experiments}
\label{sec:experiments}

We evaluate three questions: 1) whether the practical performance exhibits the predicted sharp threshold. 2) how the spatial parameters affect the peeling process, and 3) what end-to-end time-space tradeoff is obtained in the implementation of {\fullalgo}.

\subsection{Experimental Setup}

We generate two sets $A,B\subseteq\mathcal U^*$ with
$|(A\setminus B)\cup (B\setminus A)|=d$, split the difference nearly evenly between the two directions, and randomly shuffle their insertion orders. Unless otherwise stated, $\mathcal U=\mathbb F_{998244353},w:=\lceil\log_2|\mathcal U|\rceil=30$ and $|A|=|B|=10^7$. We normalize communication by
$$
\mathfrak R:=
\frac{\text{serialized sketch size in bits}}{dw},
$$
A trial succeeds only if the exact
signed symmetric difference is recovered. Probability experiments use
$100$ independent trials, target success probability $0.9$.

Figures~\ref{fig:structural-results}(a) and~\ref{fig:end-to-end-results} use the complete {\fullalgo} C++ implementations, including murmur hash, D-RFR recovery, verification, and decoding. Figure~\ref{fig:structural-results}(b)--(c) instead uses an ideal-cell hypergraph simulator that reproduces the circular placement
and peeling process, but recovers a cell exactly whenever its residual
degree is at most $\ell$.
It avoids the randomness unrelated to the hypergraph structure introduced by pure cell verification, thereby placing greater emphasis on the hypergraph structure itself.

All end-to-end algorithms are evaluated on shared datasets. Experiments run on a server with 18-core processor and 128 GB DRAM memory.


\begin{figure}[t]
  \centering
  \ExperimentPanel{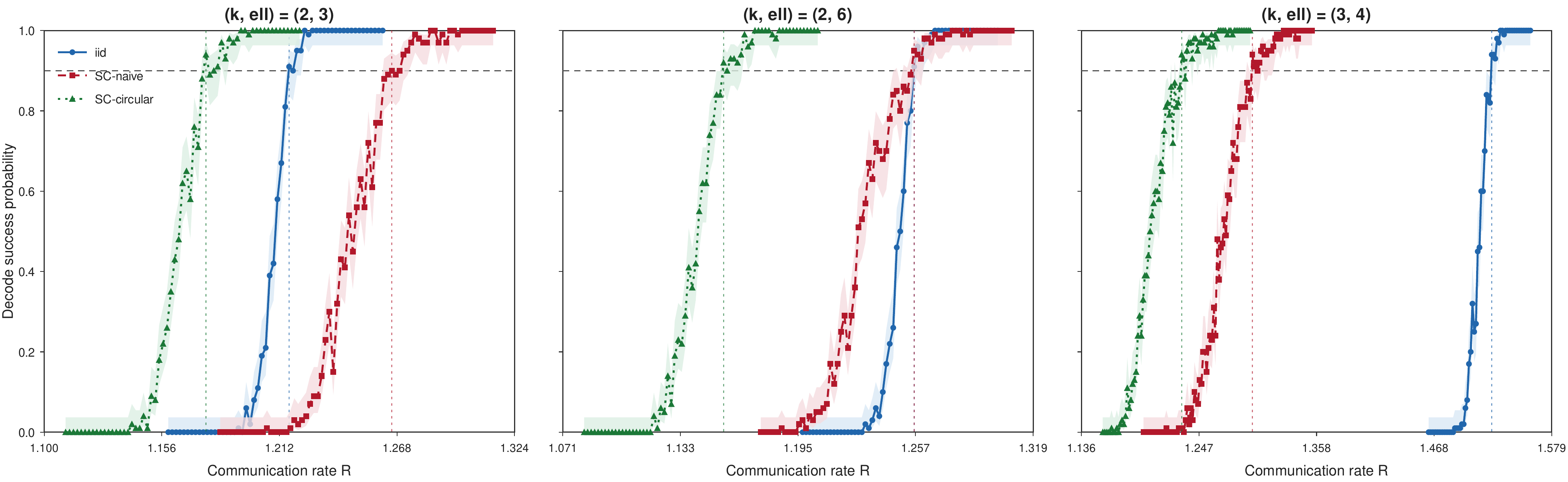}{0.85\linewidth}{1.12in}{fig6a}
  {\small (a)}
  \begin{minipage}[t]{0.49\linewidth}
    \centering
    \ExperimentPanel{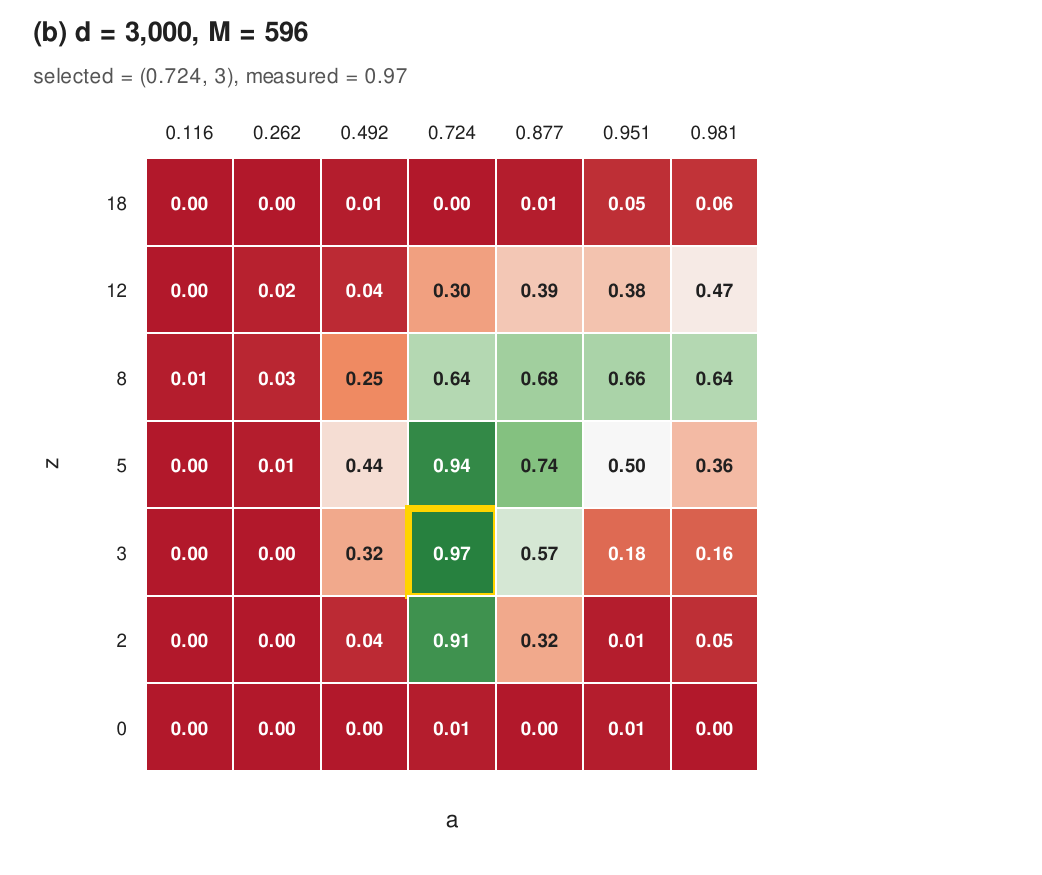}{0.9\linewidth}{1.12in}{fig6b}
    {\small (b)}
  \end{minipage}\hfill
  \begin{minipage}[t]{0.49\linewidth}
    \centering
    \ExperimentPanel{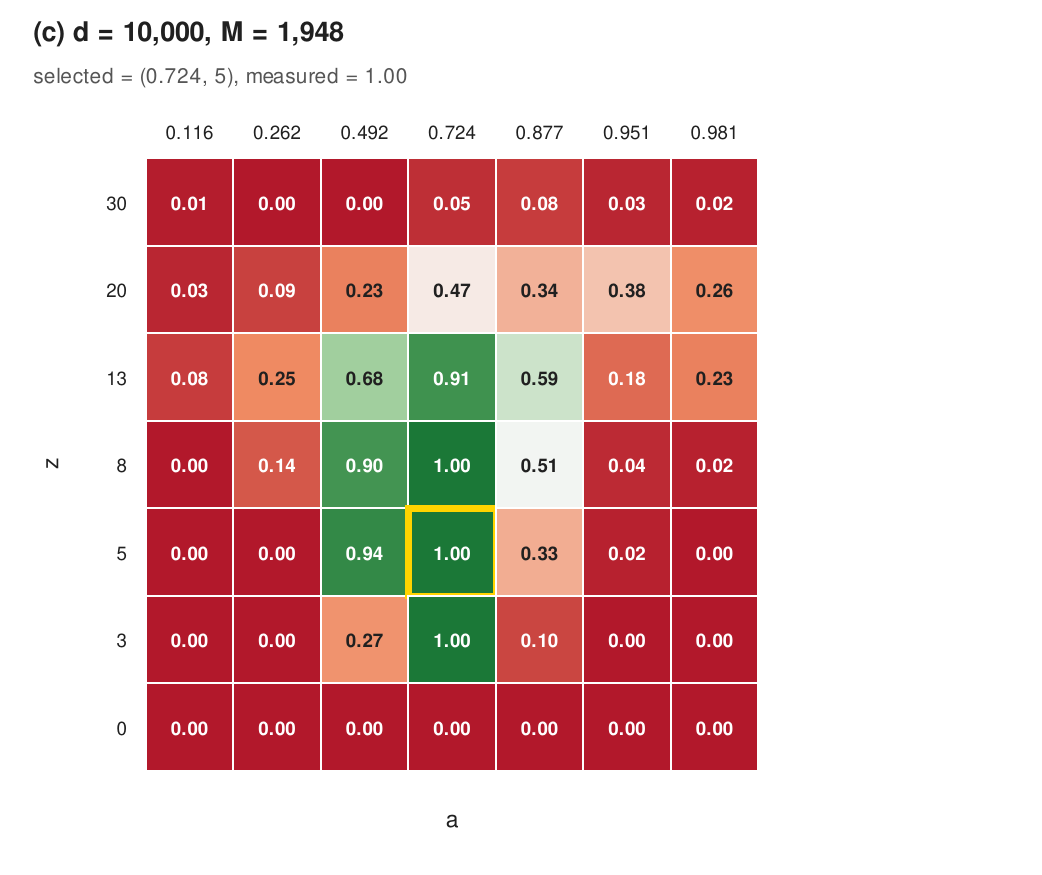}{0.9\linewidth}{1.12in}{fig6c}
    {\small (c)}
  \end{minipage}
  \caption{Structural behavior of {\fullalgo}.  (a) Full-implementation success
  against normalized communication, where $d=10^4$ and $(a,z)$ follows our heuristic choice.
  The Wilson 95\% CI is shown as a light-colored band.
  (b) Ideal-cell peeling success over circular parameters $(a,z)$ of $d=3{,}000$, $M=596$ and $(k,\ell)=(2,6)$. The yellow box marks the heuristic parameter choice.
  (c) Another peeling-success heatmap at $d=10{,}000$ and $M=1{,}948$.
  }
  \label{fig:structural-results}
\end{figure}

\begin{figure}[t]
  \centering
  \includegraphics[width=0.8\linewidth]{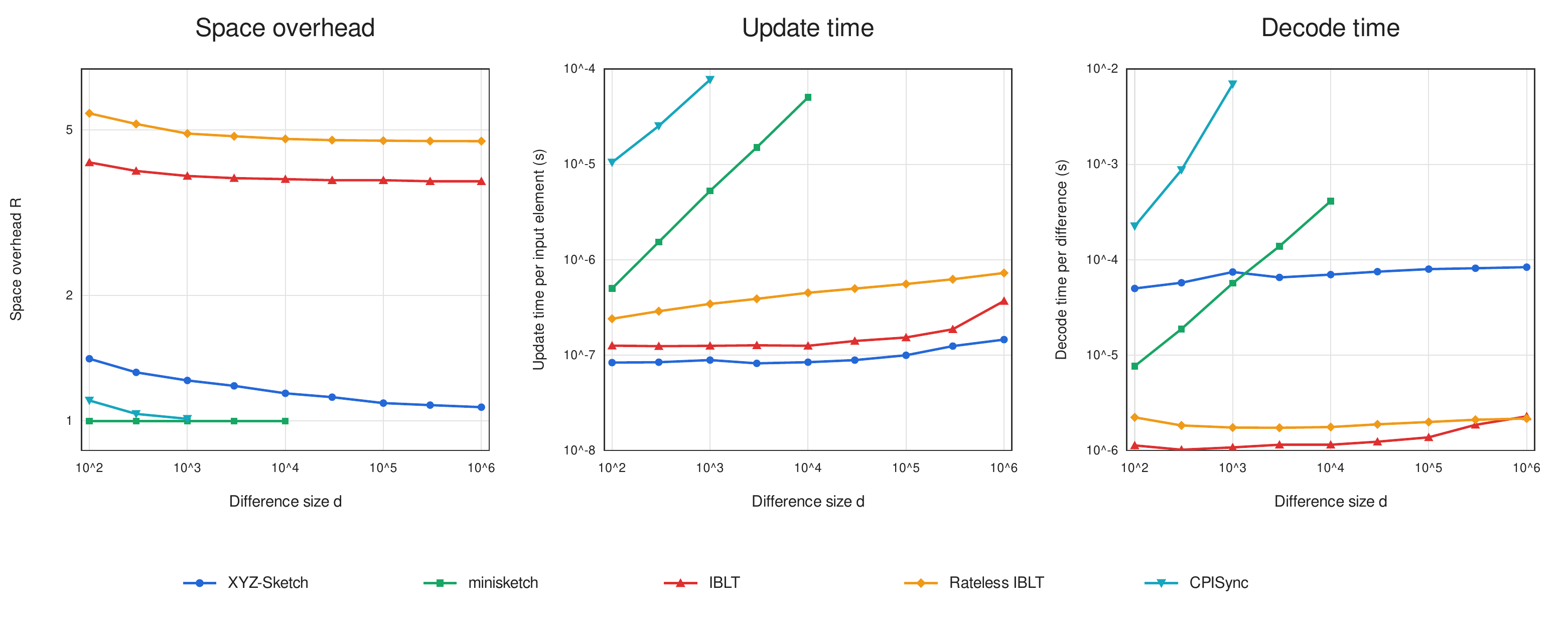}
  \caption{End-to-end tradeoffs at a 90\% success target:
  (a) normalized space cost $\mathfrak R$;
  (b) update time per element;
  (c) decode time per element.}
  \label{fig:end-to-end-results}
\end{figure}


\subsection{Thresholds and Spatial Parameters}


We first calibrate the two constants in our heuristic using only smaller-scale instances, obtaining $C\approx 0.875$ and $D\approx 1.19$, and then keep them fixed for the remaining experiments.
Figure~\ref{fig:structural-results}(a) shows sharp success transitions for several representative $(k,\ell)$ pairs. Circular spatial coupling consistently improves over both i.i.d.\ placement and naive coupling, matching the predicted threshold gain.
We further examine $(k,\ell)=(2,6)$ as a representative case. Figures~\ref{fig:structural-results}(b)--(c) show that the resulting choices of $(a,z)$ lie inside the high-success regions and continue to perform well when extrapolated to substantially larger values of $d$ and $M$. A more comprehensive collection of heatmaps is provided in Appendix Figure~\ref{fig:fixed-M-heatmaps}. Together, these results support the robustness of our heuristic parameter rule and are consistent with the threshold-saturation behavior of circular spatial coupling.

\subsection{End-to-End Tradeoffs}
\label{sec:end-to-end}

Figure~\ref{fig:end-to-end-results} compares {\fullalgo} with prior works (See Appendix \ref{app:supply} for details) under the same success target $0.9$.
One can see that our XYZ-Sketch has an insertion efficiency close to IBLT, while the communication cost rapidly approaches minisketch.

The drawback of {\fullalgo} is its decoding efficiency, which does not match the superior performance of IBLT but is still acceptable. Since it is mainly caused by the root-finding process in a single cell, we believe that employing a Minisketch-like approach for root-finding, such as generating lookup tables at runtime, coupled with further hardware-assisted engineering optimizations, can significantly accelerate the decoding process. However, this falls outside the scope of this work.


\section{Conclusion}
\label{sec:conclusion}

We studied the trade-off between space and update time in the set reconciliation problem, especially for the streaming setting.
{\fullalgo} replaces the singleton peeling rule of conventional IBLTs with high-capacity cells which is realized by D-RFR, and combines them with spatial coupling.
For any constants $\varepsilon,\delta>0$, suitable parameters allow {\fullalgo} to
reconcile sets with difference at most $d$ using $\mathfrak C\le(1+\varepsilon)d$, $O(1)$ update time, expected $O(d\log V)$ decoding time, and failure probability at most $\delta$ for sufficiently large $d$.
Our experiments further demonstrate the predicted near-optimal performance in both metrics.
We further formulated a fixed-support canonical model, showed that, under the
Uniform-Support Extremality Conjecture, {\fullalgo} asymptotically
reaches the frontier in this model.
The unconditional zero-error and space-lower-bound barriers are also established within this model.

An important open question is to prove or refute Conjecture \ref{conj:uniform-extremality}, thereby determining whether the conditional frontier is exact.
It also remains to understand how the frontier changes beyond the fixed-support canonical model.
On the practical side, it would be interesting to extend high-capacity cells to multi-party and rateless set reconciliation, and also to explore their use in other sketching and probabilistic data structures.

\section{Methods: Use and Disclosure of AI and LLMs}
The authors developed the core idea, proof outline, and C++ implementation of the core algorithm. Generative AI assisted with writing, literature review, citation checking, refining proof details for specific lemmas, and developing evaluation/plotting scripts; all claims and outputs were independently reviewed and verified by the authors.

\bibliographystyle{ACM-Reference-Format}
\bibliography{bibfile}

\newpage
\appendix

\section*{Appendix}

The appendix is organized as follows.
Appendix~\ref{app:tables} collects the notation used throughout the
paper and the relevant peelability and orientability threshold values.
Appendix~\ref{app:missing} provides the previously missing proofs.
Appendix~\ref{app:drfr} presents the complete D-RFR reconstruction
procedure, including its reduction to the classical RFR problem and
the implementation details.
Appendix~\ref{app:supply} provides the Supplementary Experimental Figures. 
Appendix~\ref{app:pseudocode} provides the full pseudocode of
XYZ-Sketch.

\section{Tables}
\label{app:tables}


\subsection{Notation Table}
\label{app:notation-table}

Table~\ref{tab:notation} lists only the global notation reused across
sections. Symbols introduced and discharged within a single proof or
algorithm are defined locally and omitted here.

\begingroup
\small
\renewcommand{\arraystretch}{1.10}
\begin{longtable}{@{}>{\raggedright\arraybackslash}p{0.24\linewidth}
                        >{\raggedright\arraybackslash}p{0.70\linewidth}@{}}
\caption{Core notation used throughout the paper.}\label{tab:notation}\\
\toprule
Symbol & Meaning \\
\midrule
\endfirsthead
\multicolumn{2}{l}{\small\itshape Table~\ref{tab:notation}, continued.}\\
\toprule
Symbol & Meaning \\
\midrule
\endhead
\bottomrule
\endfoot

\multicolumn{2}{@{}l}{\textbf{Problem and hypergraph model}}\\[2pt]
$\mathcal U=\mathbb F_p$, $V=p$ & Universe and its cardinality; $\mathcal U^*=\mathcal U\setminus\{0\}$. \\
$A,B,d$ & The two input sets and the known upper bound $|(A\setminus B)\cup (B\setminus A)|\le d$. \\
$\mathfrak C(d)$ & Communication in $\mathcal U$-word equivalents. \\
$\varepsilon,\delta$ & Target communication overhead and failure probability. \\
$H=(\mathcal V,\mathcal E)$ & A multihypergraph with vertex set $\mathcal V$ and edge multiset $\mathcal E$. \\
$k,\ell$ & Number of cells touched per element and local recovery/peeling capacity. \\
$H^{(k)}_{n,m}$, $c=m/n$ & Fully random $k$-uniform hypergraph and its edge density. \\
$c^{\rm peel}_{k,\ell}$, $c^{\rm orient}_{k,\ell}$ & Thresholds for $\ell$-peelability and $\ell$-orientability. \\
$Q(\lambda,r)$ & Poisson upper tail $\Pr[\operatorname{Pois}(\lambda)\ge r]$. \\

\multicolumn{2}{@{}l}{\textbf{XYZ-Sketch and the high-capacity cell}}\\[2pt]
$M$, $\mathcal B$, $\mathcal D$ & Number of cells, one encoded sketch, and the residual sketch after subtraction. \\
$h_1,\ldots,h_k$, $\Gamma(x)$ & Placement hash functions and the deduplicated support of element $x$. \\
$C$, $f_C$ & A cell state and the signed multiplicity map represented by it. \\
$c_C$, $P_C(Z)$, $\operatorname{fp}_C$ & Signed count, truncated polynomial, and optional fingerprint stored in cell $C$. \\
$h_{\rm fp}$, $q$ & Independent fingerprint hash and its odd modulus. \\
$Z$, $\chi_S(Z)$ & Formal polynomial variable and the characteristic polynomial of set $S$. \\
$R(Z)$, $F(Z)$, $G(Z)$, $m_C$ & D-RFR residue, reconstructed numerator/denominator, and their signed degree difference. \\

\multicolumn{2}{@{}l}{\textbf{Spatial coupling and finite-size parameters}}\\[2pt]
$z$ & Coupling-length parameter. \\
$a$ & Circular parameter; $a=0$ is the original version. \\
$g_0,g_1,\ldots,g_k$, $h'_i,h_i^{(a)}$ & Anchor/window randomness and the terminated/circular placement rules. \\
$a_{k,\ell}$, $z_{k,\ell}$ & Threshold-informed finite-size design choices for $a$ and $z$. \\

\multicolumn{2}{@{}l}{\textbf{Barrier model}}\\[2pt]
$\mathcal A$, $\omega$ & A fixed-support canonical sketch family and public randomness. \\
$M_d$, $\mathfrak C_{\mathcal A}(d)=\ell M_d$ & Persistent cell count and word-scale state of $\mathcal A$. \\
$S^\omega_{d,V}(x)$ & Predetermined write support of element $x$. \\
$\operatorname{Enc}$, $\operatorname{Upd}$, $\operatorname{Dec}$ & Encoding, update, and decoding maps (with indices shown when needed). \\
$\operatorname{Succ}_{d,V}(\mathcal A)$ & Success probability, averaged over public randomness. \\
$H^\omega_{d,V}(D)$, $\widetilde H^\omega_{d,V}$ & Support hypergraphs induced by sampling without and with replacement. \\
$\rho_d=d/M_d$, $\rho_{k,\ell}$ & Induced density and the positive root of $\rho=\mathbb E[\min\{\ell,\operatorname{Pois}(k\rho)\}]$. \\
$\mu$, $H^\mu_{n,m}$ & A distribution on supports and the resulting i.i.d.-edge multihypergraph. \\
\end{longtable}
\endgroup

\FloatBarrier
\subsection{Threshold Table}
\label{app:threshold-table}

For $r\ge1$, define
$$
  Q(\xi,r)=\Pr[\operatorname{Pois}(\xi)\ge r].
$$
Except for $(k,\ell)=(2,1)$, the peeling threshold is
$$
  c^{\rm peel}_{k,\ell}
  =\min_{\xi>0}\frac{\xi}{kQ(\xi,\ell)^{k-1}},
$$
where the minimizing $\xi$ is the positive solution of
$$
  Q(\xi,\ell)
  =(k-1)\xi\Pr[\operatorname{Pois}(\xi)=\ell-1].
$$
For orientability, let $\xi^*$ solve
$$
  k\ell
  =\xi^*\frac{Q(\xi^*,\ell)}{Q(\xi^*,\ell+1)}.
$$
Then
$$
  c^{\rm orient}_{k,\ell}
  =\frac{\xi^*}{kQ(\xi^*,\ell)^{k-1}}.
$$
The limiting special case is
$c^{\rm peel}_{2,1}=0$ and $c^{\rm orient}_{2,1}=1/2$.
These formulas and the threshold interpretation follow the random-hypergraph
results used by the paper~\cite{OrientabilityThresholds-FountoulakisKhoslaPanagiotou-2016}.

\begin{table}[t]
\centering
\caption{Thresholds for the fully random $k$-uniform ensemble.  Each entry is
$c^{\rm peel}_{k,\ell}/c^{\rm orient}_{k,\ell}$ under the paper's density
convention $c=m/n$. Values are rounded to four decimal places.}
\label{tab:peel-orient-thresholds}
\scriptsize
\setlength{\tabcolsep}{2.7pt}
\renewcommand{\arraystretch}{1.10}
\begin{tabular}{@{}c*{6}{c}@{}}
\toprule
$\ell\backslash k$ & $2$ & $3$ & $4$ & $5$ & $6$ & $7$ \\
\midrule
1 & 0.0000/0.5000 & 0.8185/0.9179 & 0.7723/0.9768 & 0.7018/0.9924 & 0.6371/0.9974 & 0.5818/0.9991 \\
2 & 1.6755/1.7940 & 1.5528/1.9764 & 1.3336/1.9965 & 1.1578/1.9994 & 1.0216/1.9999 & 0.9146/2.0000 \\
3 & 2.5747/2.8775 & 2.1745/2.9919 & 1.8109/2.9994 & 1.5457/3.0000 & 1.3488/3.0000 & 1.1977/3.0000 \\
4 & 3.3996/3.9215 & 2.7467/3.9970 & 2.2498/3.9999 & 1.9022/4.0000 & 1.6492/4.0000 & 1.4574/4.0000 \\
5 & 4.1827/4.9478 & 3.2894/4.9989 & 2.6654/5.0000 & 2.2395/5.0000 & 1.9333/5.0000 & 1.7030/5.0000 \\
6 & 4.9376/5.9644 & 3.8117/5.9996 & 3.0651/6.0000 & 2.5635/6.0000 & 2.2060/6.0000 & 1.9386/6.0000 \\
7 & 5.6721/6.9754 & 4.3189/6.9998 & 3.4528/7.0000 & 2.8776/7.0000 & 2.4703/7.0000 & 2.1668/7.0000 \\
8 & 6.3905/7.9828 & 4.8143/7.9999 & 3.8312/8.0000 & 3.1840/8.0000 & 2.7279/8.0000 & 2.3892/8.0000 \\
\bottomrule
\end{tabular}
\end{table}

Since $M\simeq d/c$ cells are required at density $c$, and each cell stores
$\ell$ field words, the asymptotic word-space ratio is
$$
  \frac{\mathfrak C}{d}\simeq\frac{\ell M}{d}=\frac{\ell}{c}.
$$
Table~\ref{tab:peel-orient-space-ratios} gives the corresponding reciprocals
of the normalized threshold efficiencies.

\begin{table}[t]
\centering
\caption{Actual asymptotic word-space ratios at the fully random peeling and
orientability thresholds. Each entry is
$(\mathfrak C/d)_{\rm peel}/(\mathfrak C/d)_{\rm orient}
=(\ell/c^{\rm peel}_{k,\ell})/(\ell/c^{\rm orient}_{k,\ell})$.
The value $\infty$ for $(k,\ell)=(2,1)$ reflects
$c^{\rm peel}_{2,1}=0$.}
\label{tab:peel-orient-space-ratios}
\scriptsize
\setlength{\tabcolsep}{2.7pt}
\renewcommand{\arraystretch}{1.10}
\begin{tabular}{@{}c*{6}{c}@{}}
\toprule
$\ell\backslash k$ & $2$ & $3$ & $4$ & $5$ & $6$ & $7$ \\
\midrule
1 & $\infty$/2.0000 & 1.2218/1.0894 & 1.2949/1.0238 & 1.4249/1.0076 & 1.5697/1.0026 & 1.7189/1.0009 \\
2 & 1.1937/1.1148 & 1.2880/1.0119 & 1.4997/1.0018 & 1.7275/1.0003 & 1.9577/1.0000 & 2.1868/1.0000 \\
3 & 1.1652/1.0426 & 1.3796/1.0027 & 1.6567/1.0002 & 1.9409/1.0000 & 2.2242/1.0000 & 2.5049/1.0000 \\
4 & 1.1766/1.0200 & 1.4563/1.0007 & 1.7780/1.0000 & 2.1029/1.0000 & 2.4254/1.0000 & 2.7445/1.0000 \\
5 & 1.1954/1.0106 & 1.5201/1.0002 & 1.8759/1.0000 & 2.2327/1.0000 & 2.5863/1.0000 & 2.9361/1.0000 \\
6 & 1.2152/1.0060 & 1.5741/1.0001 & 1.9575/1.0000 & 2.3406/1.0000 & 2.7199/1.0000 & 3.0951/1.0000 \\
7 & 1.2341/1.0035 & 1.6208/1.0000 & 2.0273/1.0000 & 2.4326/1.0000 & 2.8337/1.0000 & 3.2306/1.0000 \\
8 & 1.2518/1.0022 & 1.6617/1.0000 & 2.0881/1.0000 & 2.5126/1.0000 & 2.9327/1.0000 & 3.3484/1.0000 \\
\bottomrule
\end{tabular}
\end{table}

For interpreting communication, a capacity-$\ell$ cell has asymptotic
word-overhead $\ell/c$.  Table~\ref{tab:operating-tuples} gives the tuples
used most often in the experiment plan.  The ratio
$c^{\rm peel}/c^{\rm orient}$ is also the uncalibrated part of the proposed
seed rule $a_{k,\ell}=C c^{\rm peel}_{k,\ell}/c^{\rm orient}_{k,\ell}$.

\begin{table}[t]
\centering
\caption{Derived quantities for the principal operating points.}
\label{tab:operating-tuples}
\small
\setlength{\tabcolsep}{4pt}
\begin{tabular}{@{}ccrrrrr@{}}
\toprule
$k$ & $\ell$ & $c^{\rm peel}$ & $c^{\rm orient}$ & $c^{\rm peel}/c^{\rm orient}$ & $\ell/c^{\rm peel}$ & $\ell/c^{\rm orient}$ \\
\midrule
3 & 1 & 0.8185 & 0.9179 & 0.8916 & 1.2218 & 1.0894 \\
2 & 3 & 2.5747 & 2.8775 & 0.8948 & 1.1652 & 1.0426 \\
2 & 6 & 4.9376 & 5.9644 & 0.8278 & 1.2152 & 1.0060 \\
3 & 4 & 2.7467 & 3.9970 & 0.6872 & 1.4563 & 1.0007 \\
\bottomrule
\end{tabular}
\end{table}

\FloatBarrier
\clearpage
\section{The Missing Proof in Section~\ref{sec:analysis}}
\label{app:missing}

\paragraph{Preliminary knowledge.} We first recall some simple facts used below.

\textbf{1)} If $F(0)\ne0$, then $F$ is invertible modulo $Z^\ell$, so $F'(Z)/F(Z)\bmod Z^\ell$ is well defined. Moreover, if
$F(Z)\equiv1\pmod{Z^\ell}$, then
$F'(Z)/F(Z)\equiv0\pmod{Z^{\ell-1}}$.

\textbf{2)} For every
$a\in\mathcal U^*=\mathbb F_p^*$, we have
$\frac{1}{Z-a}
\equiv
-\sum_{r\ge1}a^{-r}Z^{r-1}\pmod{Z^\ell}$.

\textbf{3)} If $m$ polynomials in $m$ variables have degrees $D_1,\ldots,D_m$, then they have at most $\prod_{i=1}^m D_i$ \textit{isolated} common zeros over the algebraic closure. A common zero at which the Jacobian has full rank is \textit{isolated}. The proposition is a standard consequence of the B\'ezout theorem in algebraic geometry.

\begin{lemma}
\label{lem:bounded-system-solutions}
Let $p>\max\{\ell,2\}$ be prime. Fix
$e_1,\ldots,e_\ell\in\{-2,-1,1,2\}$,
$c_0\in\mathbb F_p\setminus\{0\}$, and
$c_1,\ldots,c_{\ell-1}\in\mathbb F_p$.
Then the system
$$
\prod_{i=1}^{\ell}y_i^{-e_i}=c_0;\qquad
\sum_{i=1}^{\ell}e_i y_i^r=c_r,
\quad r=1,\ldots,\ell-1
$$
has at most $2\ell(\ell-1)!=O_\ell(1)$ solutions
$(y_1,\ldots,y_\ell)\in
(\overline{\mathbb F}_p\setminus\{0\})^\ell$
in which the $y_i$ are pairwise distinct.
\end{lemma}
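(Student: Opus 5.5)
The plan is to apply the Bézout fact 3) recalled above to a polynomial reformulation of the system, and to show that every solution with pairwise distinct nonzero coordinates has a full-rank Jacobian. Such a solution is therefore isolated, and the number of these solutions is at most the product of the degrees.

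\textbf{Step 1: polynomial form and degrees.} Let $P:=\{i:e_i>0\}$ and $N:=\{i:e_i<0\}$. On the torus $(\overline{\mathbb F}_p\setminus\{0\})^\ell$, the first equation is equivalent to
$$
\Phi_0(y):=\prod_{i\in N}y_i^{|e_i|}-c_0\prod_{i\in P}y_i^{e_i}=0.
$$
Since $|e_i|\le 2$, we have $\deg\Phi_0\le 2\ell$. The remaining equations are $\Phi_r(y):=\sum_i e_iy_i^r-c_r=0$, with $\deg\Phi_r\le r$ for $r=1,\ldots,\ell-1$. The product of the degrees is therefore at most $2\ell\cdot(\ell-1)!$. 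If some $\Phi_r$ has smaller degree, the bound only improves. If some $\Phi_r$ degenerates, the Jacobian computation below shows this cannot happen at an admissible solution, because its gradient row there is nonzero.

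\textbf{Step 2: Jacobian at an admissible solution.} Let $y$ be a solution with pairwise distinct nonzero coordinates. For $r\ge1$ we have $\partial\Phi_r/\partial y_i=r e_iy_i^{r-1}$. For the first row, write $\Phi_0=\prod_{i\in N}y_i^{|e_i|}\bigl(1-c_0\prod_i y_i^{e_i}\bigr)$. At a zero of $\Phi_0$ with nonzero coordinates the bracket vanishes, so
$$
\frac{\partial\Phi_0}{\partial y_i}=-\Bigl(\prod_{j\in N}y_j^{|e_j|}\Bigr)\,e_i\,y_i^{-1},
$$
using $c_0\prod_j y_j^{e_j}=1$. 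The row factors $-\prod_{j\in N} y_j^{|e_j|}$ and $r\in\{1,\ldots,\ell-1\}$ are nonzero, the latter because $p>\ell$. The column factors $e_i\in\{\pm1,\pm2\}$ are nonzero because $p>2$. Removing all these factors leaves the matrix $\bigl(y_i^{\,j}\bigr)_{j=-1,\ldots,\ell-2;\ i=1,\ldots,\ell}$. Multiplying column $i$ by $y_i\neq0$ turns it into the Vandermonde matrix $\bigl(y_i^{\,j}\bigr)_{j=0,\ldots,\ell-1}$, whose determinant is nonzero since the $y_i$ are distinct. Hence the Jacobian has full rank $\ell$.

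\textbf{Step 3: conclusion.} By fact 3), every admissible solution is an isolated common zero of $\Phi_0,\ldots,\Phi_{\ell-1}$ over $\overline{\mathbb F}_p$. Their number is therefore at most $\prod_r\deg\Phi_r\le 2\ell(\ell-1)!$.

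\textbf{Main obstacle.} The delicate point is the first-row Jacobian computation, where the relation $c_0\prod_j y_j^{e_j}=1$ must be used to collapse the derivative of $\Phi_0$ into the single-monomial form. A related subtlety is making sure that multiplying by the monomial $\prod_{i\in N}y_i^{|e_i|}$ to clear denominators does not create spurious issues. Such extra zeros could only lie on coordinate hyperplanes, which are excluded, so they add nothing to the count. Everything else reduces to the Vandermonde determinant and the stated Bézout bound.
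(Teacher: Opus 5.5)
Your proposal is correct and follows essentially the same route as the paper's proof. Both clear denominators to obtain $Q-c_0P$ of degree at most $2\ell$, use $Q=c_0P$ to reduce the first Jacobian row to $-e_iQ/y_i$, and rescale rows and columns (using $p>\max\{\ell,2\}$) to a Vandermonde matrix, concluding with the B\'ezout bound on isolated zeros. The only cosmetic difference is that you get the first row from the factorization $\Phi_0=Q\,\bigl(1-c_0\prod_i y_i^{e_i}\bigr)$, whereas the paper differentiates $P$ and $Q$ separately.
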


\begin{proof}
Let $P(y):=\prod_{e_i>0}y_i^{e_i},\ Q(y):=\prod_{e_i<0}y_i^{-e_i}.$
The equation can be written as
\begin{gather*}
H_0(y_{1},\cdots,y_\ell):=Q(y)-c_0P(y)=0\\
H_r(y_{1},\cdots,y_\ell):=\sum_{i=1}^{\ell}e_i y_i^r-c_r=0,
\quad r=1,\ldots,\ell-1,
\end{gather*}
Consider a solution in which the $y_i$ are nonzero and pairwise
distinct. Since $H_0(y)=0$, we have $Q(y)=c_0P(y)\ne0$. The Jacobian
matrix of $H$ at this solution is
$$
\begin{pmatrix}
-\dfrac{e_1Q(y)}{y_1} & \cdots & -\dfrac{e_\ell Q(y)}{y_\ell}\\
e_1 & \cdots & e_\ell\\
2e_1y_1 & \cdots & 2e_\ell y_\ell\\
\vdots & & \vdots\\
(\ell-1)e_1y_1^{\ell-2} & \cdots &
(\ell-1)e_\ell y_\ell^{\ell-2}
\end{pmatrix}.
$$
Multiply the $i$-th column of the Jacobian by $y_i/e_i$. Then multiply
the first row by $-Q(y)^{-1}$ and, for each $r=1,\ldots,\ell-1$,
multiply the row corresponding to the $r$-th equation by $r^{-1}$.
All these scalars are nonzero. Each operation multiplies the
determinant by a nonzero scalar and therefore preserves whether the
determinant is zero. The resulting matrix is
$$
\begin{pmatrix}
1&1&\cdots&1\\
y_1&y_2&\cdots&y_\ell\\
\vdots&\vdots&&\vdots\\
y_1^{\ell-1}&y_2^{\ell-1}&\cdots&y_\ell^{\ell-1}
\end{pmatrix}.
$$
Its determinant is
$\prod_{1\le i<j\le\ell}(y_j-y_i)\ne0$.
Hence, the original Jacobian has full rank and very such solution is isolated.

Also, we have
$$
\deg H_0
\le
\max\{\deg P,\deg Q\}
\le 2\ell;\ 
\forall r\ge 1, \deg G_r=r.
$$
Therefore, the number of such solutions is at most $2\ell(\ell-1)!$.
The polynomial system may have additional solutions with some
$y_i=0$, but this does not affect the bound of the isolated solutions
under consideration.
\end{proof}

\begin{proof}[Missing Proof of Lemma~\ref{lem:poly-spread}]
Let $f\ne g$ be the two fixed signed maps in the lemma. Define
$e(x):=f(x)-g(x)$ and $J:=\{x:e(x)\ne0\}$, and let $s:=|J|$.
For every $x\in J$, we have $e(x)\in\{-2,-1,1,2\}$.

Suppose that their polynomial states agree. Then
$$
R(Z):=
\prod_{x\in J}(Z-\phi(x))^{e(x)}
\equiv1\pmod{Z^\ell}.
$$
Since every $\phi(x)$ is nonzero, all factors are invertible modulo
$Z^\ell$. We have
$$
\frac{R'(Z)}{R(Z)}
=
\sum_{x\in J}\frac{e(x)}{Z-\phi(x)}
\equiv0\pmod{Z^{\ell-1}}.
$$
Expanding each fraction at $Z=0$, we obtain
$$
\sum_{x\in J}e(x)\phi(x)^{-r}=0,
\qquad r=1,\ldots,\ell-1.
$$

We first show that $s\ge\ell$. Otherwise, write
$J=\{x_1,\ldots,x_s\}$ and let $y_i:=\phi(x_i)^{-1}$.
The first $s$ equations above give
$$
\begin{pmatrix}
y_1&y_2&\cdots&y_s\\
y_1^2&y_2^2&\cdots&y_s^2\\
\vdots&\vdots&&\vdots\\
y_1^s&y_2^s&\cdots&y_s^s
\end{pmatrix}
\begin{pmatrix}
e(x_1)\\
e(x_2)\\
\vdots\\
e(x_s)
\end{pmatrix}
=0.
$$
As the determinant of the Vandermonde matrix is
$\left(\prod_{i=1}^s y_i\right)
\prod_{1\le i<j\le s}(y_j-y_i),$
which is nonzero because $\phi$ is a bijection into $\mathcal U^*$.
Hence every $e(x_i)$ is zero, contradicting the definition of $J$.
Therefore, $s\ge\ell$.

Choose distinct $x_1,\ldots,x_\ell\in J$ and fix the labels $\phi$ of all
elements in $J\setminus\{x_1,\ldots,x_\ell\}$. Let
$e_i:=e(x_i)$ and $y_i:=\phi(x_i)^{-1}$. The constant term of
$R(Z)\equiv1\pmod{Z^\ell}$, together with the equations above, implies the system
$$
\prod_{i=1}^{\ell}y_i^{-e_i}=c_0;\qquad
\sum_{i=1}^{\ell}e_i y_i^r=c_r,
\quad r=1,\ldots,\ell-1
$$
where $c_0,\ldots,c_{\ell-1}$ are fixed after conditioning.

Lemma \ref{lem:bounded-system-solutions} implies that this system of equations has only $O_\ell(1)$ solutions in which the $y_i$ are nonzero and pairwise distinct.
Finally, after fixing the other $s-\ell$ labels, the ordered tuple
$(\phi(x_1),\ldots,\phi(x_\ell))$ is sampled uniformly without
replacement from the remaining $N-s+\ell$ labels. As $s\le d+\ell$ and $d\le N/2$, we have $N-s+\ell\ge N/2$. The number of
possible ordered tuples is hence
$\binom{N-s+\ell}{\ell}\ell!
= \Omega_\ell(N^\ell).$
Moreover, the map $\phi(x_i)\mapsto y_i=\phi(x_i)^{-1}$ is a bijection
and preserves distinctness. Therefore, for every fixing of the other
$s-\ell$ labels,
$$
\Pr_\phi\!\left[
\Psi_f(Z)\equiv\Psi_g(Z)\pmod{Z^\ell}
\,\middle|\,
\text{the fixed labels}
\right]
=
O_\ell(N^{-\ell}).
$$
Averaging over the fixed labels gives
$$
\Pr_\phi\!\left[
\Psi_f(Z)\equiv\Psi_g(Z)\pmod{Z^\ell}
\right]
=
O(N^{-\ell}).
$$
\end{proof}
\FloatBarrier
\clearpage
\section{Implementation of the D-RFR Procedure}
\label{app:drfr}


We now present the reconstruction procedure for D-RFR in
Section~\ref{sec:d-rfr}. As the uniqueness part has been proved in the
main text, this appendix focuses on how to recover the desired pair of
polynomials.

\subsection{The Original RFR}

Next, we illustrate how to \textit{find a solution} through a series of
lemmas, building the argument step by step. We first introduce the
original RFR as follows.

\begin{lemma}[The original RFR algorithm~\cite{RFR-KhodadadMonagan-2006,RationalReconstruction-CollinsEncarnacion-1995}]
\label{lem:original-rfr}
    Let $N=2D+1$ be odd, and let $R(Z)$ be a polynomial with
    $\deg R<N$. Suppose that there exist coprime polynomials $F,G\in
    \mathcal U[Z]$ such that
    \begin{itemize}
        \item $\deg F\le D$, $\deg G\le D$, and $G(0)\ne0$;
        \item $\dfrac{F(Z)}{G(Z)}\equiv R(Z)\pmod{Z^N}$.
    \end{itemize}
    Then the original RFR algorithm recovers $F$ and $G$, up to
    multiplication by a common nonzero constant, in
    $O(N\log^2 N)$ field operations. In particular, if $G$ is known to
    be monic, we can recover the required normalization.
\end{lemma}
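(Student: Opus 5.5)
The statement is the classical rational reconstruction result, and I would prove it through the extended Euclidean algorithm applied to $r_0:=Z^N$ and $r_1:=R(Z)$.

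\textbf{Invariant and stopping rule.} Running the Euclidean algorithm produces remainders $r_i$ together with cofactors $t_i$ satisfying
$$s_iZ^N+t_iR=r_i, \qquad \deg t_i=N-\deg r_{i-1}.$$
We stop at the first index $j$ with $\deg r_j\le D$. Then $\deg t_j=N-\deg r_{j-1}\le N-(D+1)=D$, since $\deg r_{j-1}\ge D+1$. So $(r_j,t_j)$ is a candidate pair that meets both degree bounds and satisfies $r_j\equiv t_jR\pmod{Z^N}$.

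\textbf{Uniqueness.} Multiplying the two congruences gives
$$F t_j\equiv R\,G\,t_j\equiv G r_j \pmod{Z^N}.$$
Both $Ft_j$ and $Gr_j$ have degree at most $2D<N$, so the congruence forces the equality $Ft_j=Gr_j$. Since $\gcd(F,G)=1$, we get $G\mid t_j$, say $t_j=\lambda G$ and then $r_j=\lambda F$. The standard lemma that $\gcd(r_j,t_j)$ divides a power of $Z$ shows that $\lambda$ can only be a monomial $cZ^e$.

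\textbf{Main obstacle: ruling out $e>0$.} If $e>0$, then $Z$ divides both $r_j$ and $t_j$. From $s_jZ^N+t_jR=r_j$ and $\gcd(s_j,t_j)=1$, $Z$ cannot divide $s_j$. Dividing the identity by $Z^e$ shows that $Z^e\mid s_jZ^{N-e}$, which is impossible when $e<N$ unless $Z\mid s_j$. Hence $\lambda$ is a nonzero constant, and $(F,G)$ is recovered up to a scalar. If $G$ is known to be monic, dividing by the leading coefficient of $t_j$ fixes the normalization; alternatively one can normalize by $G(0)\neq0$.

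\textbf{Running time.} A naive extended Euclidean run costs $O(N^2)$. The fast half-GCD algorithm (Knuth–Schönhage, or the version in von zur Gathen–Gerhard) computes the single row $(r_j,t_j)$ at the degree cutoff $D$ in $O(M(N)\log N)$ operations. With FFT-based multiplication, $M(N)=O(N\log N)$ gives the claimed $O(N\log^2 N)$ bound. I would cite these bounds rather than reprove them.
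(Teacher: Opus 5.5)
Your setup, the bound $\deg t_j\le D$, and the derivation of $(r_j,t_j)=\lambda(F,G)$ with $\lambda=cZ^e$ are fine. Citing half-GCD for the running time is acceptable; the paper only sketches that recursion.

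\textbf{The gap is in the paragraph ruling out $e>0$.} Dividing $s_jZ^N+t_jR=r_j$ by $Z^e$ gives $s_jZ^{N-e}=c\,(F-GR)$. The divisibility $Z^e\mid s_jZ^{N-e}$ that you extract from this holds automatically, because $e\le\deg t_j\le D<N-e$. So no contradiction follows. Your argument at this step never uses the hypothesis on $(F,G)$. It would therefore show that $\gcd(r_j,t_j)$ never contains a factor $Z$, and that is false in general. For $N=3$ and $R=Z^2$, the stopping row is $(r_2,s_2,t_2)=(0,1,-Z)$: it satisfies the Euclidean identity and $\gcd(s_2,t_2)=1$, yet $\gcd(r_2,t_2)=Z$.

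\textbf{The fix} is to use $F\equiv GR\pmod{Z^N}$ on the right-hand side. It gives $Z^N\mid c\,(F-GR)=s_jZ^{N-e}$, hence $Z^e\mid s_j$. Together with $Z\mid t_j$, this contradicts $\gcd(s_j,t_j)=1$. A cleaner equivalent: write $F=sZ^N+GR$ and compare $\lambda$ times this identity with the $j$-th row. You get $s_j=\lambda s$, so $\lambda$ divides $\gcd(s_j,t_j)=1$. This also makes the ``monomial'' lemma unnecessary.

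\textbf{Comparison with the paper.} With this repair your route is correct and proves slightly more than the paper does. After obtaining $(H_p,y_1)=C(F,G)$, the paper never shows that $C$ is constant. Instead it divides $H_p$ and $y_1$ by their monic gcd, which equals $C$ up to a scalar since $\gcd(F,G)=1$. The paper's procedure thus carries an extra gcd step, and your repaired argument shows that step is redundant.
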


\begin{proof}
    If $R=0$, the coprimality and degree assumptions imply that the
    reduced solution is $(F,G)=(0,1)$, so we assume $R\ne0$ below. One
    can see that
    $$
    \begin{aligned}
    \frac{F(Z)}{G(Z)} &\equiv R(Z) \pmod {Z^{N}}\\
    \Longleftrightarrow\quad
    \exists K\in \mathcal U[Z]\colon\quad
    G(Z)R(Z)+K(Z)Z^{N}&=F(Z).
    \end{aligned}
    $$

    Thus, our goal is to find $G(Z)$ and $K(Z)$ such that
    $\deg G\le D$ and
    $\deg(G(Z)R(Z)+K(Z)Z^{N})\le D$.

    Denoting $A:=Z^{N}$ and $B:=R(Z)$, we use an approach inspired by
    the Euclidean algorithm and formulate a recurrence relation as
    follows:
    $$
    \begin{aligned}
    H_0(Z)&:=A,\qquad H_1(Z):=B,\\
    H_i(Z)&:=H_{i-2}(Z)\bmod H_{i-1}(Z),\qquad i\ge2,\\
    Q_i(Z)&:=\frac{H_{i-2}(Z)-H_i(Z)}{H_{i-1}(Z)}.
    \end{aligned}
    $$

    We continue the definition until $H_i(Z)=0$. Since
    $\mathcal U[Z]$ is a Euclidean domain, the process must end. It is
    easy to prove that
    \begin{itemize}
        \item $\deg H_i<\deg H_{i-1}$;
        \item $Q_i\in\mathcal U[Z]$ and
        $\deg Q_i=\deg H_{i-2}-\deg H_{i-1}$.
    \end{itemize}

    We define
    $$
        \mathbf G_i:=
        \begin{bmatrix}H_{i-1}\\H_i\end{bmatrix}
        \in \mathrm M_{2\times1}(\mathcal U[Z]).
    $$
    Then
    $$
        \mathbf G_i=
        \begin{bmatrix}0&1\\1&-Q_i(Z)\end{bmatrix}
        \mathbf G_{i-1}.
    $$
    For convenience, denote
    $\begin{bmatrix}0&1\\1&-Q_i(Z)\end{bmatrix}$ by
    $\varphi(Q_i)$. We find $p$ such that
    $$
        \deg H_{p-1}>D,\qquad \deg H_p\le D.
    $$
    Then
    $$
        \mathbf G_p=
        \varphi(Q_p)\varphi(Q_{p-1})\cdots\varphi(Q_2)
        \begin{bmatrix}A\\B\end{bmatrix}.
    $$

    We denote the matrix
    $\varphi(Q_p)\cdots\varphi(Q_2)$ by
    $\mathcal H_N(A,B)$. More generally, when the subscript is an
    arbitrary integer $t$, $\mathcal H_t$ denotes the same Euclidean
    prefix with the stopping threshold $\lfloor t/2\rfloor$. For every
    entry $x(Z)$ in this matrix,
    $$
    \begin{aligned}
        \deg x
        &\le \sum_{i=2}^{p}\deg Q_i
         =\deg H_0-\deg H_{p-1}\\
        &\le N-(D+1)=D.
    \end{aligned}
    $$

    Suppose that
    $$
        \mathcal H_N(A,B)=
        \begin{bmatrix}
        x_0(Z)&y_0(Z)\\
        x_1(Z)&y_1(Z)
        \end{bmatrix}.
    $$
    It follows that
    $$
    \begin{bmatrix}H_{p-1}(Z)\\H_p(Z)\end{bmatrix}
    =
    \begin{bmatrix}
    x_0(Z)&y_0(Z)\\
    x_1(Z)&y_1(Z)
    \end{bmatrix}
    \begin{bmatrix}Z^N\\R(Z)\end{bmatrix},
    $$
    and hence
    $$
        H_p(Z)=x_1(Z)Z^N+y_1(Z)R(Z).
    $$
    Since $\deg H_p,\deg x_1,\deg y_1\le D$, the pair
    $(H_p,y_1)$ satisfies the required degree bounds and the polynomial
    congruence
    $$
        H_p(Z)\equiv y_1(Z)R(Z)\pmod{Z^N}.
    $$

    It remains to explain why this candidate yields the desired reduced
    pair. Combining the last congruence with
    $F\equiv GR\pmod{Z^N}$ gives
    $$
        H_pG\equiv y_1F\pmod{Z^N}.
    $$
    Both sides have degree at most $2D=N-1$, so in fact
    $H_pG=y_1F$. Since $\gcd(F,G)=1$, there is a polynomial $C$ such
    that $(H_p,y_1)=C(F,G)$. Dividing $H_p$ and $y_1$ by their monic
    greatest common divisor therefore recovers $(F,G)$ up to a common
    nonzero constant. We finally normalize the component that is known
    to be monic.

    Next, we illustrate how to efficiently compute
    $\mathcal H_N(A,B)$ for $A,B\in\mathcal U[Z]$ with
    $\deg A,\deg B\le N$, omitting the standard base cases and
    leading-coefficient normalizations for brevity. For an integer
    $L>0$, divide $A,B$ as
    $$
    \begin{bmatrix}A(Z)\\B(Z)\end{bmatrix}
    =Z^L\begin{bmatrix}A_1(Z)\\B_1(Z)\end{bmatrix}
    +\begin{bmatrix}A_2(Z)\\B_2(Z)\end{bmatrix},
    \qquad
    \max\{\deg A_2,\deg B_2\}<L.
    $$

    We suppose $\deg A_1,\deg B_1\le\lfloor N/2\rfloor$. Let
    $$
        M:=\mathcal H_{\lfloor N/2\rfloor}(A_1,B_1).
    $$
    Then
    $$
    \begin{aligned}
    M\begin{bmatrix}A(Z)\\B(Z)\end{bmatrix}
    &=M\begin{bmatrix}Z^LA_1(Z)+A_2(Z)\\
                        Z^LB_1(Z)+B_2(Z)\end{bmatrix}\\
    &=M\begin{bmatrix}A_1(Z)&A_2(Z)\\
                        B_1(Z)&B_2(Z)\end{bmatrix}
      \begin{bmatrix}Z^L\\1\end{bmatrix}.
    \end{aligned}
    $$
    Denote
    $$
    M\begin{bmatrix}A_1(Z)&A_2(Z)\\B_1(Z)&B_2(Z)\end{bmatrix}
    =:
    \begin{bmatrix}C_1(Z)&C_2(Z)\\D_1(Z)&D_2(Z)\end{bmatrix}.
    $$
    By definition,
    $$
        M\begin{bmatrix}A_1(Z)\\B_1(Z)\end{bmatrix}
        =\begin{bmatrix}C_1(Z)\\D_1(Z)\end{bmatrix}.
    $$
    Thus the vector $\begin{bmatrix}C_1&D_1\end{bmatrix}^{\mathsf T}$
    is the stopping pair in the first recursive half, leading to
    $$
        \deg C_1>\lfloor N/4\rfloor\ge\deg D_1.
    $$
    Since
    $$
        \deg C_2,\deg D_2
        \le \lfloor N/4\rfloor+\max\{\deg A_2,\deg B_2\}
        <\lfloor N/4\rfloor+L,
    $$
    we obtain
    $$
    M\begin{bmatrix}A(Z)\\B(Z)\end{bmatrix}
    =
    \begin{bmatrix}
    C_1(Z)Z^L+C_2(Z)\\
    D_1(Z)Z^L+D_2(Z)
    \end{bmatrix},
    $$
    and
    $$
        \deg(C_1Z^L+C_2)>\lfloor N/4\rfloor+L
        \ge \deg(D_1Z^L+D_2).
    $$

    The half-GCD computation of $\mathcal H_N(A,B)$ can therefore be
    expressed as follows:
    \begin{enumerate}
        \item Set $L:=\lceil N/2\rceil$ and recursively compute
        $M_1=\mathcal H_{\lfloor N/2\rfloor}(A_1,B_1)$. Then assign
        $$
            \begin{bmatrix}A(Z)\\B(Z)\end{bmatrix}
            \gets M_1\begin{bmatrix}A(Z)\\B(Z)\end{bmatrix}.
        $$
        The new $B(Z)$ is the polynomial $D_1(Z)Z^L+D_2(Z)$ discussed
        above, hence
        $$
            \deg B\le\lfloor N/4\rfloor+\lfloor N/2\rfloor
            \le\lfloor3N/4\rfloor.
        $$

        \item Perform one Euclidean step, namely
        $$
            \begin{bmatrix}A(Z)\\B(Z)\end{bmatrix}
            \gets
            \begin{bmatrix}B(Z)\\A(Z)\bmod B(Z)\end{bmatrix}.
        $$
        At this point, $\deg A,\deg B\le\lfloor3N/4\rfloor$.

        \item Set $L:=\lceil N/4\rceil$, take the corresponding high
        parts of the transformed pair, and recursively compute the
        remaining half-GCD matrix $M_2$. Applying $M_2$ to the full
        transformed pair finishes the desired Euclidean prefix.

        \item Return $M_2\varphi(Q)M_1$, where $Q$ is the quotient in
        the Euclidean step above.
    \end{enumerate}

    Each recursive level performs two half-size recursive calls and
    $O(N\log N)$ additional work for polynomial and matrix arithmetic.
    Therefore,
    $$
        T(N)=2T(N/2)+O(N\log N)=O(N\log^2N).
    $$
\end{proof}

\subsection{A Shifted Form of RFR}

Next, we use Lemma~\ref{lem:original-rfr} as a tool to construct the
desired algorithm. The following shifted form keeps the interface of the
original balanced RFR while allowing different degree bounds.

\begin{lemma}[Shifted RFR]
\label{lem:shifted-rfr}
    Let $S(Z)$ be a polynomial with $\deg S<n$, and let $D,s\ge0$
    satisfy $D\ge s$ and
    $$
        2D+1\le n+s.
    $$
    Suppose that there exist coprime polynomials $A,B\in\mathcal U[Z]$
    satisfying
    $$
        B(0)\ne0,\qquad \deg A\le D-s,\qquad \deg B\le D,
    $$
    and
    $$
        \frac{A(Z)}{B(Z)}\equiv S(Z)\pmod{Z^n}.
    $$
    Then $(A,B)$ can be recovered, up to a common nonzero scalar, in
    $O((n+s)\log^2(n+s))$ field operations.
\end{lemma}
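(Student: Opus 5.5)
The plan is to reduce the shifted instance to a balanced instance of Lemma~\ref{lem:original-rfr} by absorbing the degree deficit of the numerator into a power of $Z$. Set
$$
A^\sharp(Z):=Z^sA(Z),\qquad N:=2D+1 .
$$
Then $\deg A^\sharp\le (D-s)+s=D$ and $\deg B\le D$. These are exactly the balanced degree bounds of Lemma~\ref{lem:original-rfr} with parameter $D$.

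First I transport the congruence. From $A/B\equiv S\pmod{Z^n}$, multiplying by $Z^s$ gives
$$
\frac{A^\sharp(Z)}{B(Z)}\equiv Z^sS(Z)\pmod{Z^{n+s}} .
$$
The hypothesis $2D+1\le n+s$ means $N\le n+s$, so the congruence also holds modulo $Z^N$. I define $R(Z):=Z^sS(Z)\bmod Z^N$, which satisfies $\deg R<N$ and $A^\sharp/B\equiv R\pmod{Z^N}$. Computing $R$ is a shift followed by a truncation, costing $O(n+s)$ operations.

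Next I verify the remaining hypotheses of Lemma~\ref{lem:original-rfr}.
\begin{itemize}
\item $B(0)\ne0$ holds by assumption.
\item $A^\sharp$ and $B$ are coprime. Any common irreducible factor must divide $B$ and either $A$ or $Z^s$. It cannot divide $A$ because $\gcd(A,B)=1$. It cannot be $Z$ because $B(0)\neq0$.
\end{itemize}
So Lemma~\ref{lem:original-rfr} applies and returns $(A^\sharp,B)$ up to a common nonzero constant. Its internal gcd-reduction step produces exactly the reduced pair, because $(A^\sharp,B)$ is already reduced. The cost is $O(N\log^2N)\subseteq O((n+s)\log^2(n+s))$, since $N\le n+s$.

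Finally I recover $A$ by dividing the returned numerator by $Z^s$, which costs $O(D)$. This division is exact because the returned numerator is $\lambda Z^sA$ for some constant $\lambda\neq0$. The output is therefore $(\lambda A,\lambda B)$, as required.

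The step needing the most care is the degenerate case $A=0$. Then $A^\sharp=0$ and $R=0$, and coprimality of $(0,B)$ forces $B$ to be a constant. This matches the $(0,1)$ branch of Lemma~\ref{lem:original-rfr}, so the output is still correct up to scalar.

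More generally, I must confirm that the gcd-normalization inside Lemma~\ref{lem:original-rfr} never strips a factor of $Z$ that belongs to $A^\sharp$. It does not. That lemma shows its candidate satisfies $(H_p,y_1)=C\cdot(A^\sharp,B)$ for some polynomial $C$. Removing the monic gcd of the candidate removes exactly $C$ up to scalar, because $\gcd(A^\sharp,B)=1$. Hence the full factor $Z^s$ survives in the recovered numerator.
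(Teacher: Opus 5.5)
Your proposal is correct and matches the paper's proof essentially step for step. You form $Z^sS \bmod Z^{2D+1}$ (available since $2D+1\le n+s$), note that $\gcd(Z^sA,B)=1$ because $B(0)\neq 0$, apply Lemma~\ref{lem:original-rfr} to the balanced pair $(Z^sA,B)$, and then divide the numerator exactly by $Z^s$; your added checks of the case $A=0$ and of the gcd-normalization keeping the factor $Z^s$ are valid details the paper leaves implicit.
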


\begin{proof}
    From the known residue $S(Z)\bmod Z^n$, form
    $$
        \widetilde S(Z):=Z^sS(Z)\bmod Z^{2D+1}.
    $$
    This is well-defined from the available coefficients because
    $2D+1\le n+s$. Moreover,
    $$
        \frac{Z^sA(Z)}{B(Z)}
        \equiv\widetilde S(Z)\pmod{Z^{2D+1}},
    $$
    where both $Z^sA$ and $B$ have degree at most $D$. Since
    $B(0)\ne0$ and $\gcd(A,B)=1$, we also have
    $\gcd(Z^sA,B)=1$. Lemma~\ref{lem:original-rfr} therefore recovers
    $(Z^sA,B)$ up to a common scalar. Exact division of the first
    component by $Z^s$ yields $A$.
\end{proof}

\subsection{Degree-Aware Reconstruction}

We now prove the reconstruction part of the D-RFR theorem. The reduction
below first removes any power of $Z$ from the numerator. This is needed
because the theorem assumes only $G(0)\ne0$; it does not assume
$F(0)\ne0$.

\begin{lemma}[D-RFR Reconstruction]
\label{lem:drfr-reconstruction}
    Let $R(Z)$ be a polynomial with $\deg R<\ell$, and let
    $m\in[-\ell,\ell]$. Suppose that there are coprime and monic
    polynomials $F(Z),G(Z)$ with
    $$
        G(0)\ne0,\qquad
        \deg F+\deg G\le\ell,\qquad
        \deg F-\deg G=m,
    $$
    such that
    $$
        \frac{F(Z)}{G(Z)}\bmod Z^\ell=R(Z).
    $$
    Then $F(Z),G(Z)$ can be found using $R(Z)$ and $m$ within
    $O(\ell\log^2\ell)$ field operations.
\end{lemma}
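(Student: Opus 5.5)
The plan is to reduce to Lemma~\ref{lem:shifted-rfr} after peeling off the power of $Z$ in the numerator, and to use monicity together with the known difference $m$ to make up for the one coefficient we are missing.

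\textbf{Step 1: strip the $Z$-adic valuation.} If $R=0$, then $Z^\ell\mid F$ because $G$ is invertible modulo $Z^\ell$. Since $\deg F\le\ell$ and $F$ is monic, this forces $F=Z^\ell$, $G=1$ and $m=\ell$, and we output that. Otherwise let $t$ be the valuation of $R$. As $G(0)\ne0$, the valuation of $F$ equals $t$, so $F=Z^tF_1$ with $F_1(0)\ne0$. Moreover $S:=R/Z^t \bmod Z^{n}$ with $n:=\ell-t$ satisfies $F_1/G\equiv S\pmod{Z^n}$. Write $a=\deg F_1$ and $b=\deg G$. The hypotheses become $a-b=m-t$ and $a+b\le n$, hence
$$
b\le B^*:=\left\lfloor\frac{n-(m-t)}{2}\right\rfloor,\qquad a\le A^*:=B^*+m-t.
$$
Coprimality of $F_1$ and $G$ is inherited from that of $F$ and $G$.

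\textbf{Step 2: the non-tight case.} If $A^*+B^*\le n-1$, we apply Lemma~\ref{lem:shifted-rfr}. When $m-t\le 0$ we take $D=B^*$ and $s=B^*-A^*$, with $A=F_1$ and $B=G$. When $m-t>0$ we instead swap roles and apply the lemma to $G/F_1\equiv S^{-1}\pmod{Z^n}$, which is legitimate because $S(0)\ne0$. In both cases $2D=A^*+B^*+s\le n-1+s$, so the lemma's condition holds. It returns $(F_1,G)$ up to a scalar, and we normalize using monicity of $G$ (or of $F_1$), then set $F=Z^tF_1$. The cost is $O(\ell\log^2\ell)$.

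\textbf{Step 3: the tight case $A^*+B^*=n$.} This is the step where D-RFR saves a coefficient over plain RFR, and I expect it to be the main obstacle. Here the exact degrees are not forced to be $(A^*,B^*)$, so I split into two sub-cases.

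First, any solution with $b<B^*$ also has $a<A^*$, so it satisfies the bounds $(A^*-1,B^*-1)$, whose sum is $n-2$. We therefore run Step 2 with these reduced bounds and accept its output if it verifies, meaning it is monic, coprime, and has the right degree difference and residue.

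Otherwise $(a,b)=(A^*,B^*)$ exactly. In this sub-case I would run the half-GCD of Lemma~\ref{lem:original-rfr} on the shifted instance with $2D=n+s$, one coefficient short. I would then argue that every pair $(A,B)$ with $A\equiv SB\pmod{Z^n}$ and the given degree bounds is a combination $\alpha\,(\text{row}_{p-1})+\beta\,(\text{row}_p)$ of the last two rows of the Euclidean matrix $\mathcal H$ together with their remainders. Because the degree budget is exceeded by only one, the multipliers $\alpha,\beta$ are constrained to be constants, or one of them a constant and the other of degree at most one. Imposing that both components are monic of the exact degrees $A^*$ and $B^*$ then gives at most two linear equations in $O(1)$ unknowns. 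These pin down the pair, and the uniqueness part of Theorem~\ref{thm:d-rfr} guarantees the result is the true $(F_1,G)$.

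Making this lattice description precise is the delicate part. Concretely, one has to show that the Euclidean remainder sequence interleaves the degrees so that no third basis vector can appear. If that proves awkward, my fallback is a direct linear solve. Write $G=Z^{B^*}+G'$ and $F_1=Z^{A^*}+F_1'$. The congruence $F_1'-SG'\equiv SZ^{B^*}-Z^{A^*}\pmod{Z^n}$ is then an $n\times n$ linear system with a structured, Toeplitz-like matrix. Uniqueness shows this matrix is nonsingular, and it can be solved in $O(\ell\log^2\ell)$ by structured-matrix methods.
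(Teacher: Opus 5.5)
Your argument goes through once you commit to the fallback. It takes a different and heavier route than the paper at exactly the step that matters.

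\textbf{Comparison with the paper.} The paper makes no parity or exact-degree split. After stripping $Z^v$ and inverting when $m_0<0$, it observes that $F_0$ and $Z^{m_0}G$ are monic of the same degree, so $H:=F_0-Z^{m_0}G$ (or $F_0-G$ if $m_0=0$) loses one degree. Then $G/H\equiv(R_0-Z^{m_0})^{-1}$ (resp.\ $H/G\equiv R_0-1$) modulo $Z^n$ is a shifted-RFR instance that already meets $2D+1\le n+s$, and a single call to Lemma~\ref{lem:shifted-rfr} finishes. Your own decomposition $F_1=Z^{A^*}+F_1'$, $G=Z^{B^*}+G'$ is one step from this, since $F_1-Z^{m_0}G=F_1'-Z^{m_0}G'$ has degree below $A^*$.

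What your route buys is a clean explanation of when the saved coefficient matters. Since $\deg F+\deg G\equiv m\pmod 2$, plain RFR already suffices when $\ell-m$ is odd. What it costs is up to two reconstruction attempts with verification. In the exact-degree case it also needs an external superfast solver for a nonsingular $B^*\times B^*$ Toeplitz system. The standard deterministic such solvers are themselves Pad\'e/half-GCD computations, so the D-RFR content is outsourced rather than derived.

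\textbf{Two corrections.} First, the lattice claim in your primary route is false as stated. In the basis formed by the stopping rows $p-1,p$, the multiplier on row $p$ has degree $\deg H_{p-1}-A^*$, which can exceed one. Take $n=\ell=4$, $m=0$, $(F,G)=(Z^2-2,Z^2-1)$. Then $S=2+Z^2$, the rows are $(Z^4,0)$ and $(Z^2+2,1)$, and $(F,G)=(Z^2-1)(Z^2+2,1)-(Z^4,0)$, a degree-$2$ multiplier. What is true is that the $2$-dimensional solution space is spanned, with constant coefficients, by row $p$ and the row produced by one further Euclidean step.

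Second, nonsingularity of your linear system does not follow from Theorem~\ref{thm:d-rfr} as stated. A kernel vector yields a perturbed monic pair that need not be coprime, so the theorem does not apply to it. Argue directly instead: a kernel vector $(F'',G'')$ gives $F''G-G''F_1\equiv 0\pmod{Z^n}$ with degree at most $n-1$, so $F''G=G''F_1$. Then $\gcd(F_1,G)=1$ together with $\deg G''<\deg G$ forces $G''=F''=0$.
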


\begin{proof}
    We first remove a possible power of $Z$. If $R=0$, then
    $F\equiv0\pmod{Z^\ell}$ because $G(0)\ne0$. The degree bound and
    monicity force $F=Z^\ell$ and $G=1$; in particular, this case can
    occur only when $m=\ell$.

    Suppose now that $R\ne0$, and let
    $$
        v:=\min\{j:[Z^j]R(Z)\ne0\}.
    $$
    Since $G(0)\ne0$, the congruence implies $Z^v\mid F$. Write
    $$
        F(Z)=Z^vF_0(Z),\qquad R(Z)=Z^vR_0(Z),
    $$
    and set
    $$
        n:=\ell-v,\qquad m_0:=m-v.
    $$
    Then $F_0(0)\ne0$, $G(0)\ne0$, and
    $$
        \frac{F_0(Z)}{G(Z)}\equiv R_0(Z)\pmod{Z^n},
        \qquad
        \deg F_0+\deg G\le n,
        \qquad
        \deg F_0-\deg G=m_0.
    $$

    If $m_0<0$, then $R_0$ is invertible modulo $Z^n$. We replace
    $R_0$ by $R_0^{-1}\bmod Z^n$, replace $m_0$ by $-m_0$, and swap
    the roles of $F_0$ and $G$. Thus, it remains to describe the
    construction for $m_0\ge0$; after that construction, we undo this
    swap if it was made.

    For the case $m_0=0$, if $R_0=1$, coprimality and monicity imply
    $F_0=G=1$, and we return this pair directly. Otherwise, define
    $$
        H(Z):=F_0(Z)-G(Z).
    $$
    Let $D:=\lfloor n/2\rfloor$. Since $F_0$ and $G$ are monic and
    have the same degree, their leading terms cancel. Under the
    existence assumption, $R_0\ne1$ also implies $n\ge2$, so $D\ge1$.
    Consequently,
    $$
        \deg H\le D-1,\qquad \deg G\le D,
    $$
    while
    $$
        \frac{H(Z)}{G(Z)}
        \equiv R_0(Z)-1\pmod{Z^n}.
    $$
    Also, $\gcd(H,G)=\gcd(F_0,G)=1$. Applying
    Lemma~\ref{lem:shifted-rfr} with $S=R_0-1$ and $s=1$ recovers
    $(H,G)$ up to a common scalar; the condition
    $2D+1\le n+1$ holds automatically. We normalize the second
    component so that $G$ is monic and return
    $$
        (F_0,G)=(H+G,G).
    $$

    It remains to handle $m_0\ge1$. Let
    $$
        d_G:=\left\lfloor\frac{n-m_0}{2}\right\rfloor,
        \qquad
        D:=d_G+m_0-1,
        \qquad
        s:=m_0-1,
    $$
    and define
    $$
        H(Z):=F_0(Z)-Z^{m_0}G(Z).
    $$
    Since $F_0$ and $Z^{m_0}G$ are monic of the same degree, their
    leading terms cancel. Thus,
    $$
        \deg G\le d_G=D-s,
        \qquad
        \deg H\le D.
    $$
    Furthermore, $H(0)=F_0(0)\ne0$,
    $\gcd(G,H)=\gcd(G,F_0)=1$, and
    $$
        \frac{G(Z)}{H(Z)}
        \equiv\frac{1}{R_0(Z)-Z^{m_0}}\pmod{Z^n}.
    $$
    The inverse exists because
    $(R_0-Z^{m_0})(0)=R_0(0)\ne0$. Finally,
    $$
    \begin{aligned}
        2D+1
        &=2\left\lfloor\frac{n-m_0}{2}\right\rfloor+2m_0-1\\
        &\le n+m_0-1=n+s.
    \end{aligned}
    $$
    Lemma~\ref{lem:shifted-rfr}, applied to
    $$
        S(Z):=(R_0(Z)-Z^{m_0})^{-1}\bmod Z^n,
    $$
    therefore recovers the projective pair $(G,H)$. Here the
    normalization must be taken from the \emph{first} component: we
    scale both components so that $G$ is monic. We then return
    $$
        (F_0,G)=(H+Z^{m_0}G,G).
    $$

    Finally, if the reciprocal reduction was used, we swap the two
    reconstructed polynomials back. We then restore the removed power
    of $Z$ and output
    $$
        (F,G)=(Z^vF_0,G).
    $$
\end{proof}

\subsection{Discussion on Practical Implementation}

The theorem above assumes that a valid pair exists. On an arbitrary
decoder input, D-RFR accepts a reconstructed candidate only after the
following deterministic checks:
\begin{itemize}
    \item $F$ and $G$ are monic;
    \item $G(0)\ne0$ and $\gcd(F,G)=1$;
    \item $\deg F-\deg G=m$ and $\deg F+\deg G\le\ell$;
    \item $F(Z)\equiv G(Z)R(Z)\pmod{Z^\ell}$.
\end{itemize}
If any check fails, D-RFR reports failure. The subsequent cell decoder
also verifies that $F$ and $G$ split into the required distinct linear
factors over $\mathcal U^*$ before accepting the recovered signed set.

The original RFR subroutine costs $O(\ell\log^2\ell)$ field operations.
The additional operations above---removing a power of $Z$, truncated
series inversion, normalization, gcd computation, and final
validation---fit within the same asymptotic bound when standard fast
polynomial arithmetic is used. Hence D-RFR runs in
$O(\ell\log^2\ell)$ field operations. In our application, $\ell$ is a
fixed constant, so the reconstruction cost per decoding attempt is
constant in the streaming parameters.
\section{Supplementary Experimental Results}
\label{app:supply}

\subsection{Prior Works Compared in Section \ref{sec:end-to-end}}

We compare {\fullalgo} against several representative set-reconciliation methods: MiniSketch\cite{MiniSketch-BitcoinCore-2018}, a high-performance BCH-based algebraic scheme which is an implementation of Pinsketch\cite{PinSketch-DodisOstrovskyReyzinSmith-2008}; CPISync\cite{CPISync-TrachtenbergStarobinskiAgarwal-2002}, a characteristic-polynomial-based method based on CPI\cite{CPI-MinskyTrachtenbergZippel-2003}; Rateless IBLT\cite{RatelessSetReconciliation-YangGiladAlizadeh-2024}, which provides rate-compatible reconciliation; and a C++ implementation of conventional IBLT \cite{IBLT_Cplusplus-Andresen-2014}.
All methods are evaluated under the same workload and success target.

\subsection{Success Probability Heatmap for $a$ and $z$}

Figure~\ref{fig:fixed-M-heatmaps} shows representative peeling-success
heatmaps for several fixed $(d,M)$ pairs. Each cell reports the measured
success rate at a grid point $(a,z)$, and the yellow box marks the point
selected by the fitted heuristic. Across scales, the selected point generally
lies within or near a high-success region, while the shape and location of
this region vary with $d$ and $M$.


\begin{figure}[p]
    \centering
    \includegraphics[
        width=\textwidth,
        height=0.9\textheight,
        keepaspectratio
    ]{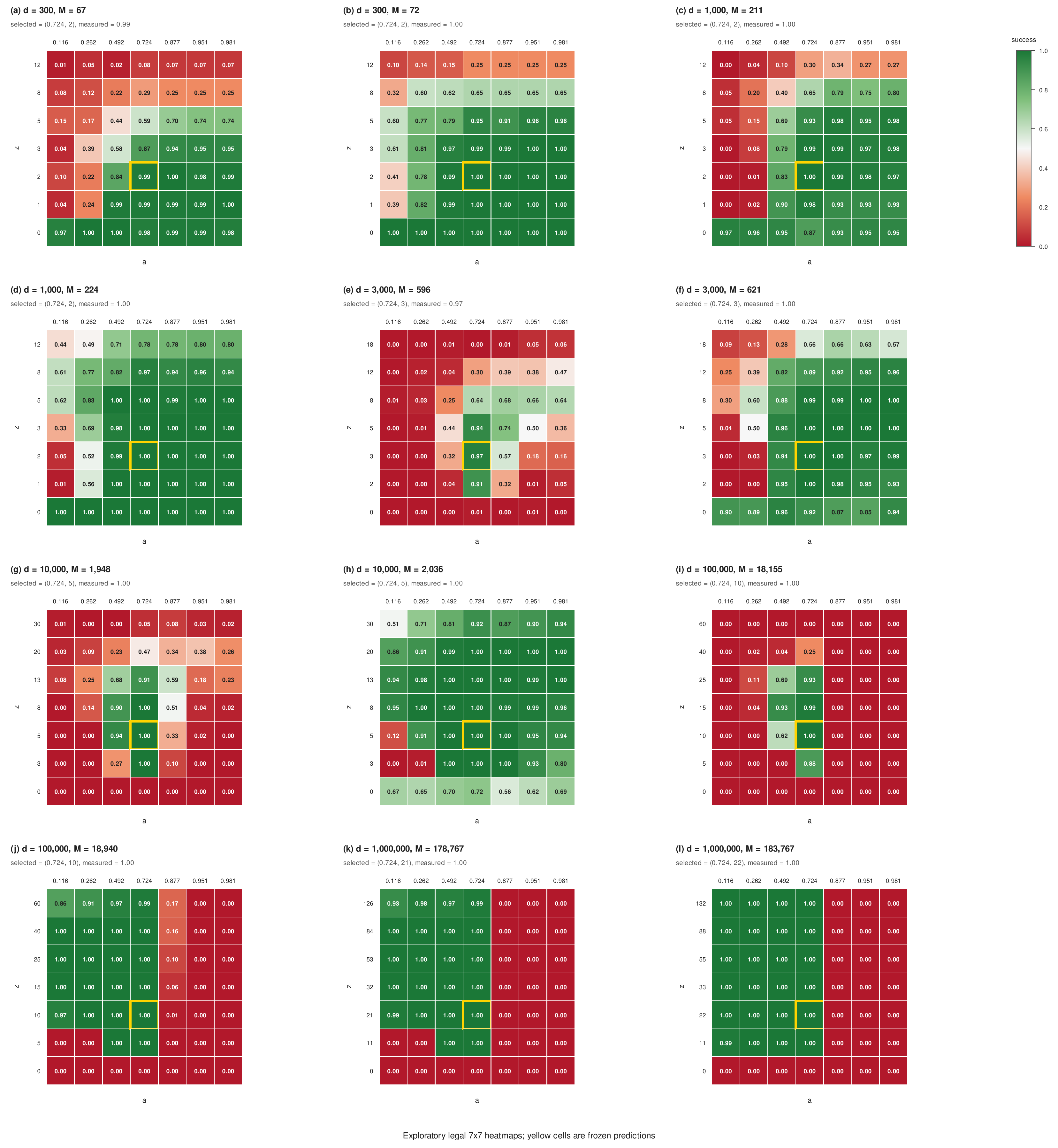}
    \caption{Selected fixed-$M$ peeling-success heatmaps. Colors and cell
    values indicate the measured success rate; yellow boxes mark the points
    selected by the fitted heuristic.}
    \label{fig:fixed-M-heatmaps}
\end{figure}

\FloatBarrier


\section{Pseudocode for {\fullalgo}}
\label{app:pseudocode}


This appendix gives the detailed pseudocode used by Section~\ref{sec:algo}.
\begin{itemize}
    \item Algorithm~\ref{alg:pseudo:update} corresponds to the high-capacity cell update;
    \item Algorithm~\ref{alg:pseudo:subtract} corresponds to sketch subtraction;
    \item Algorithm~\ref{alg:pseudo:try-decode-cell} corresponds to local cell decoding and verification;
    \item Algorithm~\ref{alg:pseudo:decode} corresponds to the global $\ell$-peeling decoder.
\end{itemize}

We treat D-RFR as a subroutine: given $R\bmod Z^\ell$ and a signed count $m$, it returns a monic coprime pair $(F,G)$ or reports failure. The reconstruction procedure is given in Appendix~\ref{app:drfr}.

Each cell $\mathcal B[i]$ stores three fields:
\begin{enumerate}
    \item A signed count $\mathcal B^c[i]\in\mathbb Z/(2\ell+1)\mathbb Z$;
    \item A truncated polynomial field $\mathcal B^p[i]\in\mathbb F_p[Z]/(Z^\ell)$;
    \item A fingerprint field $\mathcal B^{fp}[i]\in\mathbb Z_q$ when fingerprint verification is enabled.
\end{enumerate}

The initial state is $\mathcal B^c[i]=0$, $\mathcal B^p[i]=1$, and $\mathcal B^{fp}[i]=0$ for every cell $i$. For an element $x$, let $\Gamma(x)\subseteq[M]$ be the deduplicated support generated by the placement rule in Section~\ref{sec:spatial-coupling}. Since $x\in\mathcal U^*$, the factor $Z-x$ is invertible modulo $Z^\ell$.

\begin{algorithm}
\caption{{\fullalgo}: Update Procedure}
\label{alg:pseudo:update}
\SetKwFunction{FUpdateCell}{UpdateCell}
\SetKwFunction{FUpdate}{Update}
\SetKwProg{Pn}{Procedure}{:}{}
\SetKwFor{For}{for}{do}{end}

\Pn{\FUpdateCell{$i,x,\sigma$}}{
    $\mathcal B^c[i]\gets (\mathcal B^c[i]+\sigma)\bmod (2\ell+1)$\;
    \eIf{$\sigma=+1$}{
        $\mathcal B^p[i]\gets \mathcal B^p[i]\cdot (Z-x)\bmod Z^\ell$\;
        \If{fingerprint is enabled}{
            $\mathcal B^{fp}[i]\gets \mathcal B^{fp}[i]+h_{\mathrm{fp}}(x)$\;
        }
    }{
        $\mathcal B^p[i]\gets \mathcal B^p[i]/(Z-x)\bmod Z^\ell$\;
        \If{fingerprint is enabled}{
            $\mathcal B^{fp}[i]\gets \mathcal B^{fp}[i]-h_{\mathrm{fp}}(x)$\;
        }
    }
}

\Pn{\FUpdate{$x,\sigma$}}{
    \For{$i\in\Gamma(x)$}{
        \FUpdateCell{$i,x,\sigma$}\;
    }
}
\end{algorithm}

Here $\sigma=+1$ is used for inserting $x$ into the maintained set, and $\sigma=-1$ is used for deleting $x$ from it. If the stream is insertion-only, only the case $\sigma=+1$ is used.

\begin{algorithm}
\caption{{\fullalgo}: Subtract Procedure}
\label{alg:pseudo:subtract}
\SetKwFunction{FSubtract}{Subtract}
\SetKwProg{Fn}{Function}{:}{\KwRet}
\SetKwFor{For}{for}{do}{end}

\Fn{\FSubtract{$\mathcal B_A,\mathcal B_B$}}{
    Initialize an empty residual sketch $\mathcal D$\;
    \For{$i\in[M]$}{
        $\mathcal D^c[i]\gets (\mathcal B_A^c[i]-\mathcal B_B^c[i])\bmod(2\ell+1)$\;
        $\mathcal D^p[i]\gets \mathcal B_A^p[i]/\mathcal B_B^p[i]\bmod Z^\ell$\;
        \If{fingerprint is enabled}{
            $\mathcal D^{fp}[i]\gets \mathcal B_A^{fp}[i]-\mathcal B_B^{fp}[i]$\;
        }
    }
    \KwRet $\mathcal D$\;
}
\end{algorithm}

\begin{algorithm}
\caption{{\fullalgo}: TryDecodeCell}
\label{alg:pseudo:try-decode-cell}
\SetKwFunction{FTryDecodeCell}{TryDecodeCell}
\SetKwFunction{FDRFR}{D-RFR}
\SetKwFunction{FFindRoot}{FindRoot}
\SetKwProg{Fn}{Function}{:}{\KwRet}

\Fn{\FTryDecodeCell{$\mathcal D,i$}}{
    $m\gets \mathcal D^c[i]$\;
    \If{$m>\ell$}{
        $m\gets m-(2\ell+1)$\;
    }
    \If{$|m|>\ell$}{
        \KwRet FAIL\;
    }

    $(F,G)\gets$ \FDRFR{$\mathcal D^p[i],m,\ell$}\;
    \If{\FDRFR{} reports failure}{
        \KwRet FAIL\;
    }

    $\Delta_A\gets$ \FFindRoot{$F$}\;
    $\Delta_B\gets$ \FFindRoot{$G$}\;

    \If{$F\ne\prod_{x\in\Delta_A}(Z-x)$ or $G\ne\prod_{x\in\Delta_B}(Z-x)$}{
        \KwRet FAIL\;
    }
    \If{$|\Delta_A|+|\Delta_B|>\ell$ or $|\Delta_A|-|\Delta_B|\ne m$}{
        \KwRet FAIL\;
    }

    \For{$x\in\Delta_A\cup\Delta_B$}{
        \If{$i\notin\Gamma(x)$}{
            \KwRet FAIL\;
        }
    }

    \If{fingerprint is enabled}{
        \If{$\sum_{x\in\Delta_A}h_{\mathrm{fp}}(x)-\sum_{x\in\Delta_B}h_{\mathrm{fp}}(x)\ne \mathcal D^{fp}[i]$}{
            \KwRet FAIL\;
        }
    }

    \KwRet $(\Delta_A,\Delta_B)$\;
}
\end{algorithm}

\begin{algorithm}
\caption{{\fullalgo}: Decode Procedure}
\label{alg:pseudo:decode}
\SetKwFunction{FTryDecodeCell}{TryDecodeCell}
\SetKwFunction{FPeel}{Peel}
\SetKwFunction{FPush}{Push}
\SetKwFunction{FFront}{Front}
\SetKwFunction{FPop}{Pop}
\SetKwFunction{FDecode}{Decode}
\SetKwProg{Pn}{Procedure}{:}{}
\SetKwProg{Fn}{Function}{:}{\KwRet}
\SetKwFor{For}{for}{do}{end}

\Pn{\FPeel{$x,\tau$}}{
    \tcp{$\tau=+1$ for $x\in A\setminus B$, and $\tau=-1$ for $x\in B\setminus A$.}
    \For{$j\in\Gamma(x)$}{
        $\mathcal D^c[j]\gets (\mathcal D^c[j]-\tau)\bmod(2\ell+1)$\;
        \eIf{$\tau=+1$}{
            $\mathcal D^p[j]\gets \mathcal D^p[j]/(Z-x)\bmod Z^\ell$\;
            \If{fingerprint is enabled}{
                $\mathcal D^{fp}[j]\gets \mathcal D^{fp}[j]-h_{\mathrm{fp}}(x)$\;
            }
        }{
            $\mathcal D^p[j]\gets \mathcal D^p[j]\cdot (Z-x)\bmod Z^\ell$\;
            \If{fingerprint is enabled}{
                $\mathcal D^{fp}[j]\gets \mathcal D^{fp}[j]+h_{\mathrm{fp}}(x)$\;
            }
        }
        $Q$.\FPush{$j$}\;
    }
}

\Fn{\FDecode{$\mathcal D$}}{
    $(S_A,S_B)\gets(\varnothing,\varnothing)$\;
    Initialize an empty queue $Q$\;
    $t\gets 0$\;
    \For{$i\in[M]$}{
        $Q$.\FPush{$i$}\;
    }

    \While{$Q$ is not empty}{
        $t\gets t+1$\;
        \If{$t>M+kd$}{
            \KwRet FAIL\;
        }
        $i\gets Q$.\FFront{}\;
        $Q$.\FPop{}\;
        $R\gets$ \FTryDecodeCell{$\mathcal D,i$}\;
        \If{$R=$ FAIL}{
            continue\;
        }
        $(\Delta_A,\Delta_B)\gets R$\;
        \If{$\Delta_A\cup\Delta_B=\varnothing$}{
            continue\;
        }

        \For{$x\in\Delta_A$}{
            $S_A\gets S_A\cup\{x\}$\;
            \FPeel{$x,+1$}\;
        }
        \For{$x\in\Delta_B$}{
            $S_B\gets S_B\cup\{x\}$\;
            \FPeel{$x,-1$}\;
        }
    }

    \If{$\mathcal D^c[i]=0$ and $\mathcal D^p[i]=1$ for every $i$,\\ and $\mathcal D^{fp}[i]=0$ for every $i$ when fingerprint is enabled}{
        \KwRet $(S_A,S_B)$\;
    }
    \KwRet FAIL\;
}
\end{algorithm}

In a valid execution, the queue contains the $M$ initial cell indices and at most
$k$ additional indices for each of the at most $d$ recovered elements.
Therefore, processing more than $M+kd$ queue entries can only occur after a
false-positive decoding path, in which case the decoder safely reports failure.

\end{document}